\newtheorem{claim}{Claim}
\newtheorem{lemma}{Lemma}
\newtheorem{proposition}{Proposition}
\newtheorem{theorem}{Theorem}
\newcommand{\eps}{\epsilon}
\newcommand{\argmin} {\operatornamewithlimits{arg\min}}
\newcommand{\Blan}{\begin{landscape}}
\newcommand{\Elan}{\end{landscape}}
\begin{document}

\title{High-dimensional Fused Lasso Regression
using Majorization-Minimization  and Parallel Processing}

\date{}
\renewcommand\footnotemark{}

\author[1]{Donghyeon Yu\thanks{E-mail addresses : \texttt{dhyeon.yu@gmail.com} (D. Yu), \texttt{wonj@korea.ac.kr} (J. Won),\\ \texttt{taehoonlee@snu.ac.kr} (T. Lee), \texttt{johanlim@snu.ac.kr} (J. Lim), \texttt{sryoon@snu.ac.kr} (S. Yoon).}}
\affil[1]{Department of Statistics, Seoul National University}

\author[2]{Joong-Ho Won}
\affil[2]{School of Industrial Management Engineering, Korea University}

\author[3]{Taehoon Lee}
\author[1]{Johan Lim}
\author[3]{Sungroh Yoon}
\affil[3]{Department of Electrical and Computer Engineering, Seoul National University}

\maketitle
\begin{abstract}
\noindent 
In this paper, we propose a majorization-minimization (MM) algorithm for high-dimensional fused lasso regression (FLR) suitable for parallelization using graphics processing units (GPUs). The MM algorithm is stable and flexible as it can solve the FLR problems with various types of design matrices and penalty structures within a few tens of iterations. We also show that the convergence of the proposed algorithm is guaranteed. We conduct numerical studies to compare our algorithm with other existing algorithms, demonstrating that the proposed MM algorithm is competitive in many settings including the two-dimensional FLR with arbitrary design matrices. The merit of GPU parallelization is also exhibited.

\vskip0.5cm 
\noindent{\bf keywords:}  fused lasso regression; majorization-minimization
 algorithm; parallel computation; graphics processing unit.

\end{abstract}

\section{Introduction}
\label{sec:intro}

Regression methods using $\ell_1$-regularization are commonly 
employed to analyze
high-dimensional data, as they tend to yield sparse estimates
of regression coefficients. 
In this paper, we consider the fused lasso regression (FLR),
an important special case
of $\ell_1$-regularized regression.
The FLR minimizes
\begin{align} \label{eqn:obj}
  f(\beta) = \frac{1}{2} \big\| {\bf y } - {\bf X} \beta \big\|_2^2
   + \lambda_1 \sum_{j=1}^p | \beta_j | + \lambda_2 \sum_{(j,k) \in E} | \beta_{j} - \beta_{k}|,
\end{align}
where ${\bf y} \in \mathbb{R}^n$ is the response vector, and
${\bf X} \in \mathbb{R}^{n \times p}$ is the design matrix;
$n$ is the sample size, and $p$ is the number of variables; 
$\lambda_1$ and $\lambda_2$ are non-negative regularization parameters,
and $E$ is the unordered index set of the adjacent pairs of variables specified in the model.
Thus the FLR not only promotes sparsity among the coefficients, but also encourages 
adjacent coefficients to have the same values (``fusion''),
where adjacency is determined by the application.

The above specification of the FLR \eqref{eqn:obj} generalizes the original proposal by \citet{Tibshirani2005},
which considers adjacency on a one-dimensional chain graph, i.e., $E = \big\{(j-1,j)|~ j = 2,\ldots,p \big\}$.
We refer to this particular case as the \emph{standard FLR}. 
Problem \eqref{eqn:obj} also covers a wider class of the index set $E$, as considered by \cite{Chen2012} (see Section \ref{sec:review}).
Thus we call the penalty terms associated with $\lambda_2$ the \emph{generalized fusion penalty}.
For instance, the \emph{two-dimensional FLR} uses a two-dimensional lattice 
$E =
\big\{ \big( (i,j-1), (i,j) \big) |~ 1\le i \le q, j=2,3,\ldots,q \big\}
\cup \big\{ \big( (i-1,j), (i,j) \big) |~ i= 2,3,\ldots,q, 1 \le j \le q \big\}$
for coefficients $\big(\beta_{ij}\big)_{1\le i,j\le q}$;
the \emph{clustered lasso} \citep{She2010} and the \emph{pairwise fused lasso} \citep{Petry2011} use all pairs of the variables 
as the index set. 
Closely related but not exactly specified by \eqref{eqn:obj} includes the \emph{generalized lasso} \citep{TibshiraniRyan2011}.
Regardless of the penalty structure, or the choice of $E$, when ${\bf X}={\bf I}$ the FLR is called the \emph{fused lasso signal approximator} (FLSA).

The major challenge in the FLR as compared to
the classical lasso regression \citep{Tibshirani1996} that corresponds to setting $\lambda_2=0$ in \eqref{eqn:obj} is that the generalized fusion penalty terms are
non-separable as well as non-smooth.
The FLR is basically a quadratic programming (QP)
and can be solved in principle using general purpose QP solvers
 \citep{Gill1997,Grant2008}.
However, the standard QP formulation introduces a large number of
additional variables and constraints, and is not adequate for
general purpose QP solvers if the 
dimension $p$ increases.

To resolve this difficulty, many special algorithms
for solving the FLR problem have been proposed.
Some are based on LARS \citep{Efron2004}-flavored solution path methods
 \citep{Friedman2007, Hoefling2010, TibshiraniRyan2011},
and others on a fixed set of the regularization parameters
$(\lambda_1, \lambda_2)$ \citep{Liu2010,Ye2011,Lin2011,Chen2012}.
Although these algorithms are adequate for
high-dimensional settings, restrictions may apply:
some are only applicable for special design matrices \citep{Friedman2007,Hoefling2010,TibshiraniRyan2011}
or the standard FLR \citep{Liu2010}.
In addition, computational issues still pertain for very high dimensional settings ($p \gg 1000$).

In this paper, we apply the majorization-minimization (MM) algorithm \citep{Lange2000} to solve the FLR problem. The MM algorithm 
 iteratively finds and minimizes a surrogate function  
called the \emph{majorizing function} that is generally
convex and differentiable. 
It is well known that the efficiency of the MM algorithm depends 
largely on the choice of the majorizing function.
Hence we begin the main body of the paper by proposing
a majorizing function suitable for the FLR problem.

Motivated by \cite{HunterLi2005} on their work on the classical lasso,
we first introduce a perturbed version of the objective function
(\ref{eqn:obj})
and propose a majorizing function for this perturbed objective.
The MM algorithm we propose is based on these perturbed objective
function and its majorizing function. 
The MM algorithm has several advantageous points over
other existing algorithms.
First, it has flexibility in the choice of both
design matrix and penalty structure. 
We show that the proposed
MM algorithm converges to the optimal objective (\ref{eqn:obj})
regardless of the rank of the design matrix.
Furthermore, it can be applied to general
penalty structures, while having comparable performance with
state-of-the-art FLR solvers, some of which require a special penalty structure.
Second, as numerically demonstrated in Section 5, our MM algorithm
is stable in the number of iterations to converge
regardless of the choice of design matrix and the penalty structure.
Finally,  
an additional benefit of our MM formulation is that
it opens up the possibility of solving the FLR problem
in a massively parallel fashion using graphics processing units (GPUs).

This paper is organized as follows.
In Section 2, we review existing algorithms for the FLR.
  In Section 3, we propose an MM algorithm for the FLR
  with the generalized fusion penalty and prove the convergence of the  algorithm. 
   We also introduce the preconditioned conjugate gradient (PCG) method to accelerate the proposed MM algorithm.
In Section 4, we explain how to parallelize the MM algorithm using GPUs. 
In Section 5, 
we conduct numerical studies to compare the proposed MM algorithm
with the other existing algorithms.
In Section 6, we conclude the paper with a brief summary.

\section{Review of the existing algorithms}\label{sec:review}

Existing algorithms for solving the FLR problem can be
classified into two categories.
One is based on solution path methods, whose goal
is to find the whole solutions for all values of
the regularization parameters.
The other is based on first-order methods,
which attempt to solve a large scale problem
given a fixed set of regularization parameters 
using first order approximation.

\subsection{Solution path methods}\label{sec:review:path}

Solution path methods aim to provide all the solutions of interest
with a small computational cost after solving
the initial problems to find all change points 
while varying the regularization parameters $\lambda_1$ and $\lambda_2$.
A change point refers to the value of $\lambda_1$ or $\lambda_2$ at which some coefficients fuse, or coefficients that used to be fused split.

\paragraph{Path-wise optimization algorithm} ~\\
\cite{Friedman2007} propose the path-wise optimization algorithm for the ``standard'' FLSA,
\begin{equation} \nonumber
\min_\beta f(\beta) \equiv \frac{1}{2} \| {\bf y} - \beta \|_2^2  
+ \lambda_1 \sum_{j=1}^n | \beta_j | + \lambda_2 \sum_{j=2}^n 
| \beta_{j} - \beta_{j-1} |.
\end{equation}
In this case, it can be shown that $\widehat{\beta}_j(\lambda_1,\lambda_2)$, 
the optimal solution for the standard FLSA for regularization parameters $(\lambda_1,\lambda_2)$,
can be obtained from $\widehat{\beta}_j(0,\lambda_2)$ by soft-thresholding:
\begin{align}\label{eqn:soft}
\widehat{\beta}_j(\lambda_1,\lambda_2) 
= {\rm sign}\big(\widehat{\beta}_j(0,\lambda_2)\big)
\cdot \max\big( |\widehat{\beta}_j(0,\lambda_2)| - \lambda_1, 0 \big)
~~{\rm for}~ j=1,2,\ldots,p.
\end{align}
Hence we can set $\lambda_1 = 0$ without loss of generality and focus on the regularization path by varying only $\lambda_2$.

Because of the non-separable nature of the generalized fusion penalty terms, typical coordinate descent (CD) algorithms may fail to reach the optimal solution, despite the convexity of the objective function \citep{Tseng2001}. This is in contrast to the classical lasso, where the penalty terms (associated with $\lambda_1$ in the case of the FLR) are separable. 
Hence the path-wise optimization algorithm uses a modified version of the CD algorithm
so that two coefficients move together if the coordinate-wise
move fails to reduce the objective function.

An iteration of this modified CD algorithm comprises of three cycles: descent, fusion, and smoothing.
The descent cycle is to update the $i$th coordinate $\widehat{\beta}_i$ of the current solution $\widehat{\beta}$ in a usual coordinate descent fashion: 
minimize $f(\beta)$ with respect to $\beta_i$ holding all the other coefficients fixed.
Since the subdifferential of $f$ with respect to $\beta_i$
\begin{align*}
	 \frac{\partial f(\beta)}{\partial \beta_i} = -(y_i - \beta_i)
	 -\lambda_2 {\rm sgn}(\tilde{\beta}_{i+1} - \beta_i) +
	 \lambda_2 {\rm sgn} (\beta_i - \tilde{\beta}_{i-1}),
\end{align*}
where $\tilde{\beta}$ is the solution from the previous iteration; ${\rm sgn}(x)$ is the sign of $x$ if $x \neq 0$, and any number in $[-1, 1]$ if $x=0$,  
is piecewise linear with breaks at $\tilde{\beta}_{i-1}$ and $\tilde{\beta}_{i+1}$,
the desired coordinate-wise minimization is simple.
As discussed above, the descent cycle may not reduce the objective function $f(\beta)$, hence the fusion cycle minimizes $f(\beta)$ for two adjacent variables $\beta_{i-1}$ and $\beta_i$, under the constraint $\beta_{i-1}=\beta_i$, i.e., the two variables are fused. This cycle in effect reduces the number of the variables by one. 
Still, fusion of neighboring two variables is not enough to reduce the objective function. To overcome this drawback, the smoothing cycle increases $\lambda_2$ by a small amount and update the solution $\widehat\beta(0,\lambda_2)$ by repeatedly applying the descent and the fusion cycles, and keeping track of the variable reduction. This strategy guarantees convergence to the exact solution if it starts from $\lambda_2=0$ where $\widehat{\beta}={\bf y}$, because in the standard FLSA, a fusion occurs at most two neighboring variables if the increment of $\lambda_2$ is sufficiently small, and the fused variables do not split for the larger values of $\lambda_2$.

The resulting algorithm produces a forward stagewise-type path over a fine grid of $\lambda_2$, which can be obtained for all values of $\lambda_1$ due to 
\eqref{eqn:soft}. This algorithm has also been applied to the two-dimensional FLSA in \cite{Friedman2007}. However, its convergence is not guaranteed since the conditions described at the end of the last paragraph do not hold for this case. For the same reason, its extension to the FLR with general design matrices is limited.



\paragraph{Path algorithm for the FLSA} ~\\
\cite{Hoefling2010} proposes a path algorithm for the FLSA with the generalized fusion penalty,
\begin{align} \label{eqn:genflsa}
	  \min_\beta f(\beta) \equiv \frac{1}{2} \big\| {\bf y } - \beta \big\|_2^2 + \lambda_1 \sum_{i=1}^n |\beta_i|
	  +\lambda_2 \sum_{(j,k) \in E} |\beta_j - \beta_k|,
\end{align}
which gives an exact regularization path, in contrast to the approximate one for the standard FLSA by \cite{Friedman2007}.
The main idea of the path algorithm arises from the observation that the sets of fused coefficients do not change with $\lambda_2$ except for finitely many points.
(Again \eqref{eqn:soft} holds and we can assume $\lambda_1=0$ without loss of generality.) For each interval of $\lambda_2$ constructed by these finite change points, the objective function $f(\beta)$ in \eqref{eqn:genflsa} can be written as
\begin{align}\label{eqn:genflsa2}
	f^*(\beta) = \displaystyle \frac{1}{2} \sum_{i=1}^{n_F}
	   \Big(\sum_{j\in F_i(\lambda_2)}
	  \big(y_j - \beta_{F_i}(\lambda_2)\big)^2 \Big) 
	  \displaystyle +\lambda_2 \sum_{1\le i<j \le n_F} n_{ij}|\beta_{F_i}(\lambda_2) - \beta_{F_j}(\lambda_2)|,
\end{align}
where $F_1(\lambda_2), F_2(\lambda_2), \ldots, F_{n_F}(\lambda_2)$ are the sets of fused coefficients, $n_F=n_F(\lambda_2)$ is the number of the sets, and  $n_{ij} = |\{(k,l)\in E ~|~ k\in F_i(\lambda_2), l\in F_j(\lambda_2) \}|$ is the number of edges connecting the sets $F_i(\lambda_2)$ and $F_j(\lambda_2)$.
The minimizer of \eqref{eqn:genflsa2}, hence of \eqref{eqn:genflsa}, varies linearly with $\lambda_2$ within this interval:
\begin{align}\label{eqn:flsaslope}
\frac{\partial \widehat{\beta}_{F_i}(\lambda_2)}{\partial \lambda_2} = - \frac{\sum_{j\neq i}n_{ij}\mbox{sgn}\left(\widehat{\beta}_{F_i}(\lambda_2)-\widehat{\beta}_{F_j}(\lambda_2)\right)}{|F_i(\lambda_2)|},
\end{align}
i.e., the solution path is piecewise linear. Since at $\lambda_2 = 0$, $\widehat{\beta}_{F_i}(0) = y_i$ and $F_i = \{i\}$ for $i=1,2,\ldots,n$, by keeping track of the change points at which the solution path changes its slope \eqref{eqn:flsaslope} starting from $\lambda_2=0$, we can determine the entire solution path.

Regarding the change points, it can be shown that the optimality condition for \eqref{eqn:genflsa}
\[
\frac{\partial f}{\partial \beta_k}(\widehat{\beta}) = \widehat\beta_k - y_k + \lambda_2 \sum_{l: (k,l)\in E} \hat{t}_{kl}(\lambda_2) = 0,
\]
where $\hat{t}_{kl}(\lambda_2) = \mbox{sgn}\big( \widehat{\beta}_k(\lambda_2) - \widehat{\beta}_l(\lambda_2) \big)$,
is satisfied by  affine functions $\widehat{\beta}_k(\lambda_2)$ and $\hat{\tau}_{kl}(\lambda_2)=\lambda_2 \hat{t}_{kl}(\lambda_2)$ for an interval $[\lambda_2^0, \lambda_2^0+\Delta]$, with $\lambda_2^0$ a change point. In this case, we have
\begin{align}\label{eqn:affine}
\frac{\partial\widehat{\beta}_{F_i}}{\partial\lambda_2}(\lambda_2^0) + \sum_{l \in F_j(\lambda_2^0), j\neq i,(k,l)\in E} \hat{t}_{kl}(\lambda_2^0) 
+ \sum_{l \in F_i(\lambda_2^0), (k,l)\in E} \frac{\partial\hat{\tau}_{kl}}{\partial\lambda_2}(\lambda_2^0) = 0,
\end{align}
for $k \in F_i(\lambda_2^0)$, $i=1, 2, \ldots, n_F(\lambda_2^0)$.
At the beginning of the interval, $\partial\widehat{\beta}_{F_i}/\partial \lambda_2$ is given by \eqref{eqn:flsaslope}, and $\hat{t}_{kl}=\pm 1$ for $(k,l)\in E$, $l \in F_j(\lambda_2^0)$ with $i \neq j$. The length of the interval $\Delta$, hence the next change point, can be determined by examining where the sets of fused coefficients merge or split. Similar to the standard FLSA, at most two sets can merge, and a set splits into at most two sets for a sufficiently small increase in $\lambda_2$. 
The first merge after $\lambda_2^0$ may occur when the paths for any two sets of fused coefficients meet:
\begin{align*}
h(\lambda_2^0) = \min_{i=1,\ldots,n_F(\lambda_2^0)} \min_{j:h_{ij}(\lambda_2^0)>\lambda_2^0} h_{ij}(\lambda_2^0), 
\quad\mbox{where}\quad
h_{ij}(\lambda_2^0) = \frac{\widehat{\beta}_{F_i}(\lambda_2^0)
 - \widehat{\beta}_{F_j}(\lambda_2^0)}{\frac{\partial \widehat{\beta}_{F_j}(\lambda_2^0)}{\partial \lambda_2^0}
 - \frac{\partial \widehat{\beta}_{F_i}(\lambda_2^0)}{\partial \lambda_2^0}} + \lambda_2^0,
\end{align*}
unless a split occurs before this point. 
The first split may occur at the point where the condition \eqref{eqn:affine} is violated, i.e., 
\begin{align*}
v(\lambda_2^0) = \min_{i=1,\ldots,n_F(\lambda_2^0)}  \delta_i + \lambda_2^0,
\end{align*}
where $\delta_i$ can be found by solving the linear program (LP)
\begin{align}\label{eqn:flsalp}
\begin{array}{ll}
\mbox{minimize} & 1/\delta \\
\mbox{subject to} & \displaystyle
\sum_{l \in F_i(\lambda_2^0), (k,l)\in E} f_{kl} = p_k, \quad k \in F_i(\lambda_2^0), \\
~ & -1 - (1/\delta)(\lambda_2^0+\hat{\tau}_{kl}(\lambda_2^0)) \le f_{kl} \le 1 + (1/\delta)(\lambda_2^0-\hat{\tau}_{kl}(\lambda_2^0)), \quad k,l \in F_i(\lambda_2^0), (k,l) \in E,
\end{array}
\end{align}
in $1/\delta$ and $f_{kl}$, where $p_k = -\frac{\partial\widehat{\beta}_{F_i}}{\partial\lambda_2}(\lambda_2^0) - \sum_{l \in F_j(\lambda_2^0), j\neq i,(k,l)\in E} \hat{t}_{kl}(\lambda_2^0)$, for each $i=1,2,\ldots,n_F(\lambda_2^0)$.
This LP can be solved by iteratively applying a maximum flow algorithm, e.g., that of \citet{Ford1956}.
The maximum flow algorithm solves \eqref{eqn:flsalp} for fixed $\delta$, and the final solution $\hat{f}_{kl}$ to \eqref{eqn:flsalp} gives the value of $\frac{\partial\hat{\tau}_{kl}}{\partial\lambda_2}(\lambda_2^0)$ in \eqref{eqn:affine}.
Thus, $\Delta=\min\{h(\lambda_2^0),v(\lambda_2^0)\}-\lambda_2^0$. At split, \eqref{eqn:flsalp} also determines how the set $F_i$ is partitioned.

The computational complexity of this path algorithm depends almost on that of solving the maximum flow problem corresponding to \eqref{eqn:flsalp}. 
It becomes inefficient as the dimension $p$ increases because a large scale maximum flow problem is difficult to solve.
However, when restricted to the standard FLSA, this path algorithm is very efficient, since the solution path of the standard FLSA has only fusions
and not splits, hence does not require solving the maximum flow problem.

\paragraph{Path algorithm for the generalized lasso} ~\\
\cite{TibshiraniRyan2011}
propose  a path algorithm for the generalized lasso problem,
which replaces the penalty terms in (\ref{eqn:obj}) with
a generalized $\ell_1$-norm penalty $\lambda \| {\bf D} \beta \|_1$:
\begin{align} \label{eqn:genlasso}
\min_{\beta} f(\beta) \equiv \frac{1}{2} \big\| {\bf y} -{\bf X} \beta \big\|_2^2 + \lambda \big\| {\bf D}\beta \big\|_1,
\end{align}
where $\lambda$ is a non-negative regularization parameter and ${\bf D}$ is a
 $m \times p$ dimensional matrix corresponding to the dependent structure of  coefficients.
The generalized lasso has the FLR with the generalized fusion penalty as a special case. 
The dual problem of \eqref{eqn:genlasso} is given by
\begin{align} \label{eqn:gen_gen}
\begin{array}{ll} 
\mbox{minimize} & \displaystyle \frac{1}{2} \big\|\tilde{\bf y} -\tilde{\bf D}^T{\bf  u} \big\|_2^2\\  
{\rm subject ~ to~} & \displaystyle \| {\bf u} \|_\infty \le \lambda,~ {\bf D}^T {\bf u} \in {\rm row}({\bf X})
\end{array}
\end{align}
on ${\bf u} \in {\mathbb R}^m$,
where $\|{\bf u}\|_\infty = \max_j |u_j|$, $\tilde{\bf y} = {\bf X} ({\bf X}^T {\bf X})^{\dagger} {\bf X}^T {\bf y}$, $\tilde{\bf D} = {\bf D} ({\bf X}^T {\bf X})^{\dagger} {\bf X}^T$, 
and ${\rm row}({\bf X})$ is a row space of ${\bf X}$; $A^{\dagger}$ denotes the Moore-Penrose pseudo-inverse of the matrix $A$.
This dual problem is difficult to solve due to the
the constraint ${\bf D}^T {\bf u} \in {\rm row}({\bf X})$, 
but this row space constraint can be
removed when ${\rm rank}({\bf X}) = p$.
Thus the path algorithm focuses on the case 
when the design matrix $\bf X$ has a full rank.

The algorithm starts from $\lambda=\infty$ and decreases $\lambda$. Suppose $\lambda_k$ is the current value of the regularization parameter $\lambda$. 
The current dual solution $\widehat{\bf u}(\lambda_k)$ is given by
\begin{align*}
\displaystyle
\widehat{\bf u}_{\mathcal{B}}(\lambda_k) &= \lambda_k {\bf s},\\  \displaystyle
\widehat{\bf u}_{-\mathcal{B}}(\lambda_k) &= \big({\bf D}_{-\mathcal{B}}({\bf D}_{-\mathcal{B}})^T
\big)^{\dagger} {\bf D}_{-\mathcal{B}} \big( {\bf y}- \lambda_k ({\bf D}_{\mathcal{B}})^T {\bf s} \big),
\end{align*}
where $\mathcal{B} = \mathcal{B}(\lambda) = \{~ j ~\big| ~|u_j| = \lambda, j=1,2,\ldots,m\}$ is the active set of the constraint $\| {\bf u} \|_\infty \le \lambda$;
$-\mathcal{B}=\mathcal{B}^c$; 
for an index set $A \subset \{1,2,\ldots,m\}$ and for a vector ${\bf v}=(v_i)_{1 \le i \le m} \in \mathbb{R}^m$, 
${\bf v}_A$ denotes the $|A|$-dimensional vector such that
${\bf v}_A=(v_i)_{i\in A}$,
and for a matrix ${\bf D} = ({\bf d}_i)_{1 \le i \le m} \in \mathbb{R}^{p \times m}$ with ${\bf d}_i \in \mathbb{R}^p$, 
${\bf D}_A$ denotes the $p \times |A|$
dimensional matrix such that ${\bf D}_A=({\bf d}_i)_{i\in A}$;
and ${\bf s} = (s_1,\ldots,s_{|\mathcal{B}|})^T$ with $s_k = \mbox{sign}(u_{n_k}(\lambda_k))$, $n_k \in \mathcal{B}$.
The primal solution is obtained from the dual solution by the following primal-dual relationship
\begin{align}\label{eqn:genlassoprimaldual}
\widehat{\beta}(\lambda) = \tilde{\bf y} - \tilde{\bf D}^T \widehat{\bf u}(\lambda),
\end{align}
hence the solution path is piecewise linear. The intervals in which all $\widehat{\beta}_j(\lambda)$ have constant slopes can be found by keeping track of the active set $\mathcal{B}$.
An inactive coordinate $j \in -\mathcal{B}$ at $\lambda=\lambda_k$, possibly joins $\mathcal{B}$ at the hitting time
\begin{align*}
t_j^{\rm (hit)} =
\frac{ \Big[ \big( {\bf D}_{-\mathcal{B}} ({\bf D}_{-\mathcal{B}})^T\big)^{\dagger}
{\bf D}_{-\mathcal{B}} {\bf y} \Big]_j }{
\Big[ \big( {\bf D}_{-\mathcal{B}} ({\bf D}_{-\mathcal{B}})^T\big)^{\dagger}
{\bf D}_{-\mathcal{B}} ({\bf D}_{\mathcal{B}})^T {\bf s} \Big]_j \pm 1},
\end{align*}
for which only one is less than $\lambda_k$, 
and an active coordinate $j' \in \mathcal{B}$, possibly leaves $\mathcal{B}$ at the leaving time
\begin{align*}
t_{j'}^{\rm (leave)} = 
\begin{cases}
\displaystyle \frac{ s_{j'} \Big[ {\bf D}_{\mathcal{B}} \big[ {\bf I} - {\bf D}_{-\mathcal{B}}^T
 \big( {\bf D}_{-\mathcal{B}} {\bf D}_{-\mathcal{B}}^T\big)^{\dagger}
{\bf D}_{-\mathcal{B}} \big] {\bf y} \Big]_{j'} }{
s_{j'} \Big[ {\bf D}_{\mathcal{B}} \big[ {\bf I} - {\bf D}_{-\mathcal{B}}^T
 \big( {\bf D}_{-\mathcal{B}} {\bf D}_{-\mathcal{B}}^T\big)^{\dagger}
{\bf D}_{-\mathcal{B}} \big] {\bf D}_{\mathcal{B}}^T {\bf s} \Big]_{j'} }\equiv \frac{c_{j'}}{d_{j'}},
 & \mbox{if} ~ c_{j'},d_{j'}<0,\\ 
 0, & \mbox{otherwise}.
\end{cases}
\end{align*}
Thus $\mathcal{B}$ changes at $\lambda=\lambda_{k+1}$ where
\[
\lambda_{k+1} = \max\{
\max_{j \in -\mathcal{B}}~ t_j^{\rm{(hit)}},
\max_{j \in \mathcal{B}}~ t_j^{\rm{(leave)}}
\}.
\]

The path algorithm for the generalized lasso problem can solve various $\ell_1$-regularization problems and obtains the exact solution path. 
After sequentially solving the dual, the solution path is obtained by the
primal-dual relationship \eqref{eqn:genlassoprimaldual}.
It is also efficient for solving the standard FLSA.
However, 
if the rank of the design matrix ${\bf X}$ is not full rank,
the additional constraint ${\bf D}^T {\bf u} \in {\rm row}({\bf X})$ does not disappear, 
making the dual \eqref{eqn:gen_gen} difficult to solve.
To resolve this problem, \cite{TibshiraniRyan2011} suggest adding an additional penalty $\epsilon \|\beta\|_2^2$ to the primal \eqref{eqn:genlasso},
which in effect substitutes ${\bf y}$ and ${\bf X} $ with ${\bf y}^* = ({\bf y}^T, {\bf 0}^T)^T$ and
 ${\bf X}^* = \left( {\bf X}^T ,\sqrt{\epsilon}~ {\bf I} \right)^T$, respectively.
This modification makes the new design matrix ${\bf X}^*$ full rank, but 
a small value of $\epsilon$ may lead an ill-conditioned matrix $({\bf X}^*)^T {\bf X}^*$.
A further difficulty is that as the number of rows $m$ of the matrix $\bf D$, i.e., the number of penalty terms on $\beta$, increases, the path algorithm becomes less efficient since $m$ is the number of dual variables.

\subsection{First-order methods}

To avoid the restrictions in the solution path methods,
several optimization algorithms,
which target to solve the FLR problem with a fixed set of
regularization parameters, have been developed.
These algorithms 
can solve the FLR problems with the general design matrix ${\bf X}$
regardless of its rank.
For scalability with the dimension $p$,
they employ gradient descent-flavored first-order
optimization methods.

\paragraph{Efficient fused lasso algorithm} ~\\
\cite{Liu2010} propose
 the efficient fused lasso algorithm (EFLA) that
 solves the standard FLR
\begin{align} \label{eqn:org_obj}
 \min_\beta f(\beta)
   &\equiv  \displaystyle  \frac{1}{2} \big\| {\bf y } - {\bf X} \beta \big\|_2^2
   + \lambda_1 \sum_{j=1}^p | \beta_j | + \lambda_2 \sum_{j=2}^p | \beta_{j} - \beta_{j-1}|.
\end{align}
At the $(r+1)$th iteration, the EFLA solves \eqref{eqn:org_obj} by minimizing an approximation of $f(\beta)$ in which the squared error term is replaced by its first-order Taylor expansion at the approximate solution $\widehat{\beta}^{(r)}$ obtained from the $r$th iteration, followed by an additional quadratic regularization term $\frac{L_r}{2}\| \beta - \widehat{\beta}^{(r)}\|_2^2$, $L_r>0$.
Minimization of this approximate objective can be written as
\begin{align}\label{eqn:EFLA}
\min_\beta \frac{1}{2} \| {\bf v}^{(r)} - \beta\|_2^2   +\frac{\lambda_1}{L_r} \sum_{j=1}^p | \beta_j |+ \frac{\lambda_2}{L_r} \sum_{j=2}^p | \beta_j  - \beta_{j-1}|,
\end{align}
where ${\bf v}^{(r)} = \widehat{\beta}^{(r)} -({\bf X}^{T} {\bf X} \widehat{\beta}^{(r)} + {\bf X}^T {\bf y})/{L_r} $. The minimizer of \eqref{eqn:EFLA} is denoted by  $\widehat{\beta}^{(r+1)}$.
Note that this problem is the standard FLSA. 

Although \eqref{eqn:EFLA} can be solved by the path algorithms reviewed in Section \ref{sec:review}, for fixed $\lambda_1$ and $\lambda_2$, more efficient methods can be employed. \cite{Liu2010} advocate the use of a gradient descent method on the dual of \eqref{eqn:EFLA}, given by
\begin{align} \label{eqn:sfa_obj}
\min_{\|{\bf u}\|_\infty \le \lambda_2} \frac{1}{2}  {\bf u}^T{\bf D}{\bf D}^T {\bf u} - {\bf u}^T {\bf D} {\bf v}^{(r)},
\end{align}
where ${\bf D} \in {\mathbb R}^{(p-1) \times p}$ is the finite different matrix on the one-dimensional grid of size $p$, a special case of the $m \times p$ penalty matrix of the generalized lasso \eqref{eqn:genlasso}.
Note here that we set $\lambda_1=0$ without loss of generality, due to the relation \eqref{eqn:soft}.
Solution to \eqref{eqn:EFLA} is obtained from the primal-dual relationship $\widehat{\beta}^{(r+1)} = {\bf v}^{(r)} - {\rm D}^T {\bf u}^{\star}$, where ${\bf u}^{\star}$ is the solution to \eqref{eqn:sfa_obj}. 
Since \eqref{eqn:sfa_obj} is a box-constrained quadratic program, it is solved efficiently by the following projected gradient method:
\begin{align*}
	\widehat{\bf u}^{(k+1)} =  P_{\lambda_2}(\widehat{\bf u}^{(k)} - \alpha ({\bf D} {\bf D}^T\widehat{\bf u}^{(k)} - {\bf D}{\bf v}^{(r)}) ),
\end{align*} 
where $P_{\lambda_2}(\cdot)$ is the projection onto the $l_{\infty}$-ball $\mathcal{B}=\{{\bf u}:\|{\bf u}\|_\infty \le \lambda_2 \}$ with radius $\lambda_2$, and $\alpha$ is the reciprocal of the largest eigenvalue of the matrix $\bf{D}\bf{D}^T$. 
Further acceleration is achieved by using a restart technique that keeps track of the coordinates of $\bf u$ for which the box constraints are active, and by replacing $\widehat{\beta}^{(r)}$ in \eqref{eqn:org_obj} with $\bar{\beta}^{(r)} = \widehat{\beta}^{(r)} + \eta_r (\widehat{\beta}^{(r)} - \widehat{\beta}^{(r-1)})$. The acceleration constant $\eta_r$ and the additional regularization constant $L_r$ in \eqref{eqn:EFLA} are chosen using Nestrov's method\citep{Nesterov2003,Nesterov2007}. 

The efficiency of the EFLA strongly depends on the special structure that the finite difference matrix $\bf D$ takes. Hence its generalizability to the generalized fusion penalty other than the standard one is limited. One may want to use one of the path algorithms for FLSA shown in Section \ref{sec:review} instead of solving the dual \eqref{eqn:sfa_obj}. However, this modification does not assure the efficiency of the algorithm, because a path algorithm always starts from $\lambda_2=0$ \citep{Hoefling2010} or $\lambda_2=\infty$ \citep{TibshiraniRyan2011} while for EFLA a fixed value of $\lambda_2$ suffices.

\paragraph{Smoothing proximal gradient method} ~\\
\cite{Chen2012} propose the smoothing proximal gradient (SPG) method
 that solves the regression problems with structured penalties 
 including the generalized fusion penalty.
Recall the FLR problem \eqref{eqn:obj} can be written
\begin{align}\label{eqn:spg}
\displaystyle  \min_\beta f(\beta) &\equiv \displaystyle \frac{1}{2} \big\| {\bf y } - {\bf X} \beta \big\|_2^2
   + \lambda_2 \sum_{(j,k) \in E} | \beta_{j} - \beta_{k}| + \lambda_1 \sum_{j=1}^p | \beta_j | \nonumber \\ 
   & \equiv   \displaystyle \frac{1}{2} \big\| {\bf y } - {\bf X} \beta \big\|_2^2 + \lambda_2 \|{\bf D}\beta\|_1 + \lambda_1 \| \beta \|_1, 
\end{align}
where ${\bf D}$ is an $m \times p$ dimensional matrix such that
$\|{\bf D}\beta\|_1 = \sum_{(j,k)\in E} |\beta_j - \beta_k|$.
The main idea of the SPG method is to approximate the non-smooth and non-separable penalty term $\| {\bf D} \beta \|_1$ by a smooth function and to solve the resulting smooth surrogate problem iteratively in a computationally efficient fashion.

For the smooth approximation, recall that the penalty term $\|{\bf D} \beta \|_1$  can be reformulated as a maximization problem with an auxiliary vector $\alpha \in {\mathbb R}^{m}$,
\begin{align*}
   	\|{\bf D} \beta \|_1 \equiv \max_{\|\alpha\|_\infty \le 1} \alpha^T {\bf D} \beta \equiv \Omega(\beta).
\end{align*}
Now consider a smooth function 
\begin{align}\label{eqn:norm}
\tilde{\Omega}_{\mu}(\beta) \equiv \max_{\|\alpha\|_\infty \le 1} \big(
\alpha^T {\bf D} \beta - \frac{\mu}{2} \| \alpha\|_2^2 \big),
\end{align}   
that approximates $\Omega(\beta)$ with a positive smoothing parameter $\mu$, chosen as $\epsilon/m$ to obtain the desired accuracy $\epsilon>0$ of the approximation. The function $\tilde{\Omega}_{\mu}(\beta)$ is convex and continuously differentiable with respect to $\beta$ \citep{Nesterov2005}.

For the smooth surrogate problem, the SPG method solves the following.
\begin{align}\label{eqn:spgsurrogate}
\displaystyle \min_\beta \tilde{f}(\beta) 
   & \equiv   \displaystyle \underbrace{\frac{1}{2}\| {\bf y} - {\bf X}\beta \|_2^2 + \lambda_2 \tilde{\Omega}(\beta,\mu)}_{h(\beta)} + \lambda_1 \| \beta \|_1.
\end{align}
This is the sum of a smooth convex function $h(\beta)$ and a non-smooth but \emph{separable} function $\lambda_1 \| \beta \|_1$, hence can be solved efficiently using a proximal gradient method, e.g., the fast iterative shrinkage-thresholding algorithm (FISTA, \citet{Beck2009}). The FISTA iteratively approximates $h(\beta)$ by its first-order Taylor expansion with an additional quadratic regularization term of the form $\frac{L}{2}\| \beta - \widehat{\beta}^{(r)} \|_2^2$, with essentially the same manner as the EFLA that yields \eqref{eqn:EFLA}. As a result, the FISTA solves at each iteration a subproblem that minimizes the sum of a quadratic function of $\beta$ whose Hessian is the identity, and the classical lasso penalty $\lambda_1 \| \beta \|_1$. The solution of this subproblem is readily given by soft-thresholding.

The computational complexity of the SPG method depends on that of determining 
the constant $L$ for the additional quadratic regularization term, which the FISTA chooses as the Lipschitz constant of the gradient of $h(\beta)$, i.e., 
$L = \|{\bf X}\|_2^2 +\frac{ \lambda_2 }{\mu}\|{\bf D}\|_2^2$,
where $\| {\bf A}\|_2$ is the spectral norm of a matrix ${\bf A}$. The computational complexity for computing $L$ is $O\big( \min(m^2p,mp^2) \big)$, which is costly if either $m$ or $p$ increases. 
To reduce this computational cost,
\cite{Chen2012} suggest using a line search on $L$, only to ensure that at the minimum of objective of the FISTA subproblem is not less than the surrogate function $\tilde{f}(\beta)$ evaluated at the minimizer of the same problem. However, since such an evaluation of $\tilde{f}(\beta)$
takes $O\big(\max(mp,np) \big)$ time, this scheme does not save much cost if the surrogate function is evaluated frequently, especially when either $m$ or $p$ is large.

\paragraph{Split Bregman algorithm} ~\\
\cite{Ye2011} propose the split Bregman (SB) algorithm for the standard FLR and it can directly extend to the general FLR \eqref{eqn:obj} and \eqref{eqn:spg}. Note that \eqref{eqn:spg} can be reformulated as a constrained form:
\begin{align}\label{eqn:sb}
\begin{array}{ll}
\min_{\beta, {\bf a}, {\bf b}} & \displaystyle \frac{1}{2} \big\| {\bf y } - {\bf X} \beta \big\|_2^2 + \lambda_1 \| {\bf a} \|_1 + \lambda_2 \|{\bf b} \|_1 \\
\mbox{subject~to} & {\bf a} = \beta \\
~ & {\bf b} = {\bf D}\beta,
\end{array}
\end{align}
with auxiliary variables ${\bf a} \in {\mathbb R}^p$ and $ {\bf b} \in {\mathbb R}^m$.
The SB algorithm is derived from the augmented Lagrangian \citep{Hestenes1969,Rockafellar1973} of \eqref{eqn:sb} given by
\begin{align}\label{eqn:aug}
\displaystyle L(\beta,{\bf a},{\bf b},{\bf u},{\bf v}) &= \displaystyle \frac{1}{2} \big\| {\bf y } - {\bf X} \beta \big\|_2^2 + \lambda_1 \|{\bf a}\|_1 + \lambda_2 \|{\bf b}\|_1 
+ {\bf u}^T(\beta - {\bf a})
+ {\bf v}^T({\bf D}\beta - {\bf b}) \notag\\
 & \qquad \displaystyle  + \frac{\mu_1}{2} \| \beta - {\bf a} \|_2^2 + \frac{\mu_2}{2} \| {\bf D}\beta -{\bf b} \|_2^2,
\end{align}
where ${\bf u} \in {\mathbb R}^p$, ${\bf v} \in {\mathbb R}^m$ are dual variables, and $\mu_1$, $\mu_2$ are positive constants. This is the usual Lagrangian of \eqref{eqn:sb} augmented by adding the quadratic penalty terms $\frac{\mu_1}{2}\| \beta - {\bf a} \|_2^2$ and $\frac{\mu_2}{2}\| {\bf D}\beta -{\bf b} \|_2^2$, for violating the equality constraints in \eqref{eqn:sb}. The primal problem associated with the augmented Lagrangian \eqref{eqn:aug} is to minimize $\sup_{{\bf u},{\bf v}}L(\beta,{\bf a},{\bf b},{\bf u},{\bf v})$ over $(\beta,{\bf a},{\bf b})$, and the dual is to maximize $\inf_{\beta,{\bf a},{\bf b}}L(\beta,{\bf a},{\bf b},{\bf u},{\bf v})$ over $({\bf u},{\bf v})$.
%

The SB algorithm solves the primal and the dual in an alternating fashion, in which the primal is solved separately for each of $\beta$, $\bf a$, and $\bf b$:
\begin{align}
\widehat{\beta}^{(r)} &= \displaystyle \argmin_{\beta}~\displaystyle \frac{1}{2} \big\| {\bf y } - {\bf X} \beta \big\|_2^2 + ({\bf u}^{(r-1)})^{T}(\beta - {\bf a}^{(r-1)})
 \displaystyle + ({\bf v}^{(r-1)})^T({\bf D}\beta - {\bf b}^{(r-1)}) \nonumber\\ 
& \qquad \qquad \quad \displaystyle+ \frac{\mu_1}{2} \big\|\beta-{\bf a}^{(r-1)}\big\|_2^2
+ \frac{\mu_2}{2} \big\|{\bf D}\beta-{\bf b}^{(r-1)}\big\|_2^2, \label{eqn:sbprimal1}\\
{\bf a}^{(r)} &= \displaystyle \argmin_{\bf a}~\displaystyle \lambda_1 \|{\bf a}\|_1 + (\widehat{\beta}^{(r)} -{\bf a})^T{\bf u}^{(r-1)}
 \displaystyle + \frac{\mu_1}{2} \big\|\widehat{\beta}^{(r)}-{\bf a}\big\|_2^2,\label{eqn:sbprimal2}\\
 {\bf b}^{(r)} &= \displaystyle \argmin_{\bf b}~\displaystyle \lambda_1 \|{\bf b}\|_1 + ({\bf D}\widehat{\beta}^{(r)} -{\bf b})^T{\bf v}^{(r-1)} \displaystyle + \frac{\mu_2}{2} \big\|{\bf D}\widehat{\beta}^{(r)}-{\bf b}\big\|_2^2, \label{eqn:sbprimal3}
\end{align}
while the dual is solved by gradient ascent:
\begin{align}
{\bf u}^{(r)} &= {\bf u}^{(r-1)} + \delta_1 (\widehat{\beta}^{(r)} - {\bf a}^{(r)}), \label{eqn:sbdual1}\\
{\bf v}^{(r)} &= {\bf v}^{(r-1)} + \delta_2 ({\bf D}\widehat{\beta}^{(r)} - {\bf b}^{(r)}). \label{eqn:sbdual2}
\end{align}
where $\delta_1$ and $\delta_2$ are the step sizes.

The dual updates \eqref{eqn:sbdual1} and \eqref{eqn:sbdual2} are simple; the primal updates for the auxiliary variables \eqref{eqn:sbprimal2} and \eqref{eqn:sbprimal3} are readily given by soft-thresholding, in a similar manner as the SPG case \eqref{eqn:spgsurrogate}. The primal update for the coefficient $\beta$ \eqref{eqn:sbprimal1} is equivalent to solving the linear system
\begin{align} \label{eqn:sb_lin}
({\bf X}^T {\bf X} + \mu_1 {\bf I} + \mu_2 {\bf D}^T {\bf D}) \beta = 
{\bf X}^T {\bf y} + (\mu_1 {\bf a}^{(r-1)} -  {\bf u}^{(r-1)})
+ {\bf D}^T (\mu_2 {\bf b}^{(r-1)} - {\bf v}^{(r-1)})
\end{align}   
whose computational complexity is $O(p^3)$ for general $\bf D$.
However, for standard FLR in which $\bf D$ is the finite difference matrix on a one-dimensional grid, 
\eqref{eqn:sb_lin} can be solved quickly by using the preconditioned conjugate
gradient (PCG) algorithm \citep{Demmel1997}. 
In this case, ${\bf M}=\mu_1 {\bf I} + \mu_2 {\bf D}^T {\bf D}$
is used as a preconditioner because $\bf M$ is tridiagonal and easy to invert.

Note that the separation of updating equations \eqref{eqn:sbprimal1}, \eqref{eqn:sbprimal2} and \eqref{eqn:sbprimal3} for the primal is possible
because the non-differentiable and non-separable $\ell_1$-norm penalties on the coefficients are transferred to the auxiliary variables that are completely decoupled and separable. Also note that the updating equations for the auxiliary variables \eqref{eqn:sbprimal2} and \eqref{eqn:sbprimal3} reduce to soft-thresholding due to the augmented quadratic terms and the separable non-differentiable terms.

The SB algorithm needs to choose the augmentation constants $\mu_1$ and $\mu_2$ and the step sizes $\delta_1$ and $\delta_2$.
In the implementation, \citet{Ye2011} use
a common value $\mu$ for all of the four quantities, and consider a pre-trial procedure
to obtain highest convergence rate.
Although the choice of $\mu$ does not affect the convergence of the algorithm, 
it is known that the rate of convergence is very sensitive to this choice \citep{Lin2011,Ghadimi2012}.
This makes the SB algorithm to stall or fail to reach the optimal solution; see Appendix B for further details.

\paragraph{Alternating linearization} ~\\  
\citet{Lin2011} propose the alternating linearization (ALIN) algorithm for the generalized lasso \eqref{eqn:genlasso}. Rewrite the problem \eqref{eqn:genlasso} as a sum of two convex functions
\begin{align} \label{eqn:sum}
\min_{\beta} f(\beta) \equiv \underbrace{\frac{1}{2} \big\| {\bf y} -{\bf X} \beta \big\|_2^2}_{g(\beta)} + \underbrace{\lambda \big\| {\bf D}\beta \big\|_1}_{h(\beta)}.
\end{align}
The main idea of the ALIN algorithm is to alternately solve two subproblems, in which $g(\beta)$ is linearized ($h$-subproblem) and added by a quadratic regularization term, and so is $h(\beta)$ ($g$-subproblem), respectively. 
This algorithm maintains three sequences of solutions: $\{\widehat{\beta}^{(r)}\}$ is the sequence of solutions of the problem (\ref{eqn:sum}); $\{\tilde{\beta}_h^{(r)} \}$ and $\{\tilde{\beta}_g^{(r)} \}$ are the sequences of solutions of the $h$-subproblem and the $g$-subproblem, respectively.

For $r$th iteration of subproblems,
the $h$-subproblem is given by
\begin{align}\label{eqn:hsub1}
\tilde{\beta}_h^{(r)} = \argmin_{\beta} \underbrace{\frac{1}{2} \| {\bf y} -{\bf X} \tilde{\beta}_g^{(r-1)} \|_2^2 + {\bf s}_g^T (\beta - \tilde{\beta}_g^{(r-1)})}_{\tilde{g}(\beta)}
	+ \frac{1}{2}  \| \beta - \widehat{\beta}^{(r-1)}  \|_{\bf A}^2	
	+ \lambda  \|{\bf D}\beta \|_1
\end{align}
where $\tilde{g}(\beta)$ is a linearization of $g(\beta)$ at $\tilde{\beta}_g^{(r-1)}$,
 ${\bf s}_g = \nabla g(\tilde{\beta}_g^{(r-1)}) = {\bf X}^T({\bf X} \tilde{\beta}_g^{(r-1)} - {\bf y})$, and
$\| \beta - \widehat{\beta} \|_{\bf A}^2 = (\beta - \widehat{\beta})^{T} {\bf A} (\beta - \widehat{\beta})$ with a diagonal matrix ${\bf A} = \mbox{diag}\big({\bf X}^T {\bf X}\big)$. Problem \eqref{eqn:hsub1} is equivalent to
\begin{align}\label{eqn:hsub2}
\min_{\beta,{\bf z}}~ {\bf s}_g^T \beta + \frac{1}{2}\| \beta - \widehat{\beta}^{(r-1)} \|_{\bf A}^2  
+ \lambda \|{\bf z}\|_1 \quad {\rm subject~to} \quad {\bf D}\beta = {\bf z}.
\end{align}
Similar to \eqref{eqn:gen_gen}, the dual of \eqref{eqn:hsub2} is given by
\begin{align} \label{eqn:alin}
\displaystyle \min_{{\bf u}} \frac{1}{2} {\bf u}^T {\bf D} {\bf A}^{-1} {\bf D}^T {\bf u} 
- {\bf u}^T{\bf D}(\widehat{\beta}^{(r-1)} - {\bf A}^{-1} {\bf s}_g) \quad
\displaystyle {\rm subject~ to} \quad
\| {\bf u} \|_\infty \le \lambda.
\end{align}
The solution of $h$-subproblem is obtained by the primal-dual relationship
 $\tilde{\beta}_h^{(r)} = \widehat{\beta}^{(r-1)} - {\bf A}^{-1}({\bf s}_g + {\bf D}^T {\bf u}^*)$, where ${\bf u}^*$ is the optimal solution of \eqref{eqn:alin}. The dual \eqref{eqn:alin} is efficiently solved by an active-set box-constrained PCG algorithm with $\mbox{diag}({\bf D} {\bf A}^{-1} {\bf D}^T)$ as a preconditioner.

\noindent The $g$-subproblem is also given by
\begin{align}\label{eqn:gsub1}
\tilde{\beta}_g^{(r)} &= \argmin_{\beta} \frac{1}{2} \big\| {\bf y} -{\bf X}\beta \big\|_2^2 
+ \underbrace{\lambda \|{\bf D}\tilde{\beta}_h^{(r)} \|_1
+ {\bf s}_h^T (\beta - \tilde{\beta}_h^{(r)})}_{\tilde{h}(\beta)}
+ \frac{1}{2} \| \beta - \widehat{\beta}^{(r-1)} \|_{\bf A}^2	
\nonumber \\
&=	
\argmin_{\beta} {\bf s}_h^T \beta + \frac{1}{2} \| {\bf y} - {\bf X}\beta \|_2^2
+ \frac{1}{2} \| \beta - \widehat{\beta}^{(r-1)} \|_{\bf A}^2,
\end{align}
where ${\bf s}_h$ is the subgradient of $h(\beta)$ at $\tilde{\beta}_h^{(r)}$ that can be calculated as ${\bf s}_h = -{\bf s}_g - {\bf A}(\tilde{\beta}_h^{(r)} - \widehat{\beta}^{(r-1)})$. 
Problem \eqref{eqn:gsub1} is equivalent to solving the following linear system.
\begin{equation} \label{eqn:alin_fsub}
({\bf X}^T {\bf X} + {\bf A})(\beta - \widehat{\beta}^{(r-1)}) = {\bf X}^T({\bf y}-{\bf X}\widehat{\beta}^{(r-1)}) - {\bf s}_h,
\end{equation}
which can be solved by using the PCG algorithm with ${\bf A} = \mbox{diag}({\bf X}^T {\bf X})$ as a preconditioner.


After solving each subproblem, the ALIN algorithm checks its stopping and updating criteria.
The stopping criteria for the $h$-subproblem and the $g$-subproblem
are defined as
\begin{align*}
\begin{array}{l}
h\mbox{-sub} ~:~
\tilde{g}(\tilde{\beta}_h^{(r)}) + {h}(\tilde{\beta}_h^{(r)}) \ge g(\widehat{\beta}^{(r-1)}) +
h(\widehat{\beta}^{(r-1)}) - \eps,\\
g\mbox{-sub} ~:~
g(\tilde{\beta}_g^{(r)}) + \tilde{h}(\tilde{\beta}_g^{(r)}) \ge g(\widehat{\beta}^{(r-1)}) +
h(\widehat{\beta}^{(r-1)}) - \eps,
\end{array}
\end{align*}
where $\eps$ is a tolerance of the algorithm.
If one of the stopping criteria is met, then the algorithm terminates; otherwise it proceeds to  check the updating criteria:
\begin{align*}
\begin{array}{l}
h\mbox{-sub} ~:~ 
g(\tilde{\beta}_h^{(r)}) + {h}(\tilde{\beta}_h^{(r)}) \le (1-\gamma)[g(\widehat{\beta}^{(r-1)}) +
h(\widehat{\beta}^{(r-1)})] + \gamma [\tilde{g}(\tilde{\beta}_h^{(r)}) +
h(\tilde{\beta}_h^{(r)})],\\
g\mbox{-sub} ~:~
g(\tilde{\beta}_g^{(r)}) + {h}(\tilde{\beta}_g^{(r)}) \le (1-\gamma)[g(\widehat{\beta}^{(r-1)}) +
h(\widehat{\beta}^{(r-1)})] + \gamma [g(\tilde{\beta}_g^{(r)}) +
\tilde{h}(\tilde{\beta}_g^{(r)})],
\end{array}
\end{align*}
where $\gamma \in (0,1)$.
If one of the updating criterion is satisfied,
then $\widehat{\beta}^{(r)}$ is updated as the solution of the corresponding subproblem, i.e., $\tilde{\beta}_g^{(r)}$ or $\tilde{\beta}_h^{(r)}$.
Otherwise $\widehat{\beta}^{(r)}$ remains unchanged: $\widehat{\beta}^{(r)}=\widehat{\beta}^{(r-1)}$.

As it solves the dual problem \eqref{eqn:alin}, the ALIN algorithm has a similar drawback as the path algorithm for the generalized lasso: if the number of rows $m$ of $\bf D$ increases, solving the dual is less efficient than solving the primal. Moreover, the matrix ${\bf D}{\bf A}^{-1} {\bf D}^T$ in \eqref{eqn:alin}  
is not  positive definite when $m$ is greater than $p$. This violates the assumption of the PCG.

\subsection{Summary of the reviewed algorithms} 

Table 1 summarizes the existing algorithms explained in Section 2.1 and 2.2 and
the MM algorithm proposed in Section 3 according to
types of the design matrix ${\bf X}$ and penalty structure.
The items marked with a filled circle ($\bullet$) 
denote the availability of the algorithm.

\begin{table}[!htb]
\caption{Summary of algorithms for solving the FLSA and the FLR problems.}
\medskip
\begin{minipage}{\textwidth}
\centering
\begin{tabular}{|l|c|c|c|c|} \hline
\multirow{2}*{Method\let\thefootnote\relax\footnotetext{Abbreviations of the names of the algorithms are given within the parentheses for future references.}} & \multicolumn{2}{c|}{FLSA (${\bf X} = {\bf I}$)} & \multicolumn{2}{c|}{FLR (general ${\bf X}$)} \\ \cline{2-5}
&STD\let\thefootnote\relax\footnotetext{STD denotes the standard fusion penalty $\sum_{j=2}^p |\beta_j -\beta_{j-1}|$.}  & GEN\let\thefootnote\relax\footnotetext{GEN denotes the generalized fusion penalty $\sum_{(j,k) \in E} |\beta_j -\beta_{k}|$ for a given
index set $E$.} &$\phantom{a}$STD$\phantom{a}$ & GEN$$\\ \hline
Path-wise optimization (\textsf{pathwise}) & $\bullet$\let\thefootnote\relax\footnotetext{$\bullet$ denotes that the method
is applicable.}& & &  \\ 
Path algorithm for the FLSA (\textsf{pathFLSA}) &$\bullet$& $\bullet$ & &  \\ 
Path algorithm for the generalized lasso (\textsf{genlasso}) & $\bullet$&  $\vartriangle$\let\thefootnote\relax\footnotetext{$\vartriangle$ denotes that the dual solution path is computationally infeasible in large scale image denoising.}& $\blacktriangle$\let\thefootnote\relax\footnotetext{$\blacktriangle$ denotes the method
is only applicable when ${\bf X}$ has full rank.} & $\blacktriangle$  \\ 
Efficient Fused Lasso Algorithm (\textsf{EFLA}) & $\bullet$& &  $\bullet$&  \\ 
Smooth Proximal Gradient  (\textsf{SPG}) & $\bullet$&$\circ$\let\thefootnote\relax\footnotetext{$\circ$ denotes the method is not adequate since computation of spectral matrix norm when $p$ is large.} &$\bullet$ &$\circ$\\ 
Split Bregman  (\textsf{SB}) & $\bullet$&$\bullet$ &$\bullet$ &$\bullet$ \\ 
Alternating Linearization  (\textsf{ALIN})  & $\bullet$&$\bullet$ &$\bullet$ &$\bullet$ \\  \hline
Majorization-Minimization  (\textsf{MM}) &$\bullet$&$\bullet$ &$\bullet$ &$\bullet$
 \\ 
MM with
GPU parallelization (\textsf{MMGPU}) &$\bullet$& &$\bullet$ & \\ \hline
\end{tabular}
\end{minipage}
\end{table}

\section{MM algorithm for fused lasso problem}

\subsection{MM algorithm}

In this section we propose an MM algorithm to solve the FLR
 (\ref{eqn:obj}). The MM algorithm iterates two steps, the majorization step and  
the minimization step. Given the current estimate $\widehat{\beta}^{0}$
of the optimal solution to (\ref{eqn:obj}), the majorization step finds a majorizing function $g\big(\beta \big|\widehat{\beta}^0 \big)$ such that
 $f\big(\widehat{\beta}^0\big) = g \big(\widehat{\beta}^0  \big|\widehat{\beta}^0 \big)$ and $f \big(\beta \big) \le g \big( \beta \big|\widehat{\beta}^0\big)$ for all $\beta \neq \widehat{\beta}^0$. 
The minimization step updates the estimate with 
\begin{equation} \nonumber 
\widehat{\beta} = \argmin_{\beta} ~g \big( \beta \big| \widehat{\beta}^0\big).
\end{equation}
It is known that the MM algorithm has a descent property in the 
sense that $f\big(\widehat{\beta}\big) \le f \big(\widehat{\beta}^0 \big)$.
Furthermore, if $f\big(\beta\big)$
and  $g \big( \beta \big|\widehat{\beta}^0\big)$ satisfy some regularity conditions, e.g., those in \cite{Vaida2005}, the MM algorithm converges to the optimal solution of $f\big(\beta\big)$.

The MM algorithm has been applied to the classical lasso problem by \citet{HunterLi2005}. They propose to minimize a perturbed 
version of the objective function of the classical lasso
\begin{equation} \nonumber
 f_{\lambda,\eps}(\beta) = \frac{1}{2} \|{\bf y}- {\bf X}\beta\|_2^2
 		+ \lambda \sum_{j=1}^p \Big(|\beta_j| - \eps \log \big( 1 + \frac{|\beta_j|}{\eps} \big) \Big)
\end{equation}
instead of
\begin{equation} \nonumber
f_\lambda(\beta) =  \frac{1}{2} \|{\bf y}- {\bf X}\beta\|_2^2 + \lambda \|\beta\|_1
\end{equation}
to avoid  division by zero in the algorithm.
The majorizing function for $f_{\lambda,\eps}(\beta)$ at $\widehat{\beta}^0$ is given by
 \begin{eqnarray}
 	 g_{\lambda,\eps}(\beta |\widehat{\beta}^0) &=&
 	\frac{1}{2} \|{\bf y}- {\bf X}\beta\|_2^2  
  ~+ \lambda \sum_{j=1}^p \left\{|\widehat{\beta}^0_j| - \eps \log \left( 1 + \frac{|\widehat{\beta}^0_j|}
 		{\eps} \right)  + \frac{\beta_j^2 - (\widehat{\beta}^0_j)^2}{2 \Big(|\widehat{\beta}^0_j| + \eps\Big)} \right\}. \nonumber
 \end{eqnarray}
 \citet{HunterLi2005} show that the sequence $\{ \widehat{\beta}^{(r)}\}_{r \ge 0}$ with $\widehat{\beta}^{(r+1)} = \argmin_\beta g_{\lambda,\eps}\big(\beta \big| \widehat{\beta}^{(r)}\big)$ converges to
  the minimizer of $f_{\lambda,\eps}(\beta)$, and that $f_{\lambda,\eps}(\beta)$ 
  converges to $f_{\lambda}(\beta)$ uniformly as $\eps$ approaches $0$.

Motivated by the above development, we introduce 
a perturbed version $f_\eps(\beta)$ of the objective (\ref{eqn:obj})  and
the majorizing function  $g_\eps(\beta|\widehat{\beta}^0)$ for the FLR problem (\ref{eqn:obj}):
\begin{align} \label{eqn:pert}
f_{\eps} (\beta) &= \displaystyle \frac{1}{2} \| {\bf y}- {\bf X} \beta \|_2^2  
+ \lambda_1 \sum_{j=1}^p \left\{|\beta_j | - \eps \log \Big(1 + \frac{|\beta_j|}{\eps} \Big) \right\} \nonumber \\ 
& \displaystyle + \lambda_2 \sum_{(j,k) \in E} \left\{|\beta_j - \beta_{k}| - \eps \log \Big(1 + \frac{|\beta_j - \beta_{k}|}{\eps} \Big) \right\}, 
\end{align}
and
\begin{align} \label{eqn:major}
g_{\eps} (\beta | \widehat{\beta}^0) &=  
\displaystyle \frac{1}{2} \| {\bf y}- {\bf X} \beta \|_2^2  
 \displaystyle + \lambda_1 \sum_{j=1}^p \Bigg\{|\widehat{\beta}^0_j|  - \eps \log \Big(1 + \frac{|\widehat{\beta}^0_j|}{\eps} \Big)  
 + \frac{\beta_j^2 - (\widehat{\beta}^0_j)^2}{2(|\widehat{\beta}^0_j|+\eps)} \Bigg\} \nonumber  \\
& \quad \displaystyle  + \lambda_2 \sum_{(j,k) \in E}  \Bigg\{
\big|\widehat{\beta}^0_j-\widehat{\beta}^0_k \big| - \eps \log \Big(1 + \frac{|\widehat{\beta}^0_j -\widehat{\beta}^0_k|}{\eps} \Big)   
 + \frac{(\beta_j-\beta_{k})^2 - (\widehat{\beta}^0_j-\widehat{\beta}^0_k)^2}{2(|\widehat{\beta}^0_j-\widehat{\beta}^0_k|+\eps)} 
\Bigg\}. 
\end{align}
Note that $g_\eps(\beta | \widehat{\beta}^0)$ is a smooth convex function in $\beta$. Below we see that $g_\eps(\beta | \widehat{\beta}^0)$ indeed majorizes $f_\eps(\beta)$ at $\widehat{\beta}^0$, and has a unique global minimum  regardless of the rank of ${\bf X}$, together with several properties of the perturbed objective $f_{\eps}(\beta)$:

\begin{proposition}\label{prop:mm}
For $\eps >0$ and $\lambda_1>0$, 
(i) $f_\eps (\beta)$ is continuous, convex, and satisfies $\lim_{ \|\beta\|_2 \rightarrow  \infty} f_{\epsilon} (\beta) = \infty$, 
(ii) $f_{\eps}(\beta)$ converges to $f(\beta)$ uniformly on any compact set $\bf C$ as $\eps$ approaches to zero.
(iii) $g_\eps (\beta|\beta')$ majorizes $f_\eps (\beta)$ at $\beta'$, and $g_\eps(\beta|\beta')$ has a unique global minimum for all $\beta'$.
\end{proposition}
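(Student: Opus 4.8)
The plan is to reduce all three parts to elementary properties of the single scalar function $\phi_{\eps}(s)=|s|-\eps\log\bigl(1+|s|/\eps\bigr)$, since every penalty term in $f_\eps$ is of the form $\phi_\eps(\beta_j)$ or $\phi_\eps(\beta_j-\beta_k)$, i.e.\ $\phi_\eps$ composed with a linear functional of $\beta$. The fact I would establish at the outset is the representation $\phi_\eps(s)=H(s^2)$, where $H(u)=\sqrt{u}-\eps\log\bigl(1+\sqrt{u}/\eps\bigr)$ for $u\ge 0$ satisfies $H(0)=0$ and, by a one-line computation, $H'(u)=\tfrac{1}{2(\eps+\sqrt{u})}$ on $(0,\infty)$; hence $H$ is continuous on $[0,\infty)$, strictly increasing, and strictly concave (positive, decreasing derivative). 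This single observation drives both the convexity assertion in (i) (equivalently, $\phi_\eps''(s)=\eps/(\eps+|s|)^2>0$ for $s\neq 0$, with $\phi_\eps'$ continuous at $0$, so $\phi_\eps$ is convex) and, more importantly, the majorization in (iii).

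For (i), continuity is immediate since $f_\eps$ is a finite sum of compositions of continuous functions; convexity follows since $\tfrac12\|{\bf y}-{\bf X}\beta\|_2^2$ is convex, each $\phi_\eps(\beta_j)$ and $\phi_\eps(\beta_j-\beta_k)$ is convex in $\beta$, and $\lambda_1,\lambda_2\ge 0$. For coercivity I would use $\phi_\eps\ge 0$ (from $\log(1+x)\le x$) together with $\phi_\eps(s)\to\infty$ as $|s|\to\infty$: discarding the nonnegative quadratic and $\lambda_2$ terms and using $\|\beta\|_\infty\ge\|\beta\|_2/\sqrt p$ and the monotonicity of $\phi_\eps$ on $[0,\infty)$ gives $f_\eps(\beta)\ge\lambda_1\,\phi_\eps(\|\beta\|_2/\sqrt p)\longrightarrow\infty$, which is precisely where $\lambda_1>0$ is needed. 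For (ii), $f(\beta)-f_\eps(\beta)=\lambda_1\sum_j\eps\log(1+|\beta_j|/\eps)+\lambda_2\sum_{(j,k)\in E}\eps\log(1+|\beta_j-\beta_k|/\eps)\ge 0$, and since $t\mapsto\eps\log(1+t/\eps)$ is increasing, on a compact set ${\bf C}$ with $M=\sup_{\beta\in{\bf C}}\|\beta\|_\infty$ this difference is bounded above by $(\lambda_1 p+\lambda_2|E|)\,\eps\log(1+2M/\eps)$, a bound independent of $\beta$ that tends to $0$ as $\eps\to 0$ because $\eps\log\eps\to 0$; this is uniform convergence.

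For (iii), the majorization step comes directly from concavity of $H$: its tangent line at any $u_0\ge 0$ lies above it, $H(u)\le H(u_0)+\tfrac{u-u_0}{2(\eps+\sqrt{u_0})}$, with equality at $u_0$, and setting $u=s^2$, $u_0=s_0^2$ yields the scalar surrogate $\phi_\eps(s)\le\phi_\eps(s_0)+\tfrac{s^2-s_0^2}{2(\eps+|s_0|)}$, which stays finite even at $s_0=0$. Applying this with $(s,s_0)=(\beta_j,\widehat{\beta}^0_j)$ for each $j$ weighted by $\lambda_1$, and with $(s,s_0)=(\beta_j-\beta_k,\widehat{\beta}^0_j-\widehat{\beta}^0_k)$ for each $(j,k)\in E$ weighted by $\lambda_2$, then adding back the common term $\tfrac12\|{\bf y}-{\bf X}\beta\|_2^2$, reproduces exactly $g_\eps(\beta\mid\widehat{\beta}^0)\ge f_\eps(\beta)$ with equality at $\beta=\widehat{\beta}^0$. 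For the unique-minimum claim I would note that $g_\eps(\cdot\mid\beta')$ is a quadratic with Hessian ${\bf X}^T{\bf X}+\lambda_1\,\mathrm{diag}\bigl(1/(|\beta'_j|+\eps)\bigr)+{\bf D}^T{\bf W}{\bf D}$, where ${\bf D}$ is the edge-incidence matrix of $E$ and ${\bf W}=\mathrm{diag}\bigl(\lambda_2/(|\beta'_j-\beta'_k|+\eps)\bigr)\succeq 0$; since ${\bf X}^T{\bf X}\succeq 0$ and ${\bf D}^T{\bf W}{\bf D}\succeq 0$, the Hessian dominates the positive-definite diagonal matrix $\lambda_1\,\mathrm{diag}(1/(|\beta'_j|+\eps))$ (again using $\lambda_1>0$), so $g_\eps(\cdot\mid\beta')$ is strictly convex and coercive, hence has a unique global minimizer regardless of $\mathrm{rank}({\bf X})$.

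The only genuinely delicate point is the majorization inequality; the rest is bookkeeping. The naive $\ell_1$ surrogate $|s|\le(s^2+s_0^2)/(2|s_0|)$ blows up as $s_0\to 0$, so the real content is verifying that the perturbed surrogate is valid uniformly in $s_0$, including the boundary case $s_0=0$. Recasting $\phi_\eps(s)=H(s^2)$ with $H$ concave makes this transparent and simultaneously supplies the convexity used in (i), so I would devote the bulk of the write-up to confirming concavity of $H$ and treating $s_0=0$ carefully, after which (i) and (ii) fall out of the same elementary estimates.
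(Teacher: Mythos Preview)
Your proof is correct and follows essentially the same plan as the paper: reduce everything to the scalar function $\phi_\eps$, bound $f-f_\eps$ termwise for uniform convergence, and check that the Hessian ${\bf X}^T{\bf X}+\lambda_1{\bf A}'+\lambda_2{\bf B}'$ is positive definite to get strict convexity of $g_\eps(\cdot\mid\beta')$.

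The one place where your presentation differs is the majorization in (iii). The paper defines the scalar difference $h(y\mid y_0)=q_\eps(x\mid x_0)-p_\eps(x)$ with $y=|x|$, and shows by direct differentiation that it is decreasing for $y<y_0$ and increasing for $y>y_0$, hence nonnegative with equality at $y_0$. You instead write $\phi_\eps(s)=H(s^2)$ with $H'(u)=\tfrac{1}{2(\eps+\sqrt{u})}$ and invoke concavity of $H$ to get the tangent-line bound $H(u)\le H(u_0)+H'(u_0)(u-u_0)$ in one stroke. These are equivalent, but your packaging makes two things transparent that the paper leaves implicit: why the surrogate is quadratic in $s$, and why the $\eps$-perturbation removes the $1/|s_0|$ singularity uniformly (the derivative $H'(0^+)=1/(2\eps)$ is finite). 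Your coercivity bound $f_\eps(\beta)\ge\lambda_1\phi_\eps(\|\beta\|_2/\sqrt{p})$ is likewise a bit more explicit than the paper's appeal to coordinate-wise divergence plus convexity; both are valid and both isolate $\lambda_1>0$ as the hypothesis doing the work.
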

\begin{proof}
See Appendix A.1.
\end{proof}

Now let $\widehat{\beta}^{(r)}$ be the current estimate in the $r$th iteration.
The majorizing function $g_{\eps}(\beta |\widehat{\beta}^{(r)})$ is minimized over $\beta$ when
$\partial g_\eps (\beta|\widehat{\beta}^{(r)}) / \partial \beta = 0$, 
which can be written as a linear system of equations
\begin{equation} \label{eqn:lin-sys}
 \big( {\bf X}^{T} {\bf X} + \lambda_1 {\bf A}^{(r)}  + \lambda_2 {\bf B}^{(r)} \big) \beta = {\bf X}^{T}  {\bf y},
\end{equation}
where ${\bf A}^{(r)} = {\rm diag}(a_1^{(r)},a_2^{(r)},\ldots,a_p^{(r)})$ with
 $a_{j}^{(r)} =  1/(|\widehat{\beta}_j^{(r)}|+\eps)$ for $1 \le j \le p$, and ${\bf B}^{(r)}= \big(b_{jk}^{(r)} \big)_{1\le j,k\le p}$ is a symmetric and positive
 semidefinite matrix with
 \begin{eqnarray}
b_{jj}^{(r)} &=&  \sum_{k: (j,k)\in E} \frac{1}{|\widehat{\beta}_j^{(r)} - \widehat{\beta}_k^{(r)}| + \eps}, \quad j=1,\ldots ,p, \nonumber \\
b_{jk}^{(r)} &=&b_{kj}^{(r)} =-\frac{1}{|\widehat{\beta}_j^{(r)} - \widehat{\beta}_k^{(r)}| + \eps}, \quad \forall (j,k) \in E. \nonumber
\end{eqnarray}

The procedure described so far is summarized as Algorithm \ref{mm_general}.
Note that once the matrix $ {\bf X}^{T} {\bf X} + \lambda_1 {\bf A}^{(r)}  + \lambda_2 {\bf B}^{(r)}$ is constructed,
the linear system (\ref{eqn:lin-sys}) can be efficiently solved.
Since this matrix is symmetric and positive definite, hence
Cholesky decomposition and back substitution can be applied,
e.g., using \texttt{dpotrf} and \texttt{dpotri} functions
in LAPACK \citep{Anderson1995}.

Furthermore, the construction of the matrix
$ \lambda_1 {\bf A}^{(r)}  + \lambda_2 {\bf B}^{(r)}$
is efficient in the MM algorithm.
Recall from \eqref{eqn:spg} that the generalized fusion penalty in \eqref{eqn:obj} can be written as
$\| {\bf D} \beta \|_1$ for a $m \times p$ dimensional
matrix ${\bf D}$, where $m$ is the number of penalty terms.
In Algorithm \ref{mm_general}, 
the matrix $ \lambda_1 {\bf A}^{(r)}  + \lambda_2 {\bf B}^{(r)}$ is constructed in $O(m)$ time, i.e., independent of the problem dimension $p$.
This is a great advantage of the MM algorithm over other algorithms reviewed in Section \ref{sec:review}:
\textsf{SPG} requires to check a specific condition
to guarantee its convergence, which takes at least $O\big(np\big)$ time
for every iteration;
\textsf{SB} needs to construct the matrix  
 $ \mu {\bf I}  + \mu {\bf D}^T {\bf D}$,
where $\mu$ is the  parameter discussed in Section 2.2, and
it takes $O(mp^2)$ time;
\textsf{ALIN} and \textsf{genlasso} need
 to compute the matrix ${\bf D}{\bf D}^T$ for solving
 the dual, hence requires $O(m^2p)$ time.
Although in the MM algorithm the matrix 
$ \lambda_1 {\bf A}^{(r)}  + \lambda_2 {\bf B}^{(r)}$
needs to be constructed for each iteration,
our experience is that the MM algorithm tends to converge
within a few tens of iterations; see Section \ref{sec:numerical}. Thus
if $p$ is more than a few tens, which is of our genuine
interest, the MM algorithm is expected to run faster than the other algorithms.

In case the matrix ${\bf M}^{(r)}= \lambda_1 {\bf A}^{(r)} +\lambda_2 {\bf B}^{(r)}$
is tridiagonal as in the standard FLR, or
block tridiagonal as in the two-dimensional FLR,
we can employ the preconditioned conjugate gradient (PCG)
method by using the matrix ${\bf M}^{(r)}$ as the preconditioner
to efficiently solve the linear system (\ref{eqn:lin-sys}).
The PCG method is an iterative method for
solving a linear system ${\bf Q}{\bf x} ={\bf c}$,
where ${\bf Q}$ is a symmetric positive definite matrix,
with a preconditioner matrix ${\bf M}$ \citep{Demmel1997}.
For example, we can use ${\bf M}={\rm diag}({\bf X}^T {\bf X})$ for ${\bf Q} = {\bf X}^T {\bf X}$.
The PCG method achieves the solution by iteratively updating
four sequences $\{{\bf x}_j\}_{j\ge0}$, $\{{\bf r}_j\}_{j\ge0}$, $\{{\bf z}_j\}_{j\ge0}$, and $\{{\bf p}_j\}_{j\ge1}$ with
\begin{equation} \nonumber
\begin{array}{ccl}
{\bf x}_j &=& {\bf x}_{j-1} + \nu_j {\bf p}_j, \\
{\bf r}_j &=& {\bf r}_{j-1} + \nu_j {\bf Q} {\bf p}_j,\\
{\bf z}_j &=& {\bf M}^{-1} {\bf r}_j,\\
{\bf p}_{j+1} &=& {\bf z}_j + \gamma_{j} {\bf p}_j,\\
\end{array}
\end{equation}
where $\nu_j = ({\bf z}_{j-1}^T {\bf r}_{j-1})/({\bf p}_j^T {\bf Q} {\bf p}_j)$, $\gamma_{j} = ({\bf z}_j^T {\bf r}_j)/({\bf z}_{j-1}^T {\bf r}_{j-1})$, ${\bf x}_0 = 0$, ${\bf r}_0 = {\bf c}$, ${\bf z}_0 = {\bf M}^{-1} {\bf c}$, and ${\bf p}_1 = {\bf z}_0$.
For the standard FLR, the matrix ${\bf M}^{(r)}$ is tridiagonal and
its Cholesky decomposition can be evaluated within
$ {O\big( p \big)}$ time, much faster than
that of general positive definite matrices.
In this case, we can solve the linear system ${\bf M}^{(r)} {\bf z}_j = {\bf r}_j$
using LAPACK functions \texttt{dpbtrf} and \texttt{dpbtrs}
that perform Cholesky decomposition and  linear system solving for symmetric positive definite band matrix, respectively.
Algorithm \ref{mm_pcg} describes the proposed MM algorithm using the PCG method for the standard FLR.

\begin{algorithm}
\caption{MM algorithm for the FLR}\label{mm_general}
\begin{algorithmic}[1]
\Require ${\bf y}$, ${\bf X}$, $\lambda_1$, $\lambda_2$,
convergence tolerance $\delta$, perturbation constant $\eps$.
      \For{$j = 1,\cdots, p$}
       \medskip
      \State $\displaystyle \widehat{\beta}_j^{(0)} \gets \frac{ X_j^{T} {\bf y} }{ X_j^{T}X_j}$\Comment{initialization}
       \medskip
   \EndFor
    \medskip
   \Repeat{~~$r = 0,1,2,\ldots$}
    \medskip
      \State $ {\bf M}^{(r)} \gets  {\bf X}^{T} {\bf X} + \lambda_1 {\bf A}^{(r)}  + \lambda_2 {\bf B}^{(r)}$
       \medskip
      \State $\widehat{\beta}^{(r+1)} \gets \big({\bf M}^{(r)}\big)^{-1} {\bf X}^{T} {\bf y}$\Comment{using \texttt{dpotrf} and \texttt{dpotri} in LAPACK }
     \medskip
    \Until{ $ \displaystyle \frac{| f(\widehat{\beta}^{(r+1)}) - f(\widehat{\beta}^{(r)}) |}{ | f(\widehat{\beta}^{(r)}) | }\le \delta$}
     \medskip
\Ensure $\widehat{\beta}_\eps  = \widehat{\beta}^{(r+1)}$
\end{algorithmic}
\end{algorithm}

\begin{algorithm}
\caption{\newline MM algorithm using the PCG method for the standard FLR and the
two-dimensional FLR} \label{mm_pcg}
\begin{algorithmic}[1]
\Require ${\bf y}$, ${\bf X}$, $\lambda_1$, $\lambda_2$,
convergence tolerance $\delta$, perturbation constant $\eps$.

            \For{$j = 1,\cdots, p$}
            \medskip
      \State $\displaystyle \widehat{\beta}_j^{(0)} \gets \frac{ X_j^{T} {\bf y} }{ X_j^{T}X_j}$\Comment{initialization}
      \medskip
   \EndFor
   \medskip
   \Repeat{  $r = 0,1,2,\ldots$}
   \medskip
         \State ${\bf M }^{(r)} \gets \lambda_1 {\bf A}^{(r)} + \lambda_2 {\bf B}^{(r)}$\Comment{preconditioner}
      \medskip
       \State ${\bf x}_0 \gets \widehat{\beta}^{(r)}$ \Comment{begin the PCG method}
      \medskip
      \State ${\bf r}_0 \gets {\bf X}^T {\bf y} - ({\bf X}^T {\bf X} + {\bf M}^{(r)}) {\bf x}_0$ 
      \medskip
      \State ${\bf z}_0 \gets \big({\bf M}^{(r)}\big)^{-1} {\bf r}_0$
      \medskip
      \State ${\bf p}_1 \gets {\bf z}_0$
      \medskip
      \Repeat{ ~~$j = 1,2,\ldots$}
      \medskip
               \State $ \displaystyle \nu_j \gets \frac{{\bf r}^T_{j-1} {\bf z}_{j-1} }{ {\bf p}^T_j ({\bf X}^T {\bf X} + {\bf M}^{(r)}) {\bf p}_j}$
\medskip
\State $ {\bf x}_j \gets {\bf x}_{j-1} + \nu_j {\bf p}_j$
         \medskip
         \State $ {\bf r}_j \gets {\bf r}_{j-1} + \nu_j ({\bf X}^T {\bf X} + {\bf M}^{(r)}) {\bf p}_j$
         \medskip
            \State Solve ${\bf M}^{(r)} {\bf z}_j = {\bf r}_j$\Comment{using \texttt{dpbtrf} and \texttt{dpbtrs} in LAPACK }
         \medskip
         \State $\displaystyle \gamma_j \gets \frac{{\bf r}^T_j {\bf z}_j}{ {\bf r}^T_{j-1} {\bf z}_{j-1}}$
         \medskip
         \State ${\bf p}_{j+1} \gets {\bf z}_j + \gamma_j {\bf p}_{j}$\Comment{update conjugate gradient}
         \medskip
      \Until{ $\displaystyle \frac{\| {\bf r}_j\|_2 }{ \|{\bf X}^T {\bf y}\|_2} \le \delta$} \Comment{end the PCG method}
      \medskip
       \State $\widehat{\beta}^{(r+1)} \gets {\bf x}_j$\Comment{update solution}
       \medskip
   \Until{ $ \displaystyle \frac{| f(\widehat{\beta}^{(r+1)}) - f(\widehat{\beta}^{(r)}) |}{ | f(\widehat{\beta}^{(r)}) | }\le \delta$}
  \medskip
\Ensure $\widehat{\beta}_\eps  = \widehat{\beta}^{(r+1)}$
\end{algorithmic}
\end{algorithm}

\subsection{Convergence}

In this section we provide some results on the convergence of the proposed MM algorithm for the FLR. 
We first show that a minimizer $\widehat{\beta}_\eps$ of the perturbed version \eqref{eqn:pert} of the objective \eqref{eqn:obj} exhibits a minimal difference
from a minimizer $\widehat{\beta}$ of the true objective \eqref{eqn:obj} for sufficiently small $\eps$. 
We then  see that the sequence of the solutions $\{\widehat{\beta}_{\epsilon}^{(r)}\}_{r \ge 0}$ of the MM algorithm converges to $\widehat{\beta}_{\epsilon}$.

The key to the proof of the following lemma is to see that the level sets of both the perturbed objective $f_{\eps}(\beta)$ and the true objective $f(\beta)$ are compact, which is a consequence of Proposition \ref{prop:mm}.
\begin{lemma}\label{lemma:perturbed}
Consider an arbitrary deceasing sequence $\big\{\epsilon_n, n=1,2,\ldots,\infty \big\}$ that converges to 0. Then, any limit point of $\widehat{\beta}_{\epsilon_n}$ is a minimizer of $f\big(\beta \big)$, provided that  $\{ \beta ~|~ f(\beta) = f(\widehat{\beta})\}$ is non-empty.
\end{lemma}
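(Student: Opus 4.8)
The plan is to exploit the two structural facts furnished by Proposition \ref{prop:mm}: that $f_\eps$ is convex, coercive (its sublevel sets are compact), and that $f_\eps \to f$ uniformly on compacta as $\eps \to 0$. First I would fix a minimizer $\widehat\beta$ of $f$ (which exists by the non-emptiness hypothesis) and set $M = f(\widehat\beta)$. The first step is to produce a single compact set $\mathbf{C}$ that contains the tail of the sequence $\{\widehat\beta_{\eps_n}\}$. To do this, note that for every $j$ and every pair $(j,k)\in E$ the function $t \mapsto t - \eps\log(1 + t/\eps)$ is nonnegative, so $f_{\eps}(\beta) \le f(\beta)$ pointwise; applying this at $\widehat\beta$ gives $f_{\eps_n}(\widehat\beta_{\eps_n}) \le f_{\eps_n}(\widehat\beta) \le f(\widehat\beta) = M$. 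Hence every $\widehat\beta_{\eps_n}$ lies in the sublevel set $\{\beta : f_{\eps_n}(\beta) \le M\}$. These sets vary with $n$, so the second step is to pass to a fixed set: since $0 \le f(\beta) - f_{\eps}(\beta) \le \lambda_1 \eps p \log 2 + \lambda_2 \eps |E| \log 2$ is bounded uniformly in $\beta$ by a quantity $\rho_n \to 0$ (using $t - \eps\log(1+t/\eps) \ge t - \eps \cdot (t/\eps) = 0$ for the lower bound and a crude bound such as $\eps\log(1+t/\eps) \le \eps + \dots$—I would instead just use $f_\eps \le f$ together with coercivity directly), we get $f(\widehat\beta_{\eps_n}) \le f_{\eps_n}(\widehat\beta_{\eps_n}) + \rho_n \le M + \rho_1$ for all $n$, so the whole sequence lies in the fixed sublevel set $\mathbf{C} = \{\beta : f(\beta) \le M + \rho_1\}$, which is compact because $f$ is coercive (same argument as for $f_\eps$; or again invoke Proposition \ref{prop:mm}(i)).

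The third step handles a limit point. Let $\beta^\star$ be a limit point of $\{\widehat\beta_{\eps_n}\}$, say $\widehat\beta_{\eps_{n_\ell}} \to \beta^\star$, with $\beta^\star \in \mathbf{C}$. I want to show $f(\beta^\star) = M$. For any $\beta$, using $f_{\eps_{n_\ell}}(\widehat\beta_{\eps_{n_\ell}}) \le f_{\eps_{n_\ell}}(\beta)$ and $|f_{\eps_{n_\ell}} - f| \le \rho_{n_\ell}$ on the compact set $\mathbf C \cup \{\beta\}$ (enlarging $\mathbf C$ to include $\beta$ if necessary; or just take $\beta = \widehat\beta \in \mathbf C$), we get
\begin{align*}
f(\widehat\beta_{\eps_{n_\ell}}) \le f_{\eps_{n_\ell}}(\widehat\beta_{\eps_{n_\ell}}) + \rho_{n_\ell} \le f_{\eps_{n_\ell}}(\widehat\beta) + \rho_{n_\ell} \le f(\widehat\beta) + \rho_{n_\ell} = M + \rho_{n_\ell}.
\end{align*}
Letting $\ell \to \infty$ and using continuity of $f$ (Proposition \ref{prop:mm}(i) / the obvious continuity of the true objective) gives $f(\beta^\star) \le M$; since $M$ is the minimum of $f$, equality holds, so $\beta^\star$ is a minimizer of $f$. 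That completes the argument.

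The main obstacle, and the only place requiring care, is the second step: converting the $n$-dependent sublevel sets $\{f_{\eps_n} \le M\}$ into a single fixed compact set independent of $n$. The clean way is the uniform sandwich $f_{\eps} \le f \le f_\eps + \rho(\eps)$ with $\rho(\eps) \to 0$; the upper estimate $f - f_\eps = \lambda_1\sum \eps\log(1+|\beta_j|/\eps) + \lambda_2 \sum \eps\log(1 + |\beta_j-\beta_k|/\eps)$ is \emph{not} bounded uniformly in $\beta$ (it grows logarithmically), so one cannot get a $\beta$-free bound $\rho(\eps)$ this way, and this is the subtle point. The fix is to avoid needing a uniform bound: restrict attention from the outset to $\widehat\beta_{\eps_n}$, which lies in $\{f_{\eps_n}\le M\}$; on this set one has $|\widehat\beta_{\eps_n,j}|/\eps_n$ potentially large, but the relevant inequality $f(\widehat\beta_{\eps_n}) \le M + (\text{something} \to 0)$ can be obtained by the pointwise bound $\eps\log(1+t/\eps) \le \eps\log(1+t/\eps)$ combined with $t \le$ (diameter bound coming from $\|\beta\|_2^2/2 \le f_{\eps}(\beta) + \text{const} \le M + \text{const}$, since the quadratic data term already controls $\|\beta\|_2$ whenever $\lambda_1 > 0$ forces boundedness—actually coercivity of $f_\eps$ gives a radius $R$ with $\{f_{\eps_n} \le M\} \subseteq \{\|\beta\|_2 \le R\}$ for all $n$, and then $\eps_n \log(1 + R/\eps_n) \to 0$). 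So the honest route is: (a) from coercivity get a uniform radius $R$ so all $\widehat\beta_{\eps_n}$ lie in the fixed compact ball $\bar B(0,R)$; (b) on that fixed ball, $\sup_{\|\beta\|\le R} |f(\beta) - f_{\eps_n}(\beta)| \le \lambda_1 p\, \eps_n\log(1+R/\eps_n) + \lambda_2|E|\,\eps_n\log(1+R/\eps_n) =: \rho_n \to 0$ (this is now a legitimate uniform bound because the domain is fixed); (c) run the limit-point argument of step three inside $\bar B(0,R)$. I would present exactly this version, flagging step (a) — the uniform-in-$n$ coercivity radius — as the one nontrivial ingredient, itself an immediate consequence of Proposition \ref{prop:mm}(i) applied together with the bound $f_{\eps_n} \le f_{\eps_1}$ for $n \ge 1$ (monotonicity in $\eps$, since $t - \eps\log(1+t/\eps)$ is increasing in $\eps$), which pins all sublevel sets inside the single compact set $\{f_{\eps_1} \le M\}$.
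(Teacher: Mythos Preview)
Your approach is essentially the same as the paper's: both trap $\widehat\beta_{\eps_n}$ in a sublevel set via $f_{\eps_n}(\widehat\beta_{\eps_n}) \le f_{\eps_n}(\widehat\beta) \le f(\widehat\beta)$, and then pass to the limit using uniform convergence of $f_\eps$ to $f$ on a compact set together with continuity of $f$ to obtain $f(\beta^\star)=f(\widehat\beta)$. You are in fact more scrupulous than the paper on one point: the paper invokes uniform convergence ``on $\Omega(\eps,f(\widehat\beta))$,'' a set that moves with $\eps$, whereas you correctly insist on producing a single fixed compact container and supply one via monotonicity of $f_\eps$ in $\eps$.

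One slip to correct in your last paragraph: the monotonicity direction is reversed. For $t>0$ the map $\eps \mapsto t - \eps\log(1+t/\eps)$ is \emph{decreasing} in $\eps$ (with $u=t/\eps$ its derivative equals $-\log(1+u)+u/(1+u)$, which is $0$ at $u=0$ and has negative $u$-derivative $-u/(1+u)^2$). Since $\eps_n \le \eps_1$, this gives $f_{\eps_n}\ge f_{\eps_1}$, not $f_{\eps_n}\le f_{\eps_1}$. But $f_{\eps_n}\ge f_{\eps_1}$ is exactly the inequality that yields the inclusion $\{f_{\eps_n}\le M\}\subseteq\{f_{\eps_1}\le M\}$ you want, so your final containment and the rest of the argument are correct; only the stated sign of the monotonicity and the intermediate inequality need to be flipped.
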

\begin{proof} 
See Appendix A.2.
\end{proof}

Lemma \ref{lemma:lyapunov} states that $f_{\eps}(\beta)$ serves as a Lyapunov function for the nonlinear recursive updating rule, implicitly defined by the proposed MM algorithm, that generates the sequence $\{\widehat{\beta}_{\eps}^{(r)}\}$. Monotone convergence immediately follows from the descent property of the MM algorithm.%

\begin{lemma}\label{lemma:lyapunov}
Let ${\bf S}$ be a set of stationary points of $f_{\eps} (\beta)$ and $\{ \widehat{\beta}_{\eps}^{(r)} \}_{r\ge 0}$ be a sequence of solutions of the proposed MM algorithm given by $\widehat{\beta}_{\eps}^{(r+1)} = M(\widehat{\beta}_{\eps}^{(r)}) = \argmin_{\beta} g_\eps(\beta|\widehat{\beta}_{\eps}^{(r)})$. 
Then the limit points of $\{\widehat{\beta}_\eps^{(r)} \}_{r \ge 0}$ are stationary points of $f_\eps(\beta)$, i.e, $\partial f_{\eps}(\beta)/\partial \beta = 0$ at these limit points.
Moreover, $f_{\eps}(\widehat{\beta}_{\eps}^{(r)})$ converges monotonically to $f_\eps(\beta_{\eps}^*)$ for some $\beta_{\eps}^* \in {\bf S}$.
\end{lemma}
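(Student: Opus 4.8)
The plan is to invoke the global convergence theory for MM-type iteration maps, following the template of Zangwill's global convergence theorem as adapted to MM algorithms (see \citet{Vaida2005} or \citet{Lange2000}), using $f_\eps$ itself as the descent/Lyapunov function. First I would record the descent inequality: by Proposition \ref{prop:mm}(iii), $g_\eps(\cdot\,|\beta')$ majorizes $f_\eps$ at $\beta'$ and has a unique minimizer $M(\beta') = \argmin_\beta g_\eps(\beta|\beta')$; hence
\[
f_\eps\big(M(\beta')\big) \le g_\eps\big(M(\beta')\,\big|\,\beta'\big) \le g_\eps\big(\beta'\,\big|\,\beta'\big) = f_\eps(\beta'),
\]
with equality in the first and last steps forcing $\beta'$ to be a minimizer of $g_\eps(\cdot|\beta')$, i.e. $\beta' = M(\beta')$. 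This gives strict decrease of $f_\eps$ along the iterates unless a fixed point is reached, and since $\partial g_\eps(\beta|\beta')/\partial\beta = 0$ at $\beta = M(\beta')$ reduces (by direct differentiation, the same computation that produced \eqref{eqn:lin-sys}) to $\partial f_\eps(\beta')/\partial\beta = 0$ when $\beta' = M(\beta')$, the fixed points of $M$ are exactly the stationary points $\mathbf{S}$ of $f_\eps$.

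Next I would establish that the iterate sequence stays in a compact set. By the descent property, $\{\widehat\beta_\eps^{(r)}\}$ lies in the sublevel set $\{\beta : f_\eps(\beta) \le f_\eps(\widehat\beta_\eps^{(0)})\}$, which by Proposition \ref{prop:mm}(i) (continuity plus coercivity, $\lim_{\|\beta\|_2\to\infty} f_\eps(\beta) = \infty$) is compact. Therefore $\{\widehat\beta_\eps^{(r)}\}$ has at least one limit point, and every limit point lies in this sublevel set. Since $f_\eps$ is continuous and monotonically decreasing along the sequence and bounded below, $f_\eps(\widehat\beta_\eps^{(r)})$ converges to some limit $f^\star$; and by continuity, $f_\eps$ takes the common value $f^\star$ at every limit point.

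Then I would apply the convergence argument: the map $M$ is continuous (the solution of the linear system \eqref{eqn:lin-sys} depends continuously on $\widehat\beta_\eps^{(r)}$ through the coefficient matrices $\mathbf{A}^{(r)}, \mathbf{B}^{(r)}$, which are continuous in $\widehat\beta_\eps^{(r)}$ because $\eps > 0$ keeps all denominators bounded away from $0$, and the matrix $\mathbf{X}^T\mathbf{X} + \lambda_1 \mathbf{A}^{(r)} + \lambda_2 \mathbf{B}^{(r)}$ is positive definite hence invertible with continuous inverse). Let $\beta^\star$ be a limit point, say $\widehat\beta_\eps^{(r_k)} \to \beta^\star$. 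By continuity of $M$, $\widehat\beta_\eps^{(r_k+1)} = M(\widehat\beta_\eps^{(r_k)}) \to M(\beta^\star)$; by continuity of $f_\eps$ and convergence of the whole sequence of function values, $f_\eps(M(\beta^\star)) = f^\star = f_\eps(\beta^\star)$. The descent inequality above with equality then forces $\beta^\star = M(\beta^\star)$, i.e. $\beta^\star \in \mathbf{S}$, which proves that every limit point is a stationary point. Finally, taking any fixed limit point $\beta_\eps^\star \in \mathbf{S}$, we have $f_\eps(\widehat\beta_\eps^{(r)}) \to f_\eps(\beta_\eps^\star)$ monotonically, which is the last assertion.

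The main obstacle I anticipate is the standard soft spot in this kind of argument: the descent inequality alone guarantees that limit points are \emph{fixed points of $M$}, and I must make sure that a fixed point of $M$ is genuinely a stationary point of $f_\eps$ (not merely of the majorizer). This follows here because $g_\eps$ and $f_\eps$ share the same gradient at the point of tangency $\beta'$ — a consequence of the tangency condition $f_\eps(\beta') = g_\eps(\beta'|\beta')$ together with $f_\eps \le g_\eps(\cdot|\beta')$, which forces equality of the (one-sided, but here actual since $g_\eps$ is smooth and $f_\eps$ is differentiable because $\eps>0$ smooths the kinks) derivatives — but this needs to be stated carefully rather than asserted. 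A secondary, more technical point is continuity of the iteration map $M$ near points where some $\widehat\beta_j^{(r)}$ coordinates coincide; the perturbation $\eps > 0$ is exactly what rescues continuity there, so I would flag explicitly where $\eps > 0$ is used.
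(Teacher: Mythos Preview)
Your proposal is correct and follows the same overall architecture as the paper's proof: descent via majorization, compactness of sublevel sets from Proposition~\ref{prop:mm}(i), continuity of the iteration map $M$, and a Zangwill-type global convergence argument. The paper packages the last step by citing Zangwill's Convergence Theorem A explicitly (stated as a separate claim) and then verifying its three hypotheses, whereas you unwind that theorem into a direct subsequence argument (pass to a limit point $\beta^\star$, use continuity of $M$ and convergence of $f_\eps$-values to force $\beta^\star = M(\beta^\star)$). Your route is slightly more self-contained; the paper's is more modular.

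One genuine difference worth noting is the continuity proof for $M$. The paper argues abstractly: take $\alpha_k \to \tilde\alpha$, trap $\{M(\alpha_k)\}$ in a compact sublevel set, extract a convergent subsequence, and use joint continuity of $g_\eps(\beta|\alpha)$ plus uniqueness of the minimizer to identify the limit as $M(\tilde\alpha)$. You instead exploit the explicit closed form \eqref{eqn:lin-sys}: $M(\alpha)$ is the solution of a linear system whose coefficient matrix depends continuously on $\alpha$ (because $\eps>0$ keeps all denominators in $\mathbf{A}^{(r)}$, $\mathbf{B}^{(r)}$ bounded away from zero) and is positive definite, so matrix inversion is continuous. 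Your argument is shorter and makes the role of the perturbation $\eps$ more transparent; the paper's argument would survive even without an explicit formula for $M$. Both are fine here.

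Your flagged ``soft spot''---that fixed points of $M$ must be shown to be stationary points of $f_\eps$, via the tangency of $g_\eps(\cdot|\beta')$ and $f_\eps$ at $\beta'$---is exactly the content the paper leaves implicit when it asserts that condition~(ii) of Zangwill's theorem follows from strict convexity of $g_\eps$. You are right to make it explicit.
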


\begin{proof}
See Appendix A.3.
\end{proof}

Combining Lemmas 1 and 2, it is straightforward to see that the MM algorithm generates solutions that converge to an optimal solution of the FLR problem \eqref{eqn:obj} as $\eps$ decreases to zero.
\begin{theorem}
The sequence of the solutions $\{\widehat{\beta}_{\eps}^{(r)}\}_{r\ge 0}$ generated by the proposed MM algorithm converges to a minimizer of $f(\beta)$. 
Moreover, the sequence of functionals $\{f_{\eps}(\widehat{\beta}_{\eps}^{(r)})\}_{r\ge 0}$ converges to the minimum value of $f(\beta)$.
\end{theorem}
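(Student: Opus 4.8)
The plan is to read this theorem as the statement about the iterated limit $\lim_{\eps\downarrow0}\big(\lim_{r\to\infty}\widehat\beta_\eps^{(r)}\big)$, as the surrounding discussion makes clear, and to obtain it by chaining the two lemmas: Lemma~\ref{lemma:lyapunov} controls the behaviour of the MM iterates for a \emph{fixed} $\eps$, and Lemma~\ref{lemma:perturbed} controls the perturbed minimizers as $\eps\downarrow0$. Two small bridges are needed: a strict-convexity remark that upgrades ``limit points are stationary'' in Lemma~\ref{lemma:lyapunov} to genuine convergence to a single point, and a compactness/uniform-convergence argument that converts convergence of the iterates into convergence of the objective values to $\min f$.

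First I would fix $\eps>0$ and observe that $f_\eps$ is in fact \emph{strictly} convex: the data-fit term is convex, and each summand $t\mapsto |t|-\eps\log(1+|t|/\eps)$ appearing in the $\lambda_1$-part is strictly convex (its second derivative equals $\eps/(|t|+\eps)^2>0$ off the origin and it behaves like $t^2/(2\eps)$ near $0$), so since $\lambda_1>0$ the whole of $f_\eps$ is strictly convex in $\beta$ irrespective of $\mathrm{rank}({\bf X})$. Together with continuity and coercivity from Proposition~\ref{prop:mm}(i) this yields a unique minimizer $\widehat\beta_\eps$, which is therefore the unique stationary point; hence the set ${\bf S}$ in Lemma~\ref{lemma:lyapunov} is the singleton $\{\widehat\beta_\eps\}$. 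Next I would run Lemma~\ref{lemma:lyapunov}: by the MM descent property the whole sequence $\{\widehat\beta_\eps^{(r)}\}_{r\ge0}$ lies in the sublevel set $\{\beta:f_\eps(\beta)\le f_\eps(\widehat\beta_\eps^{(0)})\}$, which is compact by Proposition~\ref{prop:mm}(i); a bounded sequence all of whose limit points (by Lemma~\ref{lemma:lyapunov}) coincide with $\widehat\beta_\eps$ must converge to $\widehat\beta_\eps$, and by continuity $f_\eps(\widehat\beta_\eps^{(r)})\to f_\eps(\widehat\beta_\eps)=\min_\beta f_\eps(\beta)$.

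Then I would send $\eps=\eps_n\downarrow0$. Since $\eps\log(1+t/\eps)\ge0$ we have $f_\eps\le f$ pointwise, so $f_\eps(\widehat\beta_\eps)=\min f_\eps\le f(\widehat\beta)=\min f$; combined with the elementary bound $\eps\log(1+t/\eps)\le 2\sqrt{\eps t}$ and the $\ell_1$-coercivity $f(\beta)\ge\lambda_1\|\beta\|_1$, this shows that for $\eps\le1$ the sublevel sets $\{\beta:f_\eps(\beta)\le\min f\}$, and in particular the minimizers $\widehat\beta_{\eps_n}$, all lie in one fixed compact ball $K$. Lemma~\ref{lemma:perturbed} (whose hypothesis holds because $f$ is continuous and coercive, hence attains its minimum) then gives that every limit point of $\{\widehat\beta_{\eps_n}\}$ minimizes $f$, which is the first assertion; if $\argmin f$ is a singleton, or upon passing to a subsequence, $\widehat\beta_{\eps_n}$ converges to a minimizer of $f$. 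For the second assertion I would enlarge $K$ to contain a minimizer of $f$ and apply the uniform convergence $f_{\eps_n}\to f$ on $K$ from Proposition~\ref{prop:mm}(ii) to get $\liminf_n\min f_{\eps_n}=\liminf_n\min_K f_{\eps_n}\ge\min_K f=\min f$, which together with $\min f_{\eps_n}\le\min f$ gives $\min f_{\eps_n}\to\min f$; chaining with the fixed-$\eps$ conclusion yields $\lim_{\eps\downarrow0}\lim_{r\to\infty}f_\eps(\widehat\beta_\eps^{(r)})=\min f$.

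The step I expect to be the main obstacle is the non-escape of the perturbed minimizers: Proposition~\ref{prop:mm}(ii) only supplies uniform convergence on compact sets, so everything hinges on showing that the $\widehat\beta_\eps$ stay in a fixed compact $K$ as $\eps\to0$, which is precisely what the sandwich $f_\eps\le f$ combined with the $\ell_1$-coercivity of $f$ provides. A secondary, essentially cosmetic, point is that the theorem conflates the limits $r\to\infty$ and $\eps\to0$ and speaks of ``a minimizer'' of $f$; I would make explicit that the statement is the iterated limit and that, absent uniqueness of $\argmin f$, one obtains convergence of every subsequential limit to a minimizer.
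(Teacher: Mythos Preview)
Your approach is essentially the same as the paper's: both chain Lemma~\ref{lemma:lyapunov} (convergence of the MM iterates for fixed $\eps$) with Lemma~\ref{lemma:perturbed} (behaviour of the perturbed minimizers as $\eps\downarrow0$) via a triangle-inequality/iterated-limit argument. Your version is in fact more careful than the paper's terse proof---your strict-convexity remark upgrading ``limit points are stationary'' to genuine convergence of $\widehat\beta_\eps^{(r)}$ to a single point, and your explicit uniform-in-$\eps$ compactness bound for $\{\widehat\beta_\eps\}$, fill in steps the paper leaves implicit.
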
 
\begin{proof} 
Note for the first part
\[
\big\| \widehat{\beta}_{\eps}^{(r)} - \beta^{**} \big\| \le \big\| \widehat{\beta}_{\eps}^{(r)} - \beta_{\eps}^*  \big\| + \big\| \beta_{\eps}^* - \beta^{**} \big\|, 
\]
where $\beta_{\eps}^*$ is a stationary point of, hence minimizes, $f_{\eps}(\beta)$, and $\beta^{**}$ is a limit point of $\{\beta_{\eps}^*\}$ as $\eps \downarrow 0$.
The first term in the right-hand side becomes arbitrarily small for sufficiently large $r$ by lemma \ref{lemma:lyapunov}, 
whereas the second term does for sufficiently small $\eps$ by lemma \ref{lemma:perturbed}. The limit point $\beta^{**}$ is a minimizer of $f(\beta)$ also by lemma \ref{lemma:perturbed}.
Now we see 
\[
|| f_{\eps}(\widehat{\beta}_{\eps}^{(r)})-f(\beta^{**}) || \le
|| f_{\eps}(\widehat{\beta}_{\eps}^{(r)})-f_{\eps}(\beta_{\eps}^*) || +
|| f_{\eps}(\widehat{\beta}_{\eps})-f(\beta_{\eps}^{*}) || +
|| f(\beta_{\eps}^{*})-f(\beta^{**}) ||.
\]
The first in the right-hand side vanishes by the continuity of $f_{\eps}(\beta)$; the second term by the uniform convergence of $f_{\eps}(\beta)$ to $f(\beta)$, as shown in the proof of lemma \ref{lemma:perturbed}; and the third term by the continuity of $f(\beta)$.
\end{proof}

\section{Parallelization of the MM algorithm with GPU}

The graphics processing unit (GPU) was originally developed as a co-processor to reduce the workload of the central processing unit (CPU) for computationally expensive graphics operations. A GPU has a collection of thousands of light-weight computing cores,  which makes it suitable for massively parallel applications \citep{Owens2008}. While early GPUs could only perform simple bitmap processing for generating pixels on screen, as the applications require high-level three-dimensional graphics performance it has become possible that users can program so-called shader controlling directly each step of the graphics pipeline that generates pixels from the three-dimensional geometry models.
As a GPU generates a massive number of pixels on screen simultaneously, shader programming can be understood as conducting the identical computation with many geometric primitives (such as vertices) as the input to determine the output, or the intensities of the pixels.

Thus for a general-purpose computing problem, if the problem has \emph{data-level parallelism}, that is, on many portions of data the computation can be executed at the same time, then by writing an appropriate shader program we can map the data to the geometric primitives and can parallelize the general-purpose computing.
Due to this single-instruction multiple-data (SIMD) parallel architecture together with its economic attraction, the use of the GPU has been expanded to general-purpose scientific computing beyond graphics computation. As the demand in general-purpose computing increases, both the architecture and the library for the GPU have been evolved to support such computation, e.g., NVIDIA's Compute Unified Device Architecture (CUDA) \citep{Kirk2010,Farber2011}.
As mentioned above, applications in which the GPU is effective involve separation of data and parameters, such as nonnegative matrix factorization, positron emission tomography, and multidimensional scaling  \citep{Zhou2010}.

For the standard FLR, the PCG step in (\ref{eqn:lin-sys})
that is frequently encountered in the inner iteration
of Algorithm \ref{mm_pcg} can be greatly accelerated by the
GPU parallelization. 
The motivation of the parallelization comes from
the parallel algorithms such as
cyclic reduction (CR), parallel cyclic reduction (PCR), and
recursive doubling (RD) \citep{Hockney1965,Hockney1981,Stone1973} that can solve a tridiagonal system to which \eqref{eqn:lin-sys} reduces. 
 Although these parallel algorithms require more operations 
 than a basic Gaussian elimination for tridiagonal system \citep{Zhang2010},
they can be executed more efficiently than the basic Gaussian elimination if many cores are utilized simultaneously.
Moreover, a hybrid algorithm, which combines the CR and the PCR,
has been developed for GPUs recently \citep{Zhang2010} and is implemented in 
CUDA 
sparse matrix library (\texttt{cuSPARSE}) as the function \texttt{cusparseDgtsv}.
We use this function to solve the linear system ${\bf M}^{(r)} {\bf z_j} = {\bf r_j}$ in the GPU implementation of Algorithm \ref{mm_pcg}.

In addition to the tridiagonal system, 
Algorithm \ref{mm_pcg} also contains a set of
linear operations suitable for GPU parallelization.
For instance, lines 7, 11--13, and 15--17 rely on basic linear operations such as matrix-matrix multiplication, matrix-vector multiplication, vector-vector addition, inner-product, and $\ell_2$-norm computation. 
These operations involve identical arithmetic for 
each coordinate, hence the SIMD architecture of GPU is appropriate
for parallelizing the algorithm.
The CUDA 
basic linear algebra subroutines
library (\texttt{cuBLAS}) provides an efficient implementation of such operations, hence we use \texttt{cuBLAS} for parallelizing lines 7, 11--13, and 15--17
with GPU.

Unfortunately, our parallelization of the PCG step
is limited to the standard FLR and at this moment is
not extendable to the generalized fusion penalty.
In this case, we recommend to solve the linear system
\eqref{eqn:lin-sys} with LAPACK functions \texttt{dpotrf} and \texttt{dpotri} using CPUs as in Algorithm \ref{mm_general}. Currently this is more efficient than using GPUs.

\section{Numerical studies}\label{sec:numerical}

In this section, we compare the performance of the MM algorithm with
the other existing algorithms using several data sets
that generally fit into the ``large $p$, small $n$'' setting.
The general QP solvers are excluded in the comparison,
since they are slower than other specialized algorithms
for the FLR.
These numerical studies include five scenarios for the FLR problem.
In the first three scenarios,
we consider the standard FLR with various sparsities of the true
coefficients. 
Since the path algorithms
have restrictions on solving the FLR with
general design matrices,
we only compare the proposed MM algorithms 
(with and without GPU parallelization, denoted by \textsf{MMGPU} and \textsf{MM}, respectively) with \textsf{EFLA}, \textsf{SPG}, and \textsf{SB} algorithms.
In next two scenarios,
we consider the two-dimensional FLR  with a general design matrix
and its application to the image denoising problem.
In the two-dimensional FLR with a general design matrix,
only \textsf{MM}, \textsf{SPG}, and \textsf{SB} are available.
In the image denoising problem, which is equivalent to the two-dimensional FLSA problem,
we additionally consider the \textsf{pathFLSA} algorithm. 
The \textsf{MM} and \textsf{MMGPU} algorithms are implemented in C and CUDA C.

The convergence of the algorithm is measured using a relative error of the objective function at
each iteration. The relative error at the $r$th iteration is defined as
${\rm RE}(\widehat{\beta}^{(r)}) = |f(\widehat{\beta}^{(r)}) - f(\widehat{\beta}^{(r-1)})|/f(\widehat{\beta}^{(r-1)})$, 
where  $\widehat{\beta}^{(r)}$ is  the estimate in $r$th iteration (see Algorithms \ref{mm_general} and \ref{mm_pcg}). 
All the algorithms terminate
when the relative error becomes smaller than $10^{-5}$ or the number of iterations exceeds $10000$. The \textsf{SB} and \textsf{MM} algorithms have additional parameters 
$\mu$ and $\epsilon$, respectively. The details of choosing $\mu$ of the \textsf{SB} are given in Appendix B. In the \textsf{MM} and \textsf{MMGPU}, we set the perturbation $\eps = 10^{-8}$ to avoid machine precision error.\footnote[1]{Although setting $\eps$ a constant does not exactly satisfy the condition for Theorem 1, this choice of constant is sufficiently small within the machine precision and to prevent divide by zero.}

The algorithms are compared in the computation the time and the number of iterations for aforementioned five scenarios. We also investigate the sensitivity of the SB algorithm and the MM algorithm to the choice of their additional parameters $\mu$ and $\epsilon$. All algorithms are implemented in MATLAB except for \textsf{pathFLSA}, The computation times are measured in CPU time by using a desktop PC (Intel Core2 extreme X9650 CPU (3.00 GHz) and 8 GB RAM) with NVIDIA GeForce GTX 465 GPU.

\subsection{Standard FLR}

We consider three scenarios for the standard FLR and 
check the efficiency and the stability of the five algorithms, i.e.,
\textsf{MMGPU}, \textsf{MM}, \textsf{EFLA}, \textsf{SPG}, and \textsf{SB}, with four sets of
regularization parameters $(\lambda_1,\lambda_2)
\in \big\{ (0.1,0.1), (0.1,1),(1,0.1),(1,1) \big\}$.\footnote[2]{
These choices of regularization parameters are similar
in \cite{,Liu2010, Lin2011, Ye2011}.} 
We generate $n$ samples $X_1, X_2, \ldots, X_n$ from  a $p$-dimensional multivariate
normal distribution $N({\bf 0}, {\bf I}_p)$.
The response variable ${\bf y}$ is generated from the model
\begin{equation} \label{eqn:gener}
 {\bf y} = {\rm \bf X} {\bf \beta} + \epsilon,
\end{equation}
where $\epsilon \sim N\big(0,{\bf I}_n\big)$ and ${\rm \bf X} = (X_1,
X_2, \ldots, X_n)^T$.
To examine the performance with the dimension $p$,
we try $p = 200,~1000,~10000,~{\rm and}~ 20000$
for sample size $n=1000$.
 We generate 10 data sets according to the following scenarios ${\bf (C1)}$--${\bf (C3)}$.  
\begin{itemize}
\item[{\bf (C1)}] Sparse case.

Following \cite{Ye2011}, we set 41 coefficients of $\beta$ nonzero:
\begin{equation} \nonumber
\beta_j= \left\{
\begin{array}{lll}
2 & & {\rm for}~~j=1,2,\ldots,20, 121,\ldots,125\\
3 & & {\rm for}~~j = 41\\
1 & & {\rm for}~~j=71,\ldots,85\\
0 & & {\rm otherwise}
\end{array} \right.
\end{equation}

\item[{\bf (C2)}] Moderately sparse case.

Following \cite{Lin2011}, we set $30$\% of the coefficients nonzero:
\begin{equation} \nonumber
\beta_j= \left\{
\begin{array}{lll}
1 & & {\rm for}~~ j=p/10+1,~ p/10+2,~\ldots,~2p/10\\
2 & & {\rm for}~~ j=2p/10 + 1,~2p/10+2,~\ldots,~4p/10\\
0 & & {\rm otherwise}
\end{array} \right.
\end{equation}

\item[{\bf (C3)}] Dense case.

We set all the coefficients nonzero: 
\begin{equation} \nonumber
\beta_j= \left\{
\begin{array}{lll}
1 & & {\rm for}~~ j=1,~ 2,~\ldots,~5p/10\\
-1 & & {\rm for}~~ j=5p/10 + 1,~5p/10+2,~\ldots,~p\\
\end{array} \right.
\end{equation}

\end{itemize}

\begin{figure}[htb!] 
\begin{center}
  \begin{minipage}[b]{.32\linewidth}
  \centering   \centerline{\epsfig{file=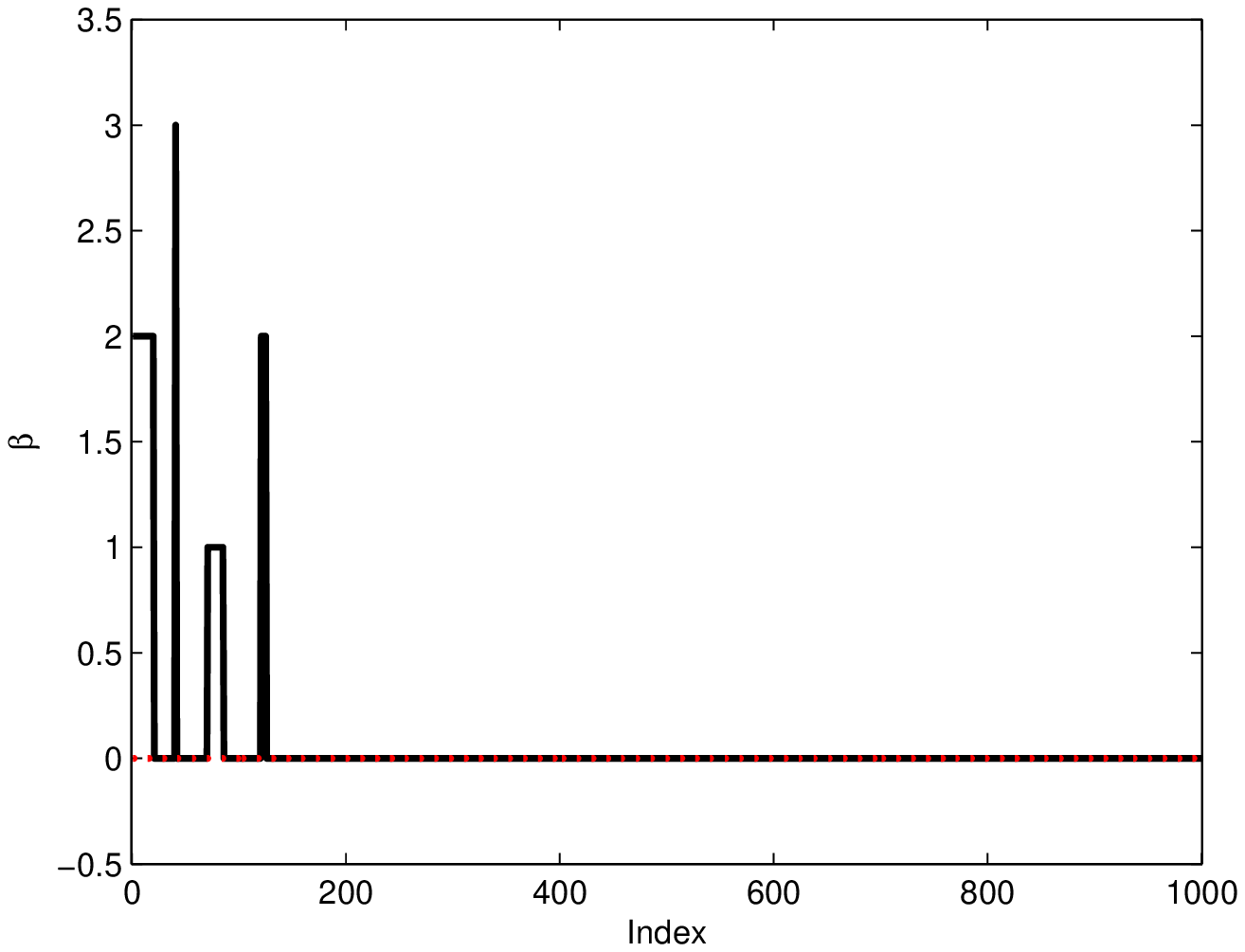,width=\textwidth,height=0.22\textheight}}
  \centerline{(a) Case 1 {\bf (C1)}}
 \end{minipage}
\medskip
  \begin{minipage}[b]{.32\linewidth}
  \centering   \centerline{\epsfig{file=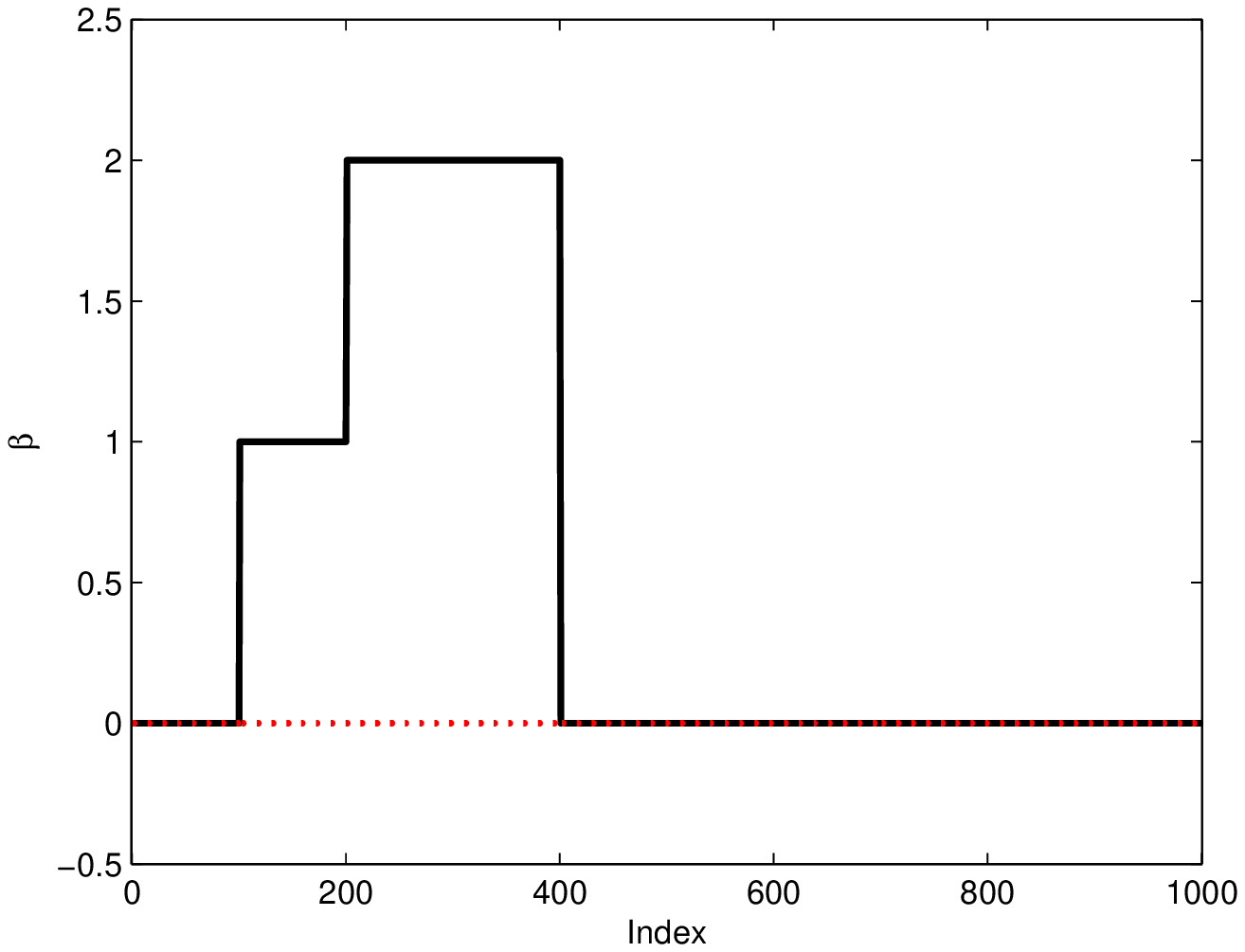,width=\textwidth,height=0.22\textheight}}
  \centerline{(b) Case 2 {\bf(C2)}}
\end{minipage}
\medskip
  \begin{minipage}[b]{.32\linewidth}
  \centering   \centerline{\epsfig{file=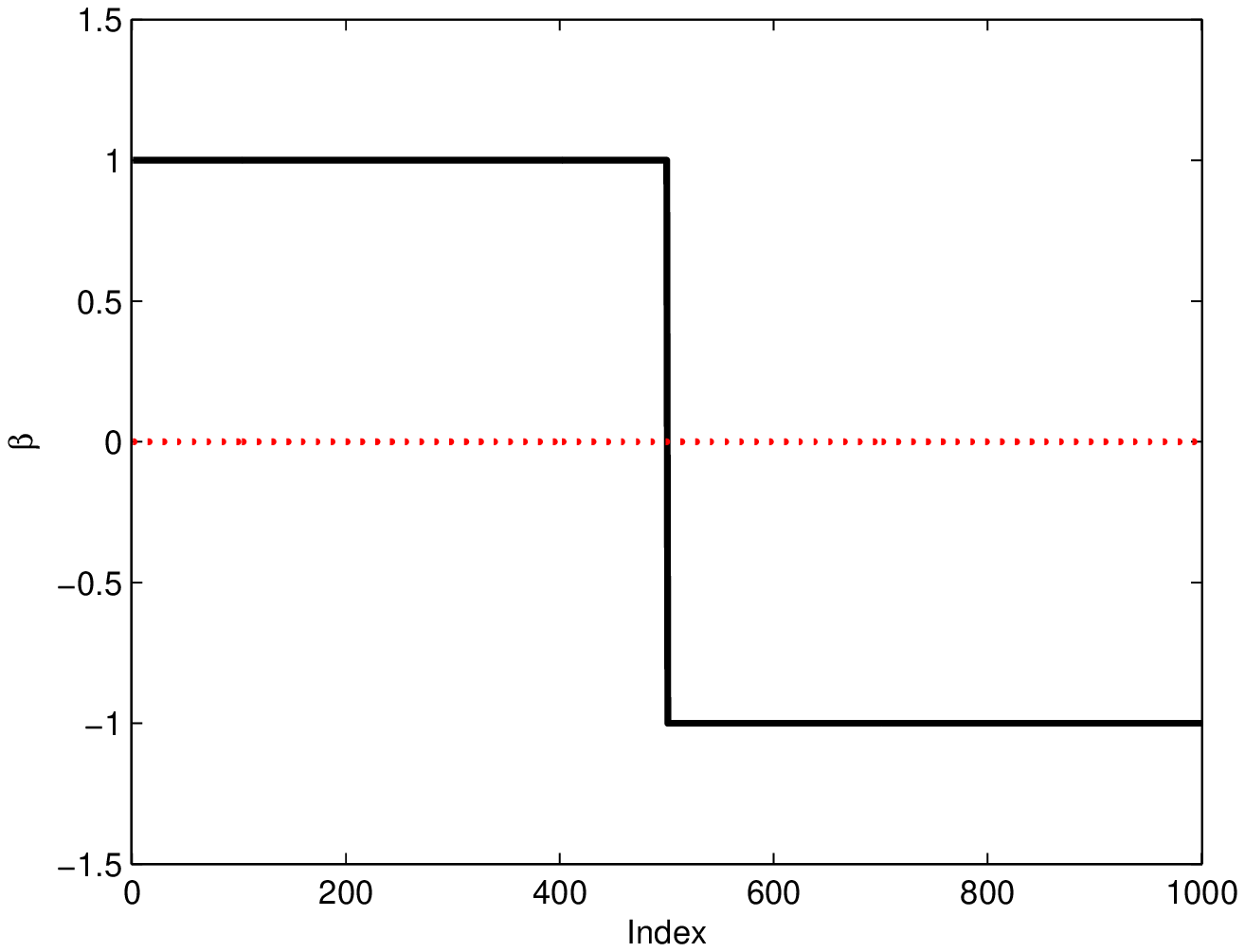,width=\textwidth,height=0.22\textheight}}
  \centerline{(c) Case 3 {\bf(C3)}}
\end{minipage}

\caption{Plots of the true coefficients in ${\bf (C1)}$--${\bf (C3)}$.}
\label{tr_c1} 
\end{center}
\end{figure}
\noindent These true coefficient structures are illustrated in 
 Figure \ref{tr_c1}.

We report the average computation time and the average number of iterations  
in Appendix C, Tables C.1--C.3, and summarize the average computation times in Figure \ref{res_c1}.
Excluding the \textsf{MMGPU} (discussed below),
the \textsf{EFLA} is generally the fastest in most of the scenarios considered.
For relatively dense case ({\bf (C2)} and {\bf (C3)})
with a small penalty ($\lambda_1=\lambda_2=0.1$)
for large dimensions ($p=10000, 20000$), MM is the fastest.
We see that the \textsf{MM} is $1.14 \sim 4.85$  times slower than the \textsf{EFLA},
and is comparable to the \textsf{SPG} and similar to the \textsf{SB} for high dimensions ($p=10000$ or $20000$),
each of which is the contender for the second place with the \textsf{MM} in
the respective dimensions.
Taking into account that the \textsf{EFLA} is only applicable to the standard FLR,
we believe that the performance of the \textsf{MM} is acceptable.

The benefit of parallelization is
visible in large dimensions.
the \textsf{MMGPU} has within about 10\% of the computation time of the \textsf{MM}
and the fastest among all the algorithms
considered when $p = 10000$ and $20000$.
\textsf{MMGPU} is $2 - 63$ times faster than EFLA,
 which in turn is the fastest among \textsf{MM}, \textsf{SPG}, and \textsf{SB}.
When the dimension is small ($p=200$), 
however, \textsf{MMGPU} is the slowest among the five algorithms.
This is due to the memory transfer overhead
between CPU and GPU. When $p$ is small, memory
transfer occupies most of computation time.
As the dimension increases, the relative portion of
the memory transfer in computation time decreases,
and \textsf{MMGPU} becomes efficient.

Focusing on the number of iterations, we see that the \textsf{MM} converges
within a few tens of iterations whereas other algorithms
require up to a few hundreds of iterations, especially
for large $p$.
The number of iterations of \textsf{MM} is also insensitive to the sparsity
structure and the choice of $\lambda_1$ and $\lambda_2$.
(\textsf{EFLA} and \textsf{SPG} need more iterations as
the coefficients become dense, together with \textsf{SB},
they also require more iterations for small $\lambda_1$ or $\lambda_2$.)
This stability in the number of iterations contribute
the performance gain in \textsf{MMGPU}.
\textsf{MMGPU} speeds up each iteration, hence if the number of iteration
does not rely much on the input, we can consistently
expect a gain by parallelization.

\begin{figure}[htb!] 
\begin{center}
  \begin{minipage}[b]{.44\linewidth}
  \centering   \centerline{\epsfig{file=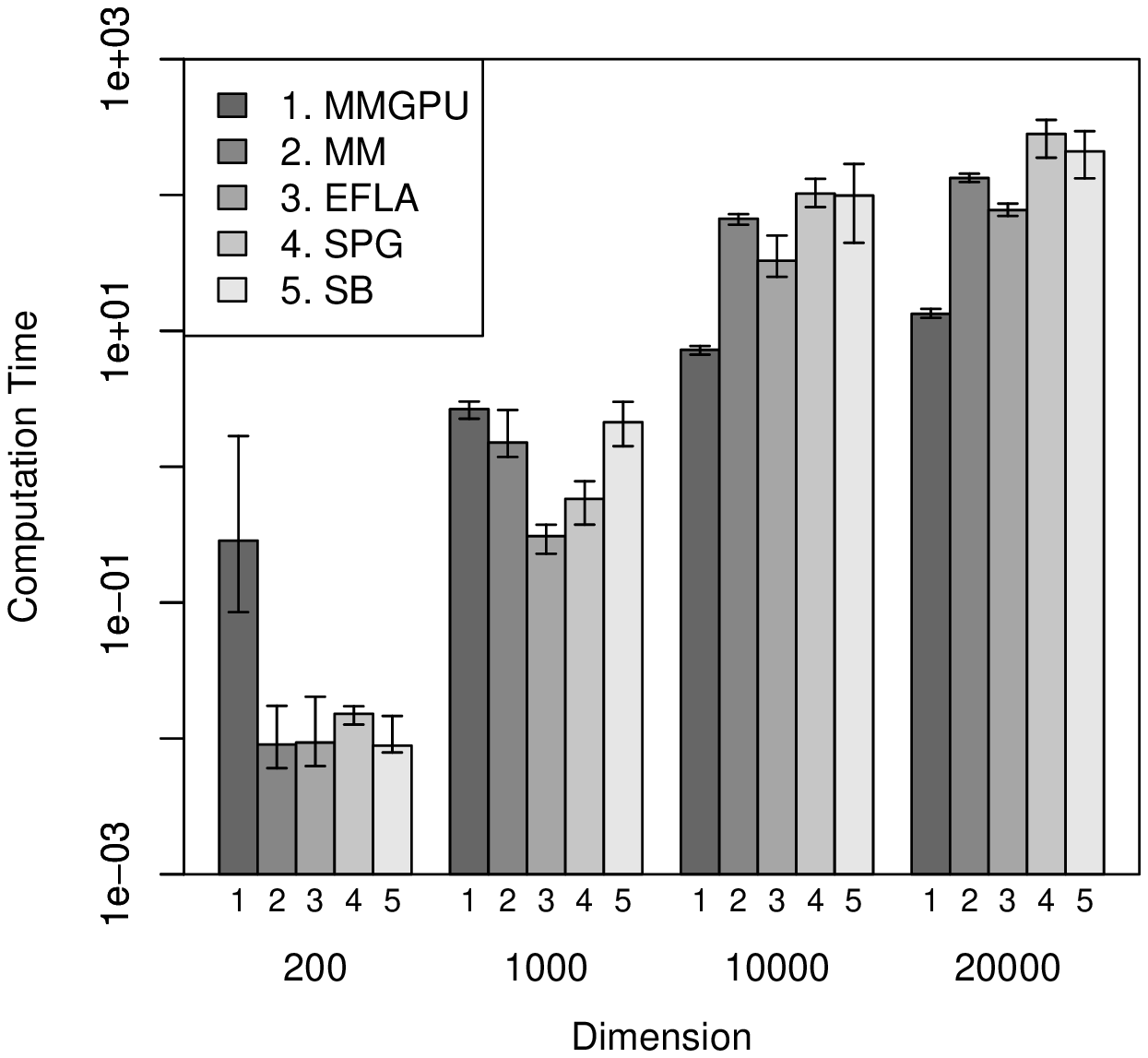,width=0.8\textwidth,
  height=0.21\textheight}}
  \centerline{(a)  {\bf (C1)} with $(\lambda_1,\lambda_2) = (0.1,0.1)$}
 \end{minipage}
\medskip
  \begin{minipage}[b]{.44\linewidth}
  \centering   \centerline{\epsfig{file=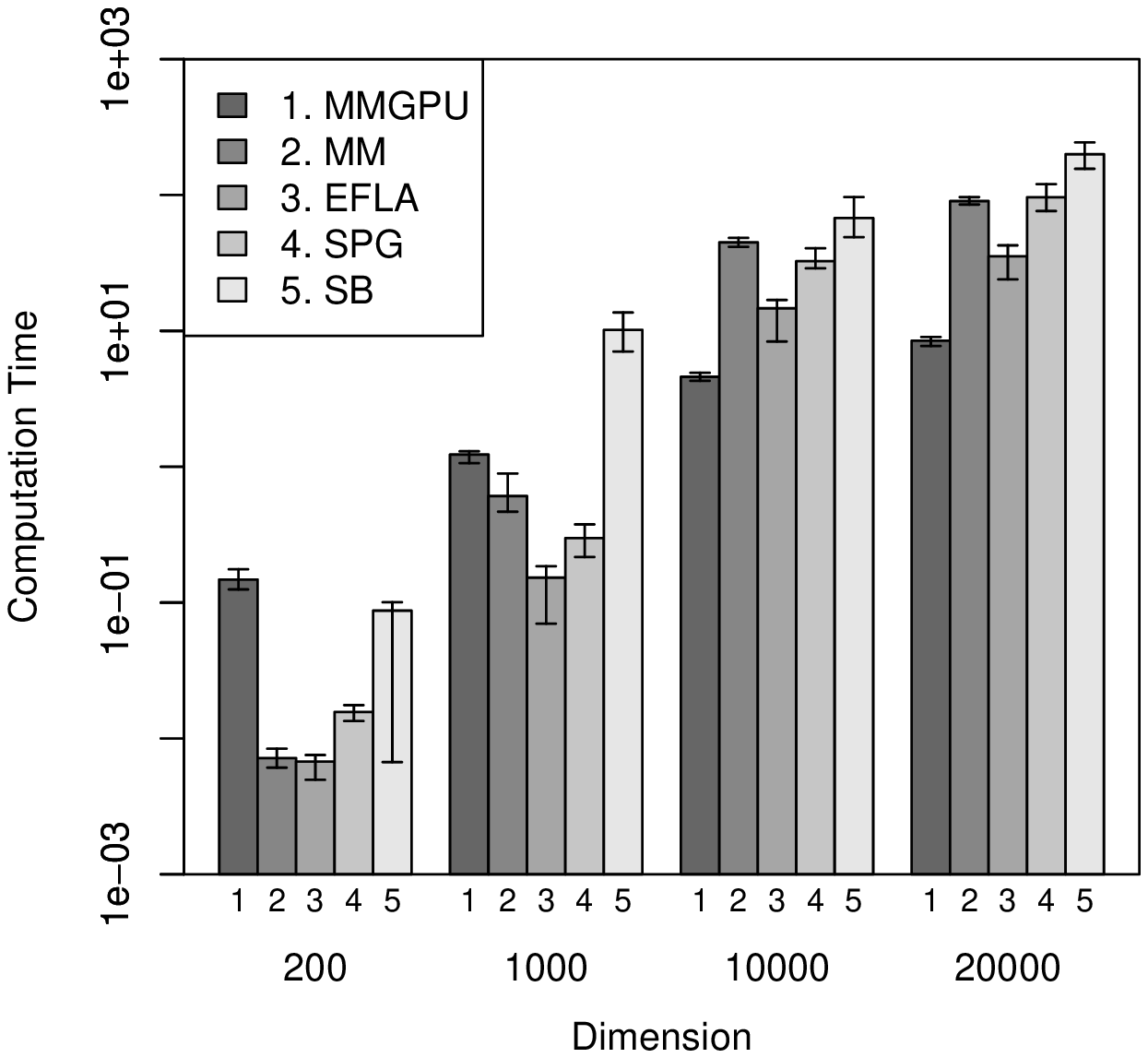,width=0.8\textwidth,
  height=0.21\textheight}}
  \centerline{(b) {\bf (C1)} with $(\lambda_1,\lambda_2) = (1,1)$}
\end{minipage}
\medskip
\begin{minipage}[b]{.44\linewidth}
  \centering   \centerline{\epsfig{file=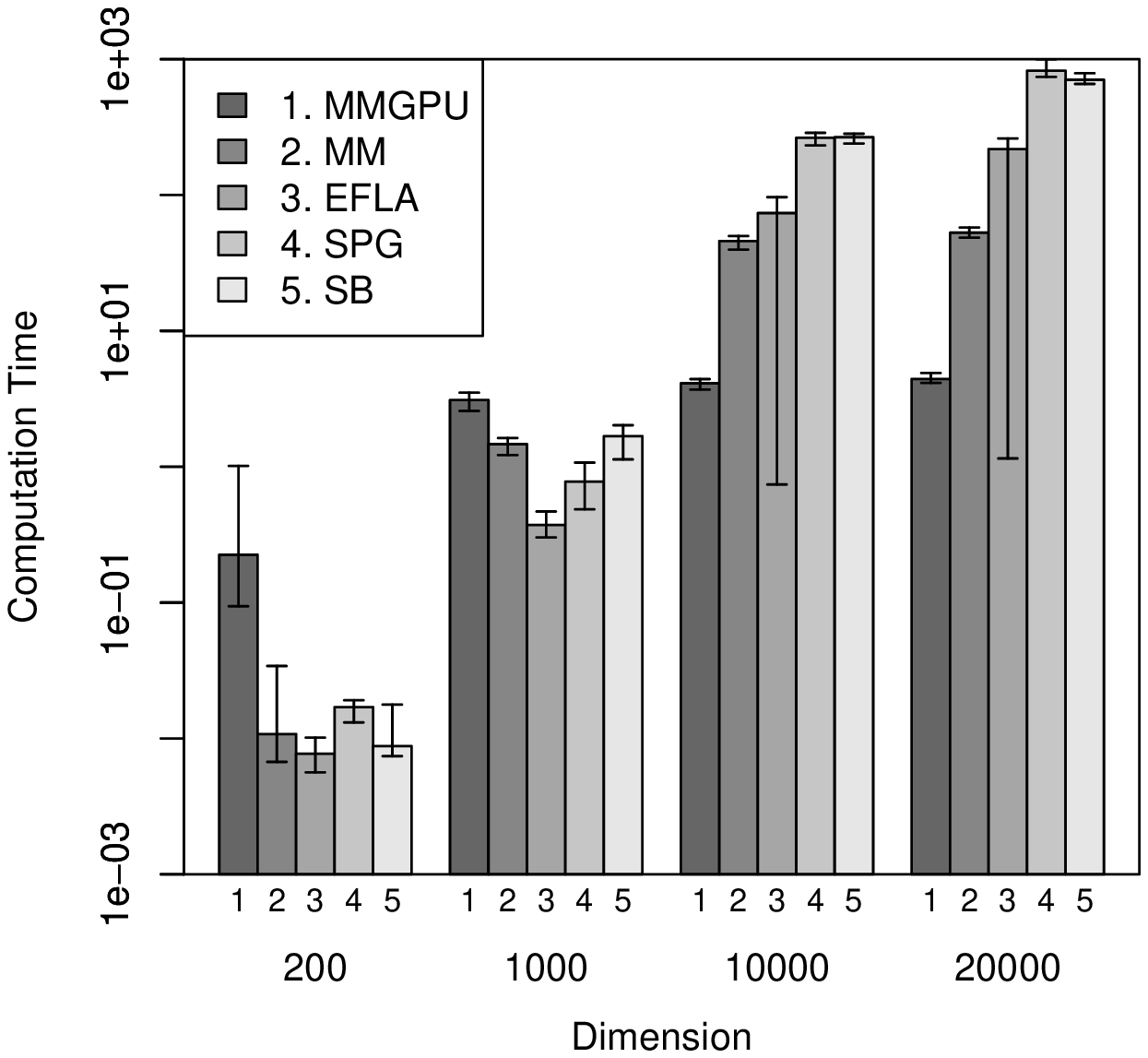,width=0.8\textwidth,
  height=0.21\textheight}}
  \centerline{(c) {\bf (C2)} with $(\lambda_1,\lambda_2) = (0.1,0.1)$}
\end{minipage}
\medskip
  \begin{minipage}[b]{.44\linewidth}
  \centering   \centerline{\epsfig{file=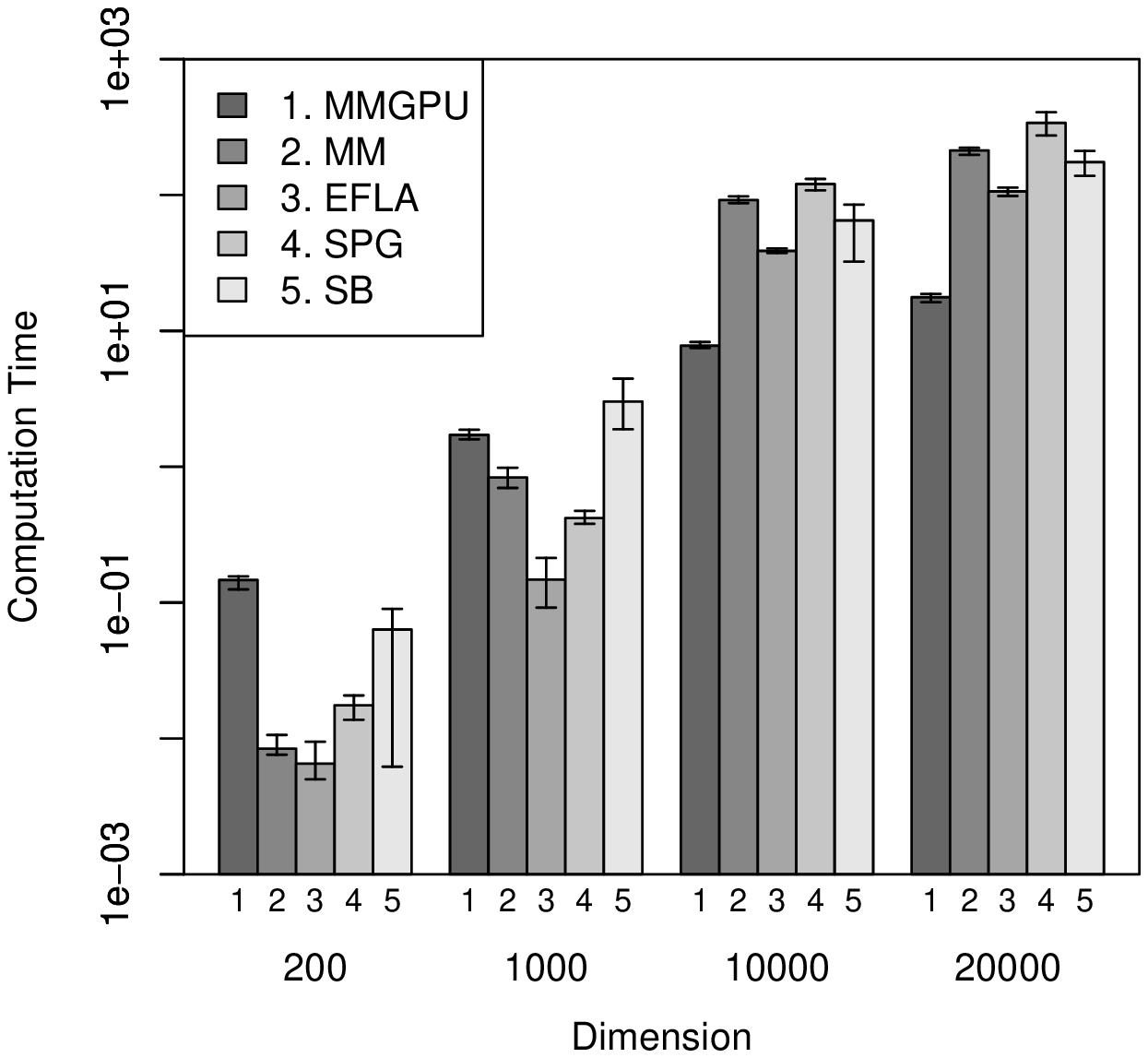,width=0.8\textwidth,
  height=0.21\textheight}}
  \centerline{(d) {\bf (C2)} with $(\lambda_1,\lambda_2) = (1,1)$}
\end{minipage}
\medskip
\begin{minipage}[b]{.44\linewidth}
  \centering   \centerline{\epsfig{file=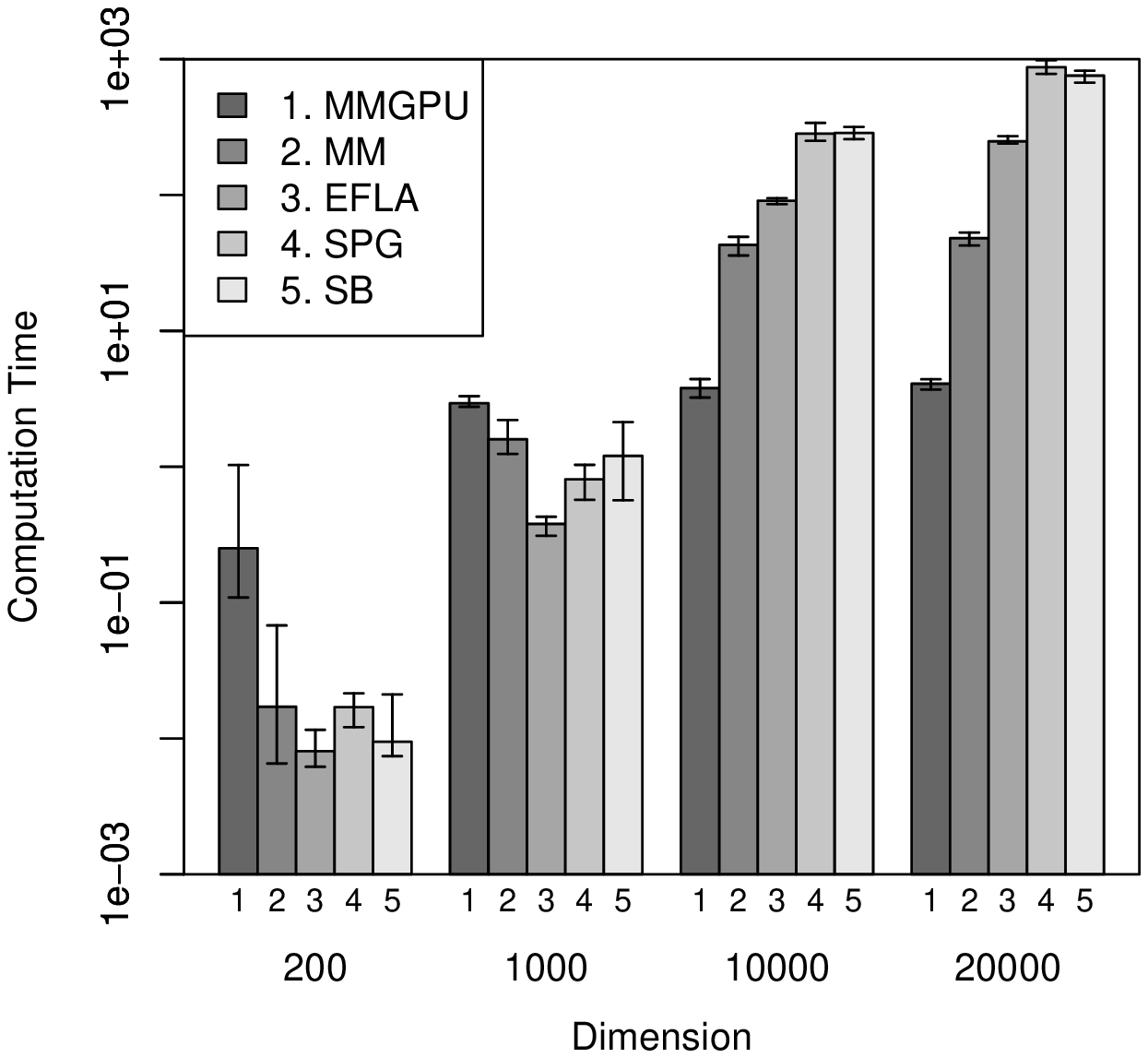,width=0.8\textwidth,
  height=0.21\textheight}}
  \centerline{(e) {\bf (C3)} with $(\lambda_1,\lambda_2) = (0.1,0.1)$}
\end{minipage}
\medskip
  \begin{minipage}[b]{.44\linewidth}
  \centering   \centerline{\epsfig{file=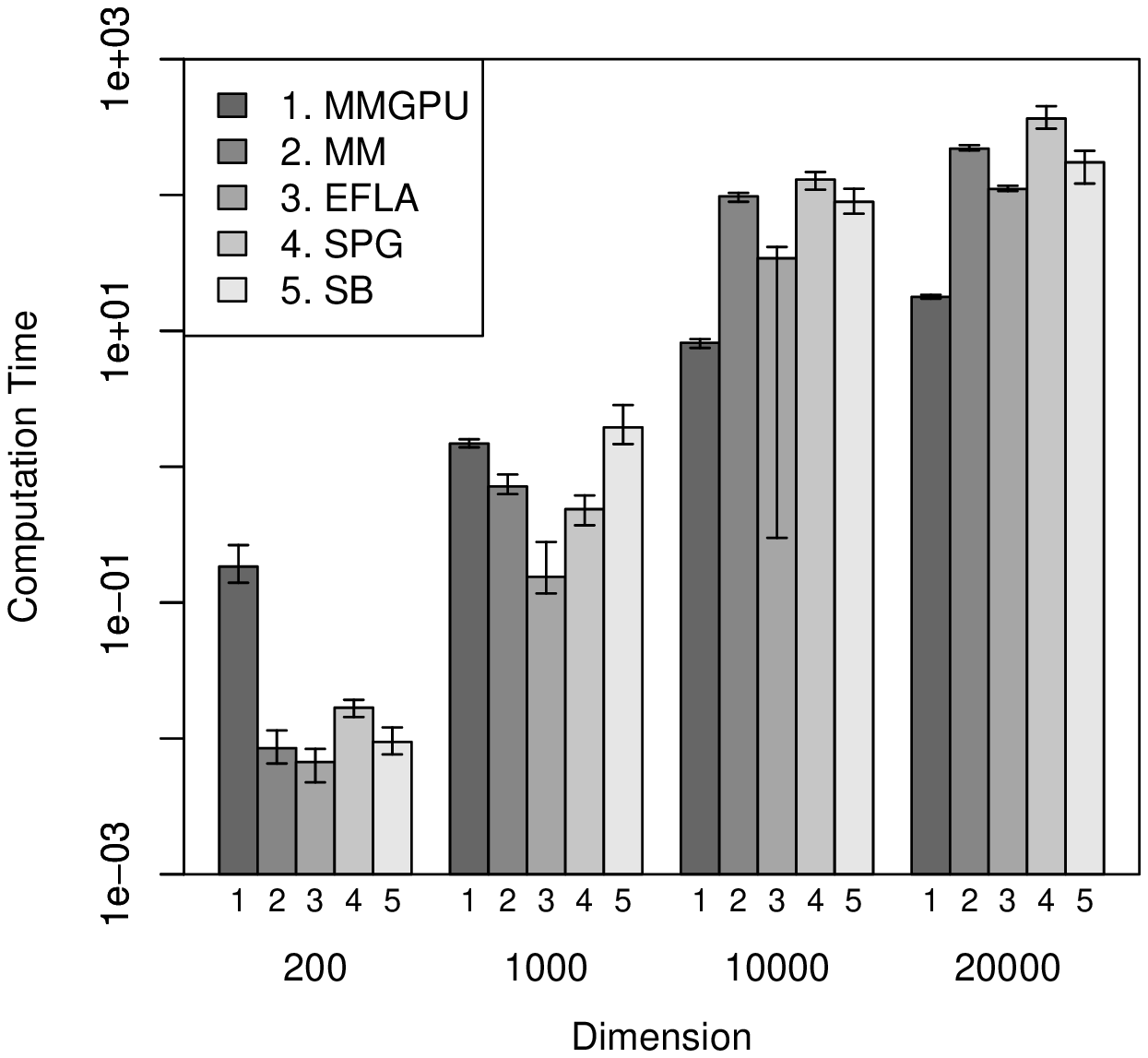,width=0.8\textwidth,
  height=0.21\textheight}}
  \centerline{(f) {\bf (C3)} with $(\lambda_1,\lambda_2) = (1,1)$}
\end{minipage}
\caption{Summary of the results of {\bf (C1)} -- {\bf (C3)} for $n=1000$.
The vertical lines indicate the range of computation times for 10 data sets.
Let $p$ denote the dimension of problem.
In {\bf (C2)} with $(0.1,0.1)$ at $p=10000,~ 20,000$ and {\bf (C3)} with $(1,1)$ at $p=10000$,
EFLA fails to converge the optimal solution at 
the lower bounds of vertical lines of EFLA (See Appendix C.) }
\label{res_c1}
\end{center}
\end{figure}

\subsection{Two-dimensional FLR}

We consider two scenarios for
the two-dimensional FLR, where the coefficients are
indexed on $q \times q$ lattice (hence $p = q^2$)
and the  penalty structure is imposed on the horizontally and vertically adjacent coefficients. 
The first is the two-dimensional FLR with general design matrix ${\bf X}$.
The second is image denoising, which is equivalent to the two-dimensional FLSA (${\bf X} = {\bf I}$).
Note that
EFLA and MMGPU are not applicable for these scenarios.
Instead, \textsf{pathFLSA}
is considered for the second scenario,
since it is known as one of the most efficient algorithms for the two-dimensional FLSA.

For the example of  the two-dimensional FLR problem with
general ${\bf X}$, we generate 10 data sets according to
the following scenario.

\begin{itemize}
\item[{\bf (C4)}] Two-dimensional FLR with general design matrix,
moderately sparse case.

We set the $q \times q$ dimensional matrix $B = (b_{ij})$ as follows.
\begin{equation} \nonumber
b_{ij}= \left\{
\begin{array}{lll}
2 & & {\rm for}~~ \frac{q}{4}k +1 \le i,j \le \frac{q}{4}(k+1), ~~k=0,1,2,3\\
-2 & & {\rm for}~~ \frac{q}{4}k +1 \le i \le \frac{q}{4}(k+1),~
\frac{q}{4}(3-k) +1 \le j \le \frac{q}{4}(4-k)~~ k=0,1,2,3\\
0 & & {\rm otherwise}
\end{array}\right.
\end{equation}
The true coefficient vector $\beta$ is 
the vectorization of the matrix $B$ denoted by $\beta = {\rm vec}(B)$.
The structure of the two-dimensional coefficient matrix $B$ is shown
in Figure \ref{tr_c4} (a).
\end{itemize}

As the previous scenarios, we generate ${\rm \bf X}$ and ${\rm \bf y}$
from equation (\ref{eqn:gener}) with true coefficients defined in (C4).
The candidate dimensions are $q= 16,~32,~64,~{\rm and}~128$ for sample size $n=1000$. 
We also consider four sets of regularization parameters
$(\lambda_1,\lambda_2)
\in \big\{ (0.1,0.1), (0.1,1),(1,0.1),(1,1) \big\}$ as in Section 5.1.

The computation times of (C4) are shown in Figure \ref{tr_c4} (b) and (c).
MM gains its competitiveness  competitive as
the dimension $p~(= q^2)$ increases.
MM is $1.6 \sim 19.5$ times faster than SB when $p >n$,
has similar performance to SPG when $p=10000$, and
MM is $1.3 \sim 12.3$ times faster than SPG when $p=20000$.
A detailed report on the computation time can be found in Appendix C,
Table C.4.
It can be seen that the
variation  in the number of iterations for both dimensions and values of $(\lambda_1,\lambda_2)$ is the smallest for
MM. Also it can be seen that MM tends to converge
within a few tens of iterations.
This stability is consistent with the observations from
$\bf (C1) - (C3)$.
\begin{figure}[htb!]
\begin{center}
\begin{minipage}[b]{.32\linewidth}
  \centering   \centerline{\epsfig{file=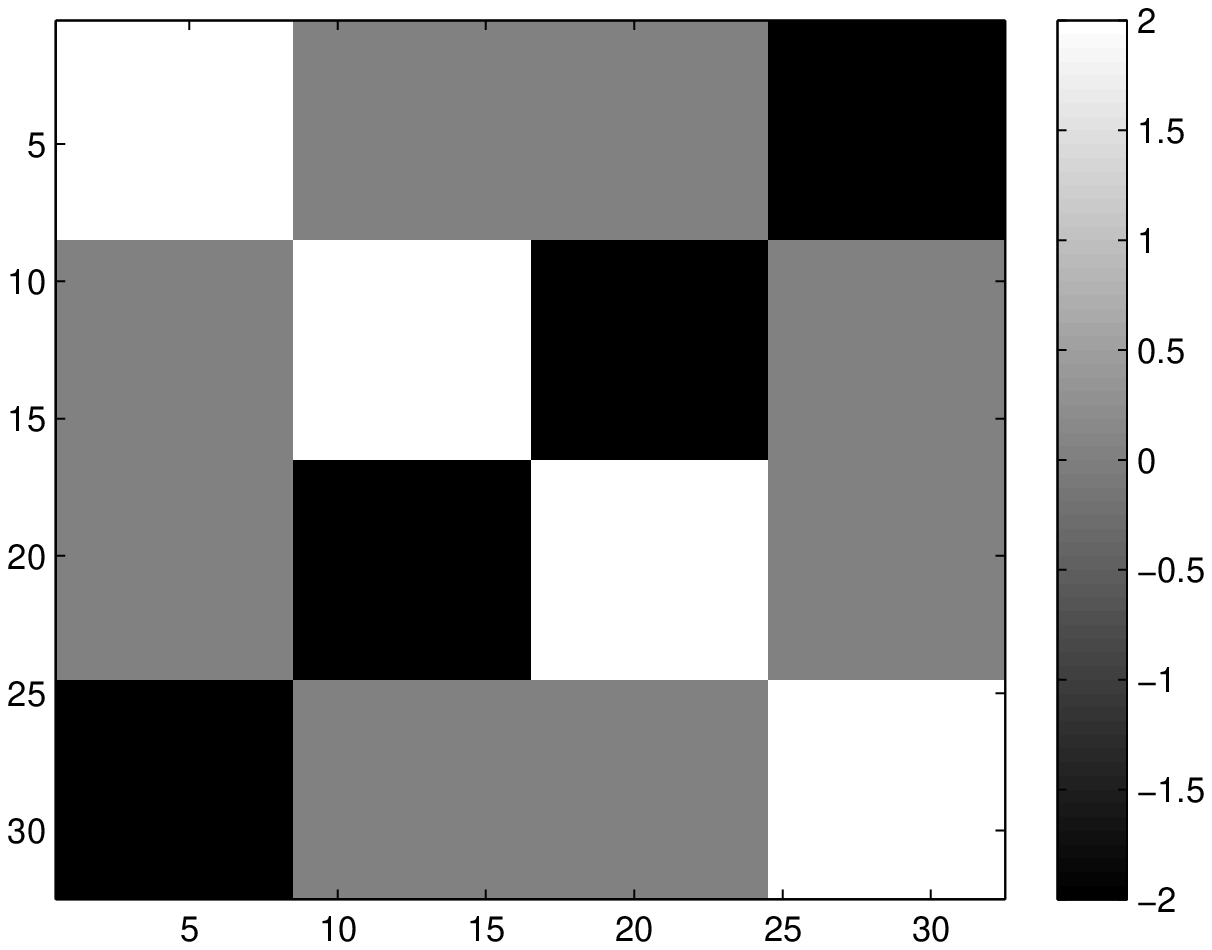,width=\textwidth,height=0.2\textheight}}
  \centerline{(a) True Coefficients}
\end{minipage}
\medskip
    \begin{minipage}[b]{.32\linewidth}
  \centering   \centerline{\epsfig{file=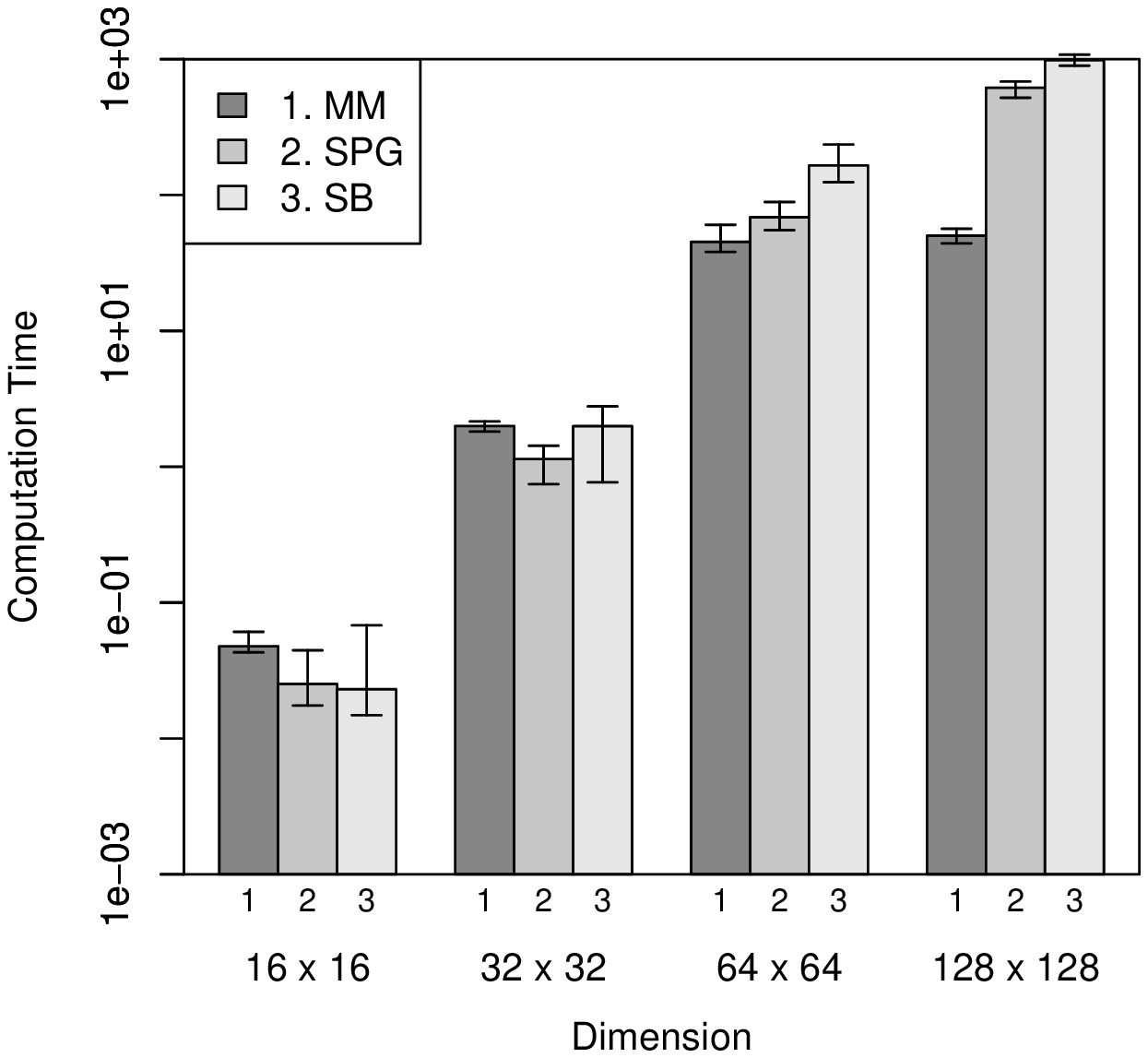,width=\textwidth,
  height=0.2\textheight}}
  \centerline{(b) $(\lambda_1,\lambda_2) = (0.1,0.1)$ }
 \end{minipage}
\medskip
  \begin{minipage}[b]{.32\linewidth}
  \centering   \centerline{\epsfig{file=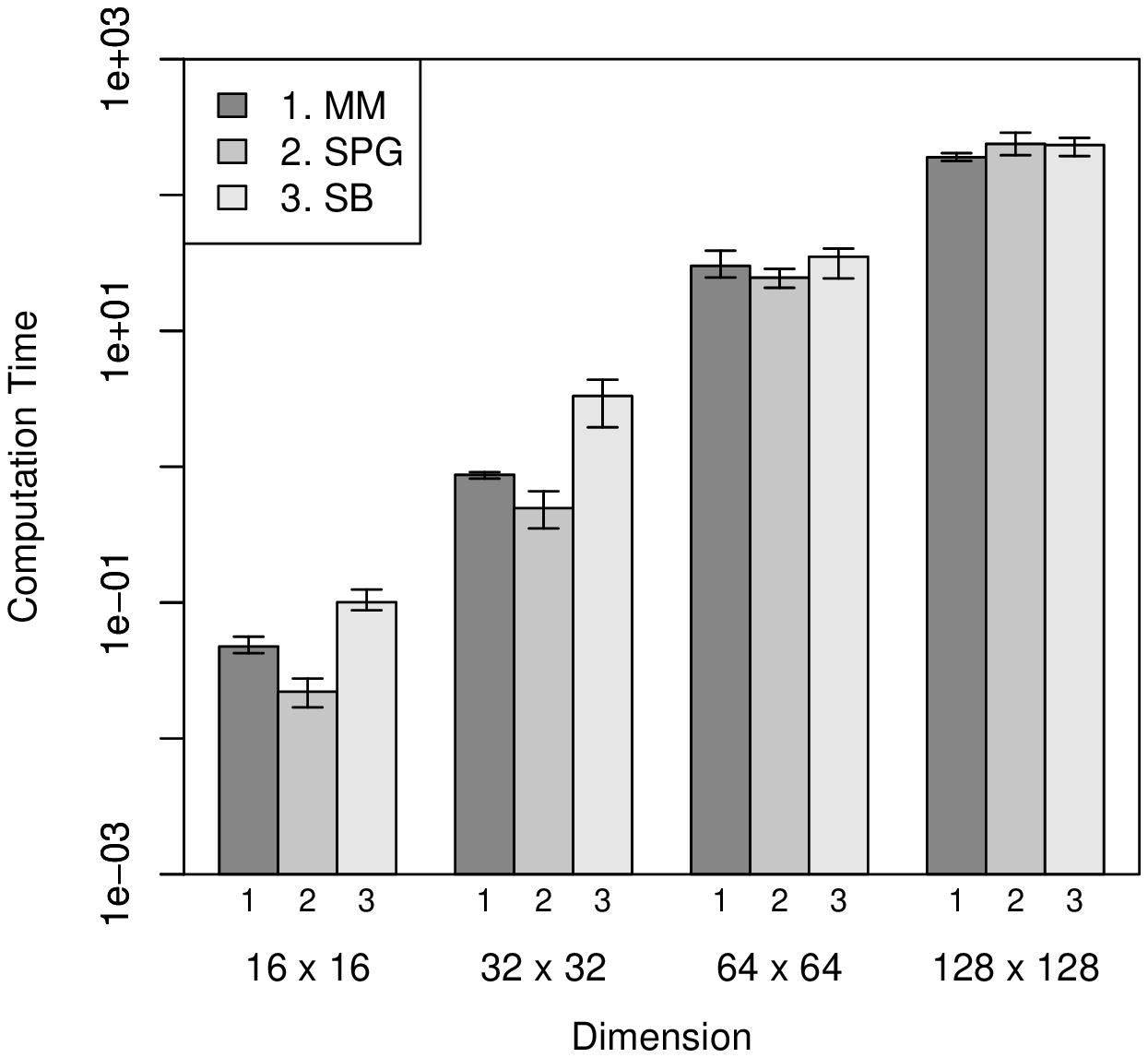,width=\textwidth,
  height=0.2\textheight}}
  \centerline{(d) $(\lambda_1,\lambda_2) = (1,1)$}
\end{minipage}
\caption{Summary of Case 4 {\bf (C4)} for sample size $n=1000$.}
 \label{tr_c4}
\end{center}
\end{figure}

\begin{itemize}
\item[{\bf (C5)}] Image denoising (${\bf X} = {\bf I}$)
 
We use a $256 \times 256$ gray scale
image (i.e., $q=256$, $p=q^2 = 65536$) of R.A. Fisher, as used in \cite{Friedman2007}.
After standardizing pixel intensities, we add Gaussian
noise with a standard variation $0.3$.
We set $\lambda_1 = 0$ because the application is image denoising.
\end{itemize}

We consider
 two regularization parameters $\lambda_2 =0.1, 1$.
  We report the average computation times and the average numbers of iterations in 
 Table \ref{summ_c5}. We present the true image, the noisy image, and the denoising results 
 at $\lambda_2 = 0.1$ in Figure \ref{image_res}.
Since the path for all the solutions for two-dimensional FLSA
is computationally infeasible, we terminate PathFLSA  at the
desired value of  $\lambda_2$. 
For  small $\lambda_2$ ($\lambda_2 = 0.1$),
MM is slower than SPG and PathFLSA, but much faster than SB.
As $\lambda_2$ increases, MM becomes faster than PathFLSA since
PathFLSA always has to start from $\lambda_2 = 0$.
In addition, SPG fails to obtain the solution at $\lambda_2 = 1$.
(See the Figure \ref{image_res_2}.)
MM also exhibits a smaller variation in the number of iterations and the objective function value
at the obtained solution than SPG and SB.
Again, this stability is a consistent behavior of MM throughout $\bf (C1)-(C5)$.

\begin{table}[htb!]
\caption{Summary of the computation time and
the numbers of iterations of the \textsf{MM}, \textsf{SPG}, \textsf{SB}, and \textsf{pathFLSA} algorithms for case ${\bf (C5)}$.
N/A denotes the method fails to obtain the optimal solution.}
\label{summ_c5}
\medskip
\begin{minipage}{\textwidth}
\centering
{\small
\begin{tabular}{|c|l|c|c|c|}\hline
$(\lambda_1, \lambda_2)$ &Method & Computation time (sec.)& \# of iterations$^*$\let\thefootnote\relax\footnotetext{$*$ Since PathFLSA is a path algorithm, the number of
	iterations of PathFLSA is omitted.} & $f(\widehat{\beta})$\\\hline
\multirow{4}*{$(0,0.1)$}	&	\textsf{MM}	&	21.2082 & 26	& 2810.16 	\\ 
	&	\textsf{SPG}	&	1.9964	&	140 & 2817.31 	 	\\  
	&	\textsf{SB}	&	217.2167	&	698 & 2822.40	 	\\  
	&	\textsf{pathFLSA}	&	3.2151	& -	& 2809.96	\\  \hline
	\multirow{4}*{$(0,1)$}	&	\textsf{MM}	&	48.7812 	& 56 & 6920.43	\\ 
	&	\textsf{SPG}$^\vartriangle$\let\thefootnote\relax\footnotetext{$\vartriangle$ SPG
	stops after 11 iterations with $f(\widehat{\beta})= 35595.82$. }	&	N/A & N/A	& N/A	\\   
	&	\textsf{SB}	&	635.8651 & 4431 & 7308.07		\\  
		&	\textsf{pathFLSA}	&	359.2660 &	 - & 6919.06  	\\  \hline
	\end{tabular}
	}
\end{minipage}
\end{table}

\begin{figure}[htb!] 
\begin{center}
  \begin{minipage}[b]{.48\linewidth}
  \centering   \centerline{\epsfig{file=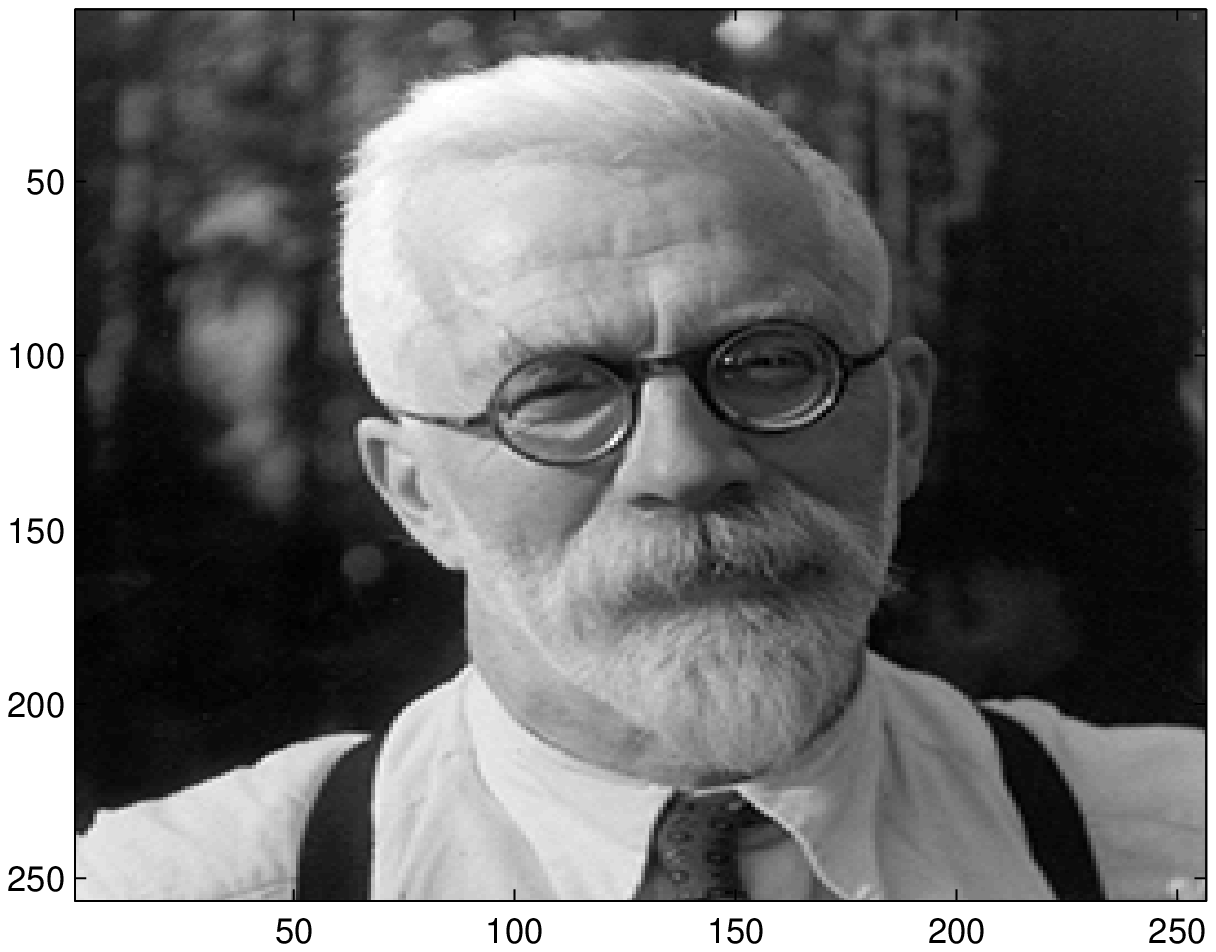,width=0.8\textwidth,height=0.24\textheight}}
  \centerline{(a) True image}
\end{minipage}
\medskip
  \begin{minipage}[b]{.48\linewidth}
  \centering   \centerline{\epsfig{file=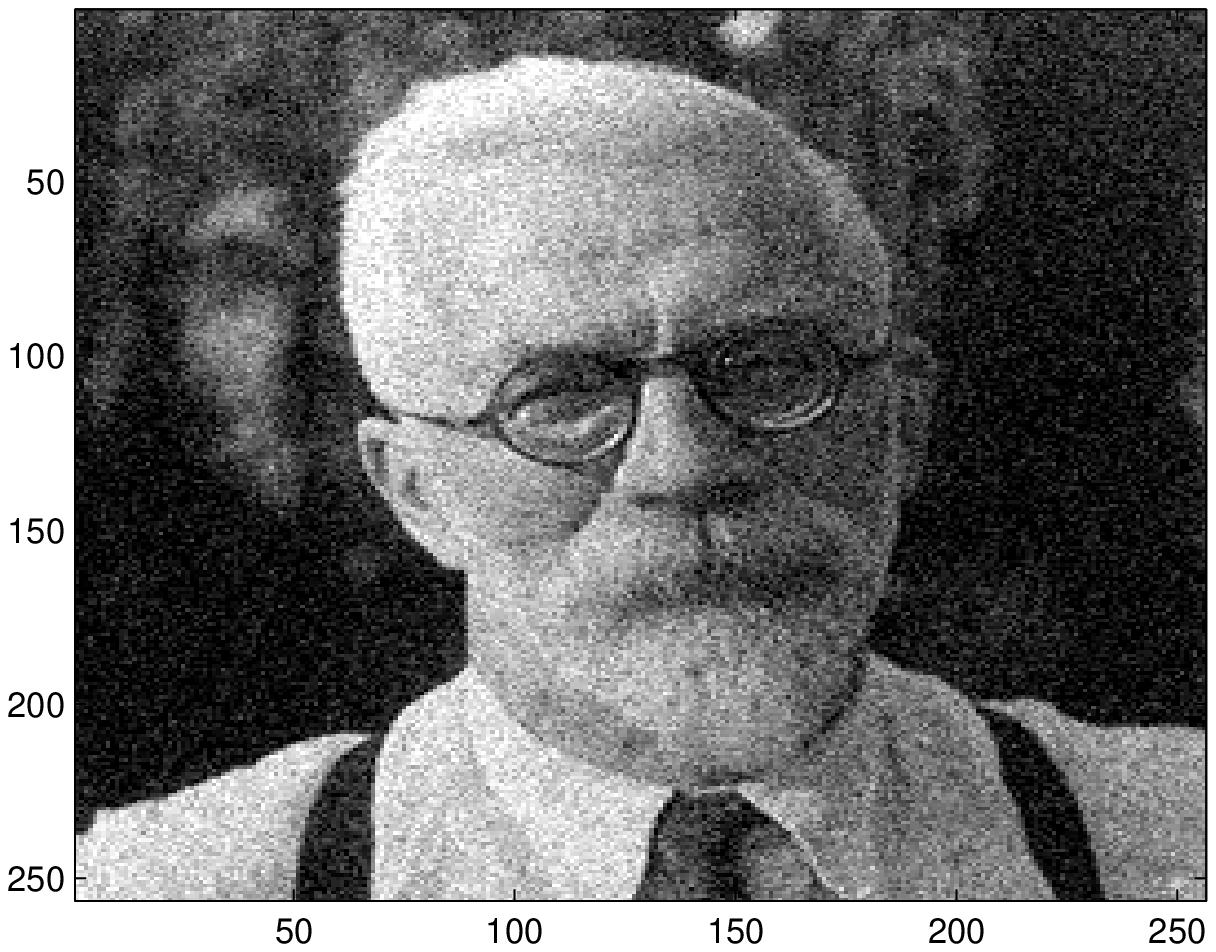,width=0.8\textwidth,height=0.24\textheight}}
  \centerline{(b) Noisy image}
\end{minipage}
\medskip

  \begin{minipage}[b]{.48\linewidth}
  \centering   \centerline{\epsfig{file=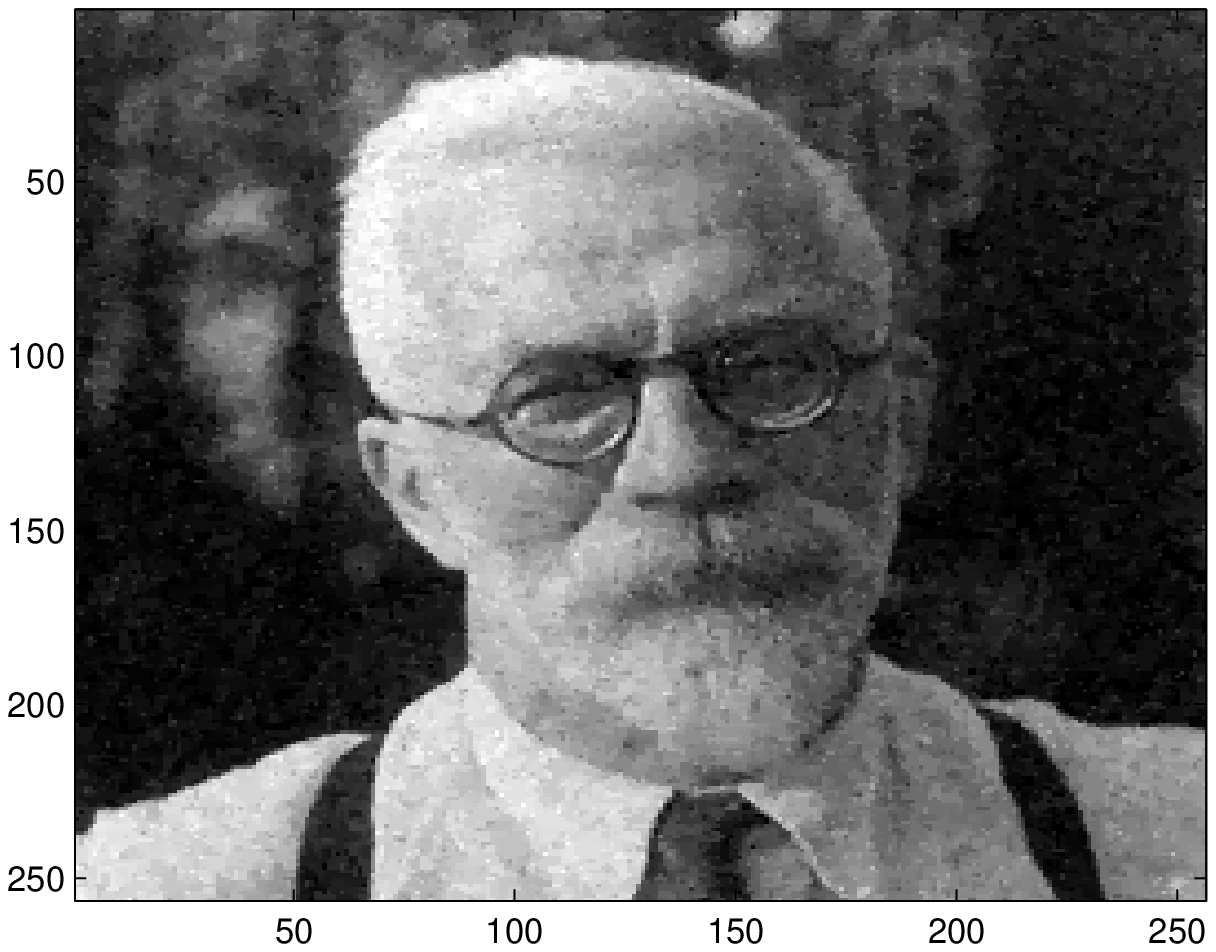,width=0.8\textwidth,height=0.24\textheight}}
  \centerline{(c) MM}
\end{minipage}
\medskip
\begin{minipage}[b]{.48\linewidth}
  \centering   \centerline{\epsfig{file=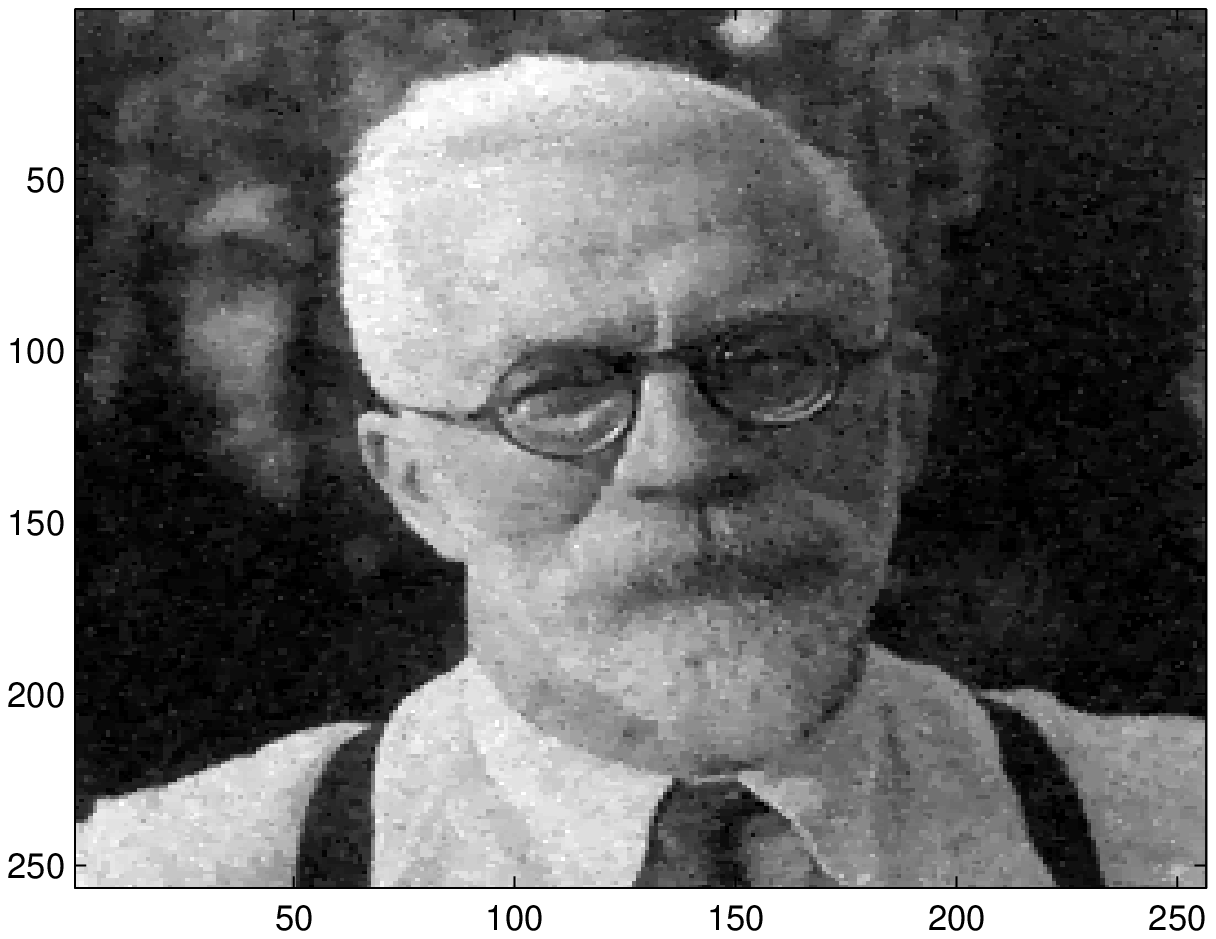,width=0.8\textwidth,height=0.24\textheight}}
  \centerline{(d) SPG}
\end{minipage}
\medskip
\begin{minipage}[b]{.48\linewidth}
  \centering   \centerline{\epsfig{file=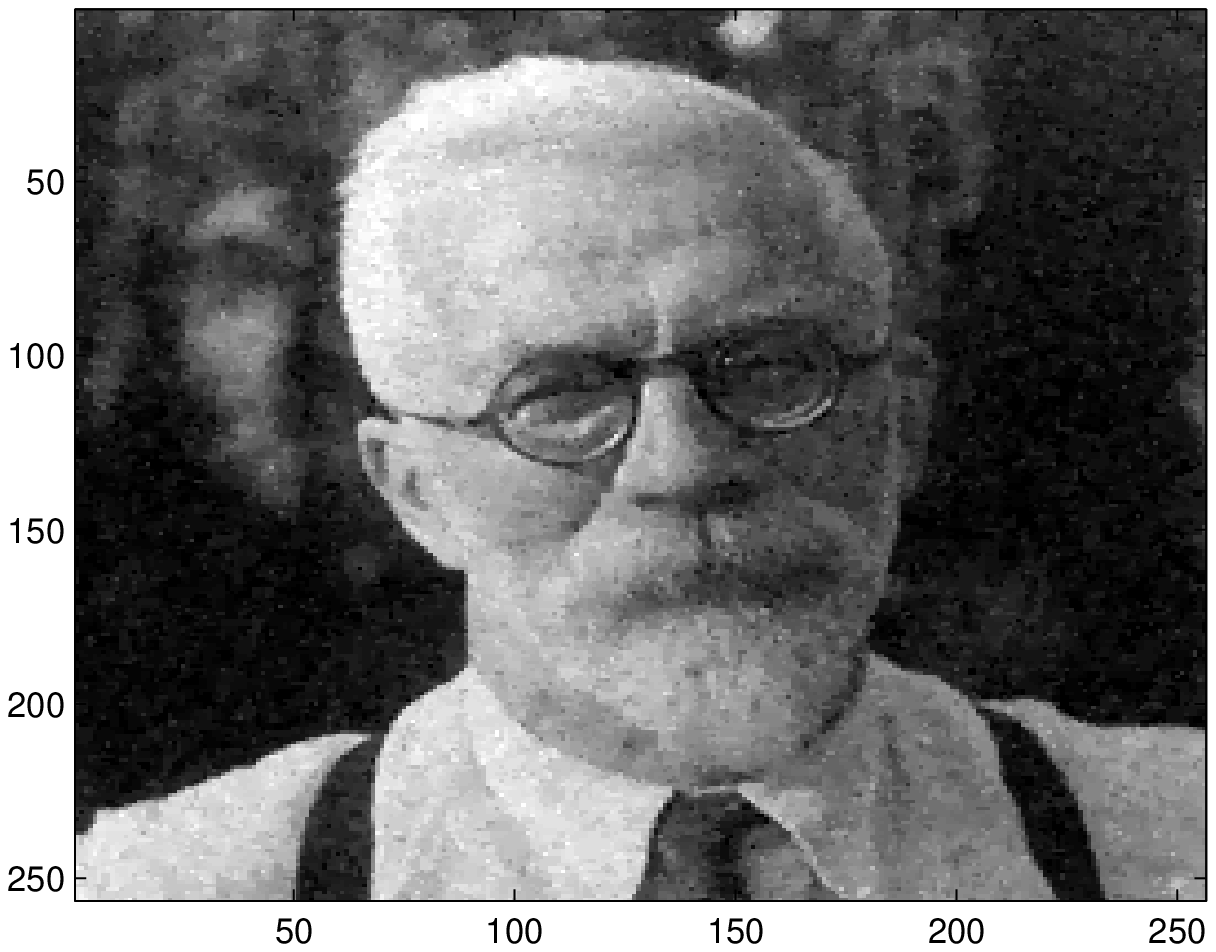,width=0.8\textwidth,height=0.24\textheight}}
  \centerline{(e) SB}
\end{minipage}
\medskip
    \begin{minipage}[b]{.48\linewidth}
  \centering   \centerline{\epsfig{file=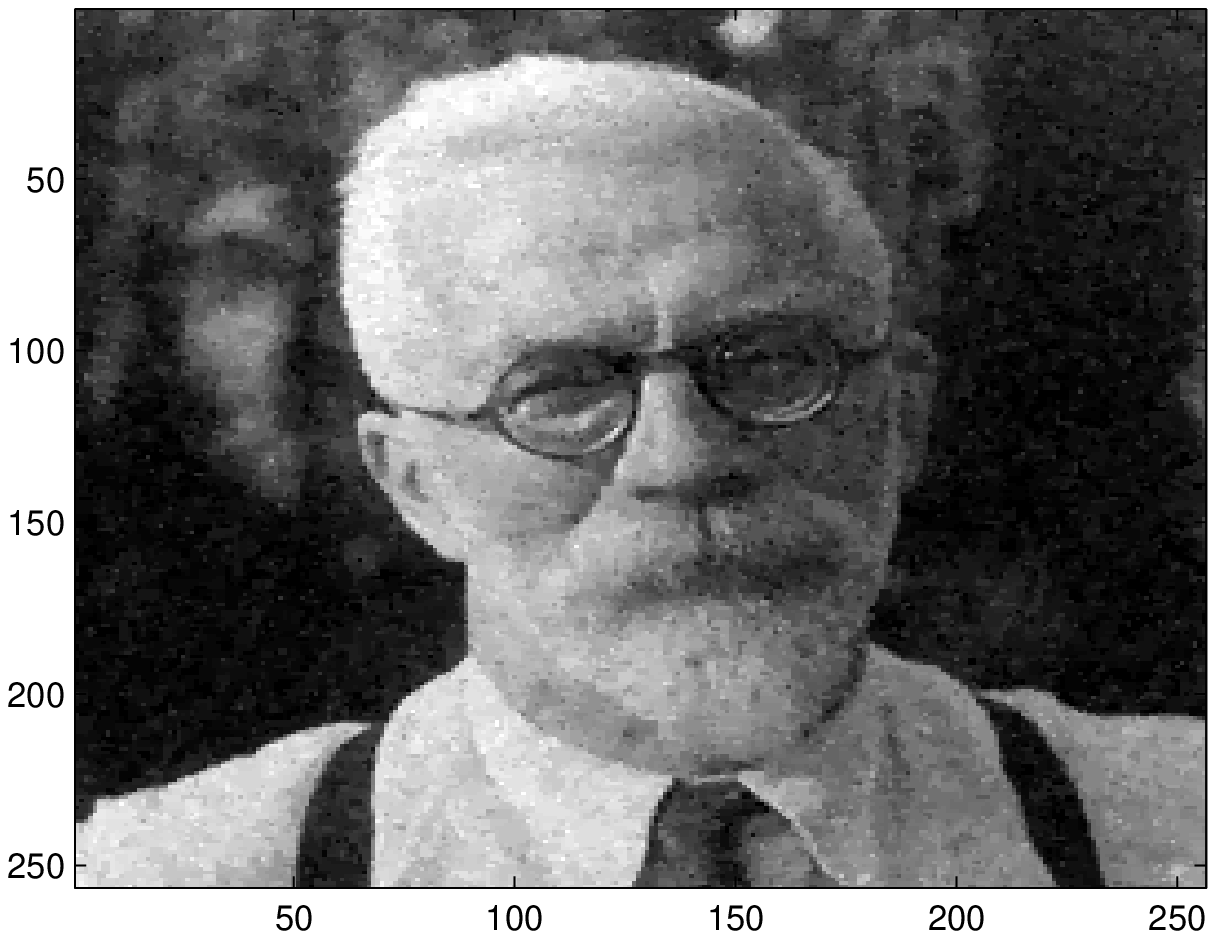,width=0.8\textwidth,height=0.24\textheight}}
  \centerline{(f) PathFLSA}
\end{minipage}

\caption{Image denoising with various algorithms for the FLR ($\lambda_2 = 0.1$)}
\label{image_res}
\end{center}
\end{figure}

\begin{figure}[htb!] 
\begin{center}
  \begin{minipage}[b]{.48\linewidth}
  \centering   \centerline{\epsfig{file=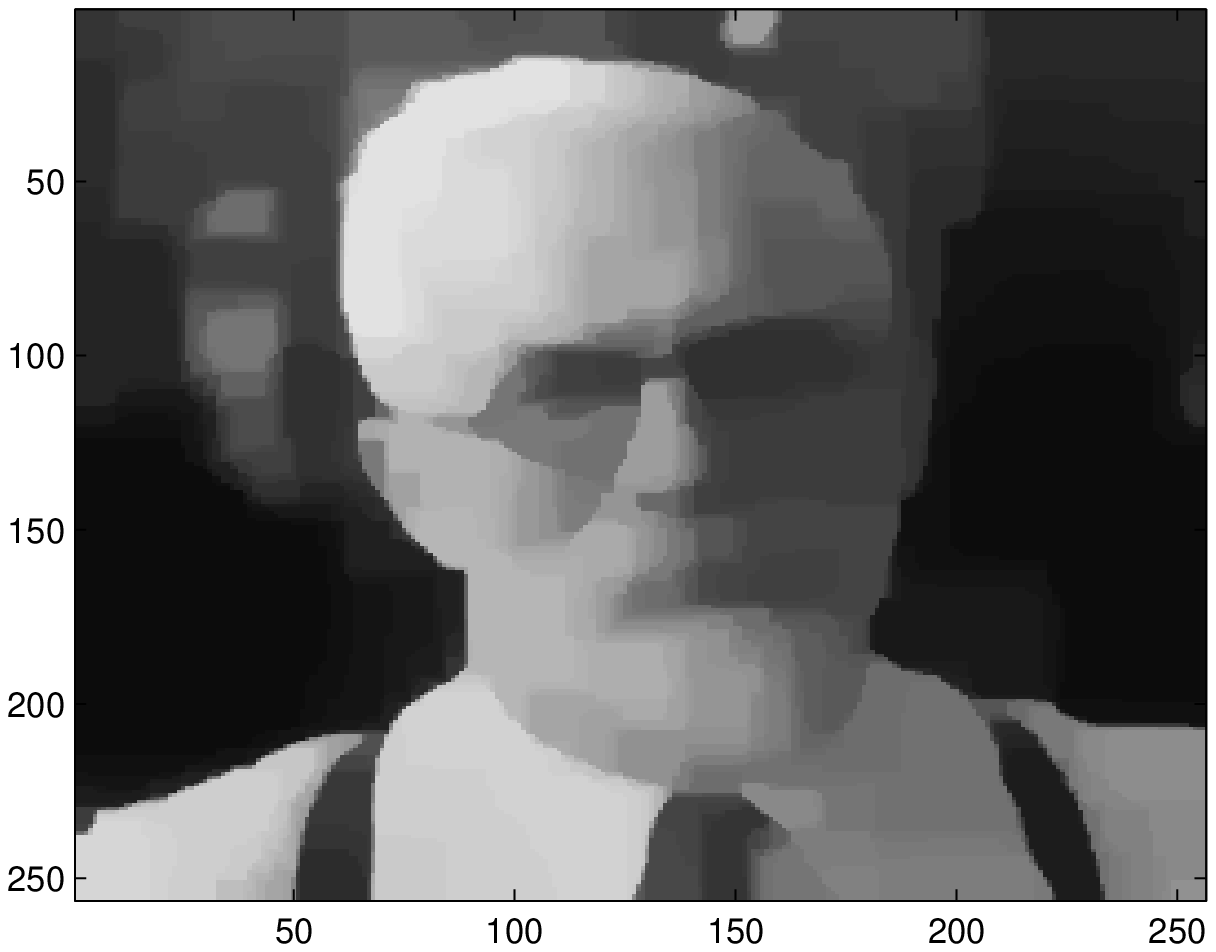,width=0.8\textwidth,height=0.24\textheight}}
  \centerline{(a) MM}
\end{minipage}
\medskip
\begin{minipage}[b]{.48\linewidth}
  \centering   \centerline{\epsfig{file=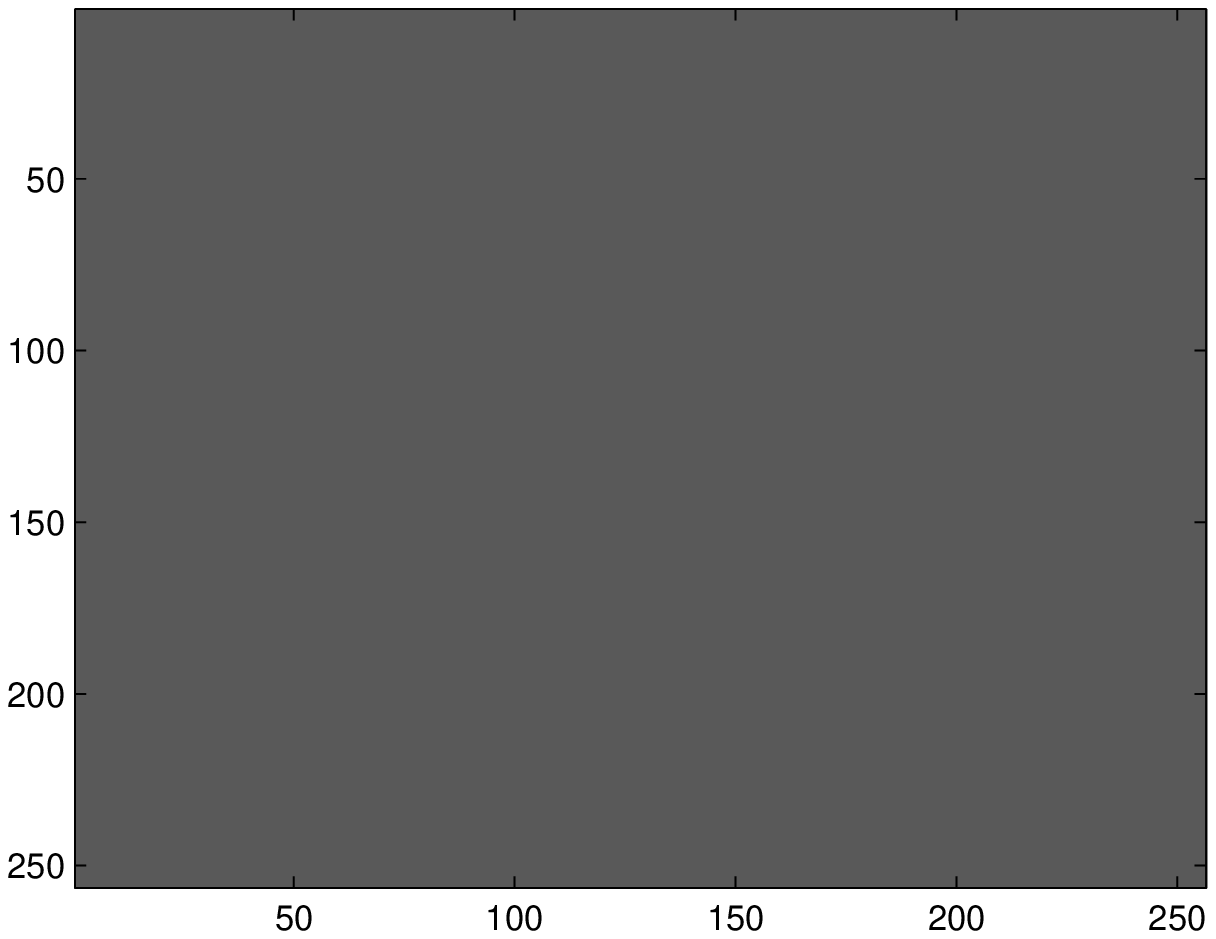,width=0.8\textwidth,height=0.24\textheight}}
  \centerline{(b) SPG}
\end{minipage}
\medskip
\begin{minipage}[b]{.48\linewidth}
  \centering   \centerline{\epsfig{file=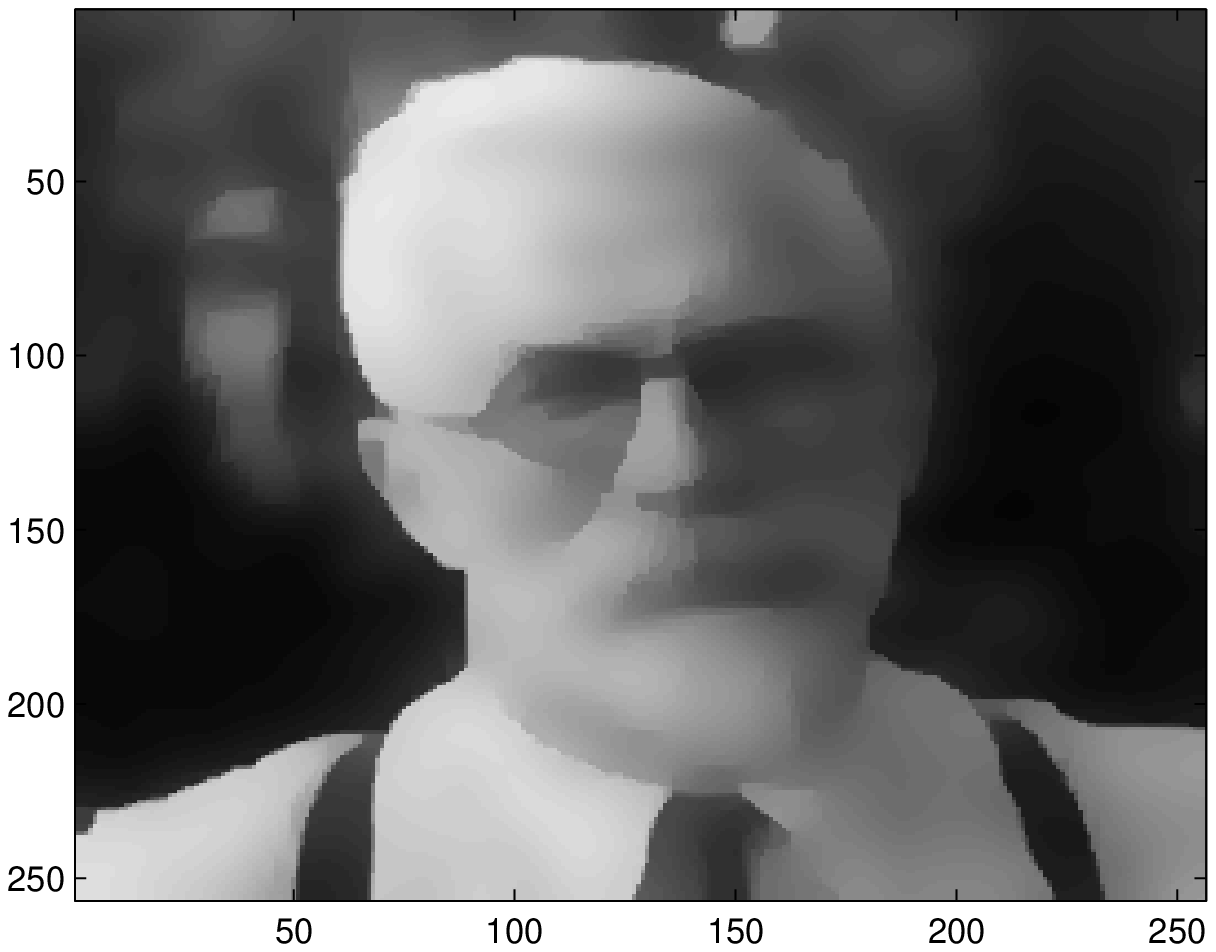,width=0.8\textwidth,height=0.24\textheight}}
  \centerline{(c) SB}
\end{minipage}
\medskip
    \begin{minipage}[b]{.48\linewidth}
  \centering   \centerline{\epsfig{file=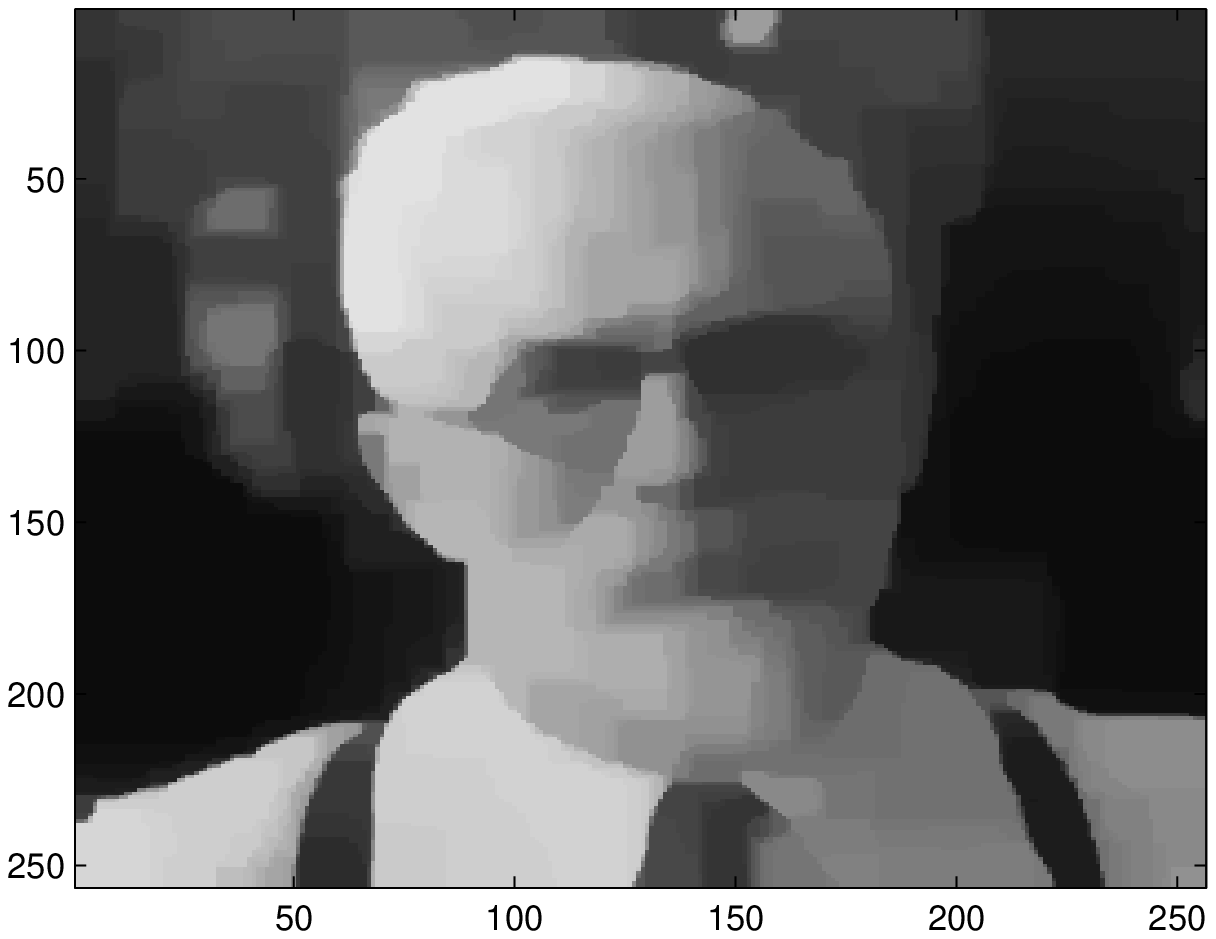,width=0.8\textwidth,height=0.24\textheight}}
  \centerline{(d) PathFLSA}
\end{minipage}

\caption{Image denoising with various algorithms for the FLR ($\lambda_2 = 1$). SPG fails to obtain the optimal solution.}
\label{image_res_2}
\end{center}
\end{figure}

\section{Conclusion}

In this paper, we have proposed an MM algorithm for the
fused lasso regression (FLR) problem with the generalized
fusion penalty. The proposed algorithm
is flexible in the sense that
it can be applied to the FLR with a wide class of
design matrices and penalty structures.
It is stable in the sense that the convergence
of the algorithm is not sensitive to the dimension of the 
problem, the choice of the regularization parameters, and
the sparsity of the true model.  
Even when a special structure on the design matrix
or the penalty is imposed, the MM algorithm shows  comparable
performance to the algorithms tailored to the
special structure.
These features make the proposed algorithm 
an attractive choice as an off-the-shelf FLR algorithm.
Moreover, the performance of
the MM algorithm  can be improved by parallelizing
it with GPU, when the standard fused lasso penalty is imposed
and the dimension increases.
Extension of the GPU algorithm to the two-dimensional FLR problem
is our future direction of research.

\section*{Acknowledgement}

We would like to thank Dr. Gui-Bo Ye for providing the MATLAB code for
the Split Bregman algorithm.

{}

\appendix

\section{Proofs}
\subsection{Proof of Proposition 1}\label{sec:proofs:prop1}

For part (i), write
\[
f_{\eps} (\beta) = \displaystyle \frac{1}{2} \| {\bf y}- {\bf X} \beta \|_2^2  
+ \lambda_1 \sum_{j=1}^p a_j (\beta)  
 + \lambda_2 \sum_{(j,k) \in E} b_{jk} (\beta),
\]
where
\begin{align*}
a_j(\beta) &= |\beta_j | - \eps \log \Big(1 + \frac{|\beta_j|}{\eps} \Big) \\
b_{jk}(\beta)&= |\beta_j - \beta_{k}| - \eps \log \Big(1 + \frac{|\beta_j - \beta_{k}|}{\eps} \Big).
\end{align*} 
for $j,k=1,2,\ldots,p$.
The functions $a_j(\beta)$ and $b_{jk}(\beta)$ are continuous and convex in $\beta \in \mathbb{R}^p$ and so is $f_{\eps} (\beta)$.  
In addition, for each $j=1,2,\ldots,p$, we see that $\lim_{|\beta_j| \rightarrow  \infty} f_{\epsilon} (\beta) = \infty$
 held fixed. This, along with the convexity of $f_{\epsilon}(\beta)$, shows that $\lim_{ \|\beta\|_2 \rightarrow \infty} f_{\epsilon} (\beta) = \infty$.

For part (ii), recall that
\begin{align*}
0 \le f(\beta) - f_\eps(\beta) & =  \lambda_1 \sum_{j=1}^p \eps\log \left(1+ \frac{|\beta_j|}{\eps} \right)  
	+  \lambda_2 \sum_{(j,k)\in E} \eps \log \left(1+ \frac{|\beta_j-\beta_{k}|}{\eps} \right) \\
 & \le   \lambda_1 \sum_{j=1}^p \eps\log \left(1+ \frac{\sup_{\beta \in \bf{C}} \big| \beta_j \big|}{\eps} \right)  + \lambda_2 \sum_{(j,k)\in E} \eps\log \left(1+ \frac{\sup_{\beta \in \bf{C}} \big| \beta_j-\beta_{k}\big|}{\eps} \right).
\end{align*}
The suprema in the rightmost side are achieved and finite because $\bf C$ is compact. 
Then the rightmost side monotonically decreases to $0$ as $\eps$ goes to $0$.

To see part (iii), for a given $\epsilon>0$ and $x_0 \in \mathbb{R}$, consider the functions
\begin{eqnarray}
p_\eps(x) &=& |x|-\eps \log \big(1+ \frac{|x|}{\eps} \big)
\nonumber \\
q_\eps(x|x^0) &=& |x^0|-\eps \log \bigg(1+ \frac{|x^0|}{\eps} \bigg) + 	
\frac{x^2 - (x^0)^2}{2(|x^0|+\eps)}, \nonumber
\end{eqnarray}
and their difference $ h(y |y^0 ) = q_\eps(x|x^0) - p_\eps(x)$, where $y=|x|$ and $y_0=|x_0|$. The function $h(y|y^0)$ is twice differentiable, and simple algebra shows that it is increasing for $y \ge y_0$ and
decreasing for $y < y_0$. We
have
\[
h\big(y \big| y^0 \big) = q_\eps(x|x^0) - p_\eps(x) \ge h \big(y^0 \big| y^0 \big)=0,
\]
where the equality holds if and only if $y=y^0$ (equivalently, $|x|=|x^0|$). This shows that
$q_\eps(x|x^0)$ majorizes $p_\eps(x)$ at $x =x^0$, thus
$g_\eps(\beta|\beta')$ majorizes $f_\eps(\beta)$ at $\beta =\beta'$.
Finally, the majorizing function $g_\eps(\beta|\beta')$ is strictly convex for fixed $\beta'$ since
\begin{equation} \nonumber
\frac{\partial^2 g_\eps(\beta|\beta')}{\partial \beta^2} = {\bf X}^{T} {\bf X} + \lambda_1 {\bf A}' + \lambda_2 {\bf B}'
\end{equation}
is positive definite,  where ${\bf A}'$ and ${\bf B}'$ are defined
in (\ref{eqn:lin-sys}). 
Hence it has a unique global minimum.

\subsection{Proof of Lemma \ref{lemma:perturbed}}\label{sec:proofs:lemma_perturbed}

Define the level set 
\[
\Omega \big( \epsilon, c \big) = \big\{ \beta \in \mathbb{R}^p \big|
f_\eps \big( \beta \big) \le c \big\},
\]
for every $\epsilon >0$ and $c>0$. Similarly define 
\[
\Omega \big( 0,  c \big) = \big\{ \beta \in \mathbb{R}^p \big|
f \big( \beta \big) \le c \big\}.
\]
Then, from part (i) of Proposition 1, $\Omega \big( \eps, c \big)$ is compact for every $\epsilon>0$ and $c>0$. 
Compactness of $\Omega \big( 0, c \big)$ follows from the uniform convergence of $f_{\eps}(\beta)$ to $f(\beta)$.

Note that
$\Omega\big(\eps, f( \widehat{\beta}) \big)$ is a non-empty 
compact set since by the assumption that the solution set $\{ \beta ~|~ f(\beta) = f(\widehat{\beta})\}$ is non-empty.
Then we have  $f_{\epsilon}(\beta) \le f(\widehat{\beta})$ for every $\beta \in \Omega\big(\eps, f(\widehat{\beta}) \big)$.
By construction, $f_{\eps} (\widehat{\beta}_{\eps}) \le f_{\eps}(\beta)$.
Thus,
\[
\min_{\gamma} f_{\eps} (\gamma)=f_{\eps} ( \widehat{\beta}_\eps ) \le f_{\epsilon} (\beta ) \le f \big(\widehat{\beta}\big) = \min_{\gamma} f\big(\gamma \big).
\]
In other words, $\widehat{\beta}_\eps \in \Omega\big(\eps, f( \widehat{\beta})\big)$. Similarly we see $\widehat{\beta} \in \Omega\big(0, f( \widehat{\beta})\big)$ that is non-empty.

Since $f(\widehat{\beta}_\eps) \ge f(\widehat{\beta})$ by the definition of $\widehat{\beta}$,
\begin{align*}
	  	0 \le f \big(\widehat{\beta}_\eps \big) - f \big(\widehat{\beta} \big) & \le 
	  	f \big(\widehat{\beta}_\eps \big) - f_\eps \big(\widehat{\beta}_\eps \big) + f_\eps \big(\widehat{\beta} \big) - f \big(\widehat{\beta} \big) \nonumber \\
	  	& \le   \big|f \big(\widehat{\beta}_\eps \big) - f_\eps\big(\widehat{\beta}_\eps \big) \big|
 + \big|f_\eps \big(\widehat{\beta} \big) - f\big(\widehat{\beta} \big) \big|. 
\end{align*}
The rightmost side of the above inequality goes to zero because $f_{\eps}(\beta)$ converges to $f(\beta)$ uniformly on both $\Omega\big(\eps,f(\widehat{\beta})\big)$ and $\Omega\big(0,f(\widehat{\beta})\big)$.
Then, for a limit point $\beta^*$ of the sequence $\{ \widehat{\beta}_{\eps_n} \}_{n\ge 1}$ with $\eps_n \downarrow 0$, we see
\[
\lim_{n \rightarrow \infty} f \big( \widehat{\beta}_{\eps_n}\big) = f\big( \beta^* \big) = f \big(\widehat{\beta} \big)=
\min_{\gamma} f (\gamma )
\]
by the continuity of $f(\beta)$, i.e., $\beta^*$ minimizes $f(\beta)$.

\subsection{Proof of Lemma \ref{lemma:lyapunov}}\label{sec:proofs:lemma_lyapunov}

We first claim that the recursive updating rule implicitly defined by the MM algorithm is continuous.

\begin{claim}\label{claim:continuity}
The function $M(\alpha) = \argmin_{\beta} g_{\eps}(\beta | \alpha)$
is continuous in $\alpha$,
where $g_\eps(\beta|\alpha) $ is the proposed majorizing function of $f_\eps(\beta)$ at $\alpha$.
\end{claim}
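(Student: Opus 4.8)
The plan is to exhibit $M(\alpha)$ in closed form and then deduce continuity as a composition of elementary continuous maps.

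First I would recall that, by the first-order optimality condition for the (strictly convex) majorizing function --- this is \eqref{eqn:lin-sys} with the current point $\alpha$ playing the role of $\widehat{\beta}^{(r)}$ --- the minimizer $M(\alpha) = \argmin_{\beta} g_{\eps}(\beta|\alpha)$, which is unique by Proposition \ref{prop:mm}(iii), is the solution of the linear system
\[
\bigl( {\bf X}^{T}{\bf X} + \lambda_1 {\bf A}(\alpha) + \lambda_2 {\bf B}(\alpha) \bigr)\, M(\alpha) = {\bf X}^{T}{\bf y},
\]
where ${\bf A}(\alpha) = \mathrm{diag}\bigl( 1/(|\alpha_j|+\eps) \bigr)_{1\le j\le p}$ and ${\bf B}(\alpha)$ is the symmetric matrix with $b_{jj}(\alpha) = \sum_{k:(j,k)\in E} 1/(|\alpha_j-\alpha_k|+\eps)$ and $b_{jk}(\alpha) = -1/(|\alpha_j-\alpha_k|+\eps)$ for $(j,k)\in E$. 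Writing ${\bf Q}(\alpha) := {\bf X}^{T}{\bf X} + \lambda_1 {\bf A}(\alpha) + \lambda_2 {\bf B}(\alpha)$, we therefore have $M(\alpha) = {\bf Q}(\alpha)^{-1}{\bf X}^{T}{\bf y}$.

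Second, I would verify that $\alpha \mapsto {\bf Q}(\alpha)$ is continuous. Each diagonal entry of ${\bf A}(\alpha)$ equals $1/(|\alpha_j|+\eps)$; since $\alpha\mapsto|\alpha_j|$ is continuous and $|\alpha_j|+\eps \ge \eps > 0$, this entry is continuous in $\alpha$. Likewise every entry of ${\bf B}(\alpha)$ is a finite sum of terms $\pm 1/(|\alpha_j-\alpha_k|+\eps)$, each continuous for the same reason --- and here the strict positivity of $\eps$ is exactly what keeps the denominators bounded away from zero, which is the reason the perturbation was introduced in the first place. As ${\bf X}^{T}{\bf X}$ is constant, ${\bf Q}(\alpha)$ depends continuously on $\alpha$, and by Proposition \ref{prop:mm}(iii) it is symmetric positive definite, hence invertible, for every $\alpha$.

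Finally, I would invoke the continuity of matrix inversion on the open set of invertible matrices (via Cramer's rule, each entry of ${\bf Q}^{-1}$ is a polynomial in the entries of ${\bf Q}$ divided by $\det {\bf Q} \neq 0$). Composing $\alpha \mapsto {\bf Q}(\alpha) \mapsto {\bf Q}(\alpha)^{-1}$ and multiplying by the fixed vector ${\bf X}^{T}{\bf y}$ shows that $\alpha \mapsto M(\alpha)$ is continuous. There is no genuine obstacle here; the only points that need care are that $M$ is single-valued, which follows from the strict convexity of $g_{\eps}(\cdot\,|\,\alpha)$ established in Proposition \ref{prop:mm}(iii), and that the denominators stay uniformly bounded below, which holds because $\eps$ is a fixed positive constant.
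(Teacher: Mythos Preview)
Your proof is correct and takes a genuinely different route from the paper's. You exploit the closed-form expression $M(\alpha) = {\bf Q}(\alpha)^{-1}{\bf X}^{T}{\bf y}$ that is already available from \eqref{eqn:lin-sys}, and then observe that continuity follows immediately since each entry of ${\bf Q}(\alpha)$ is a continuous function of $\alpha$ (the perturbation $\eps>0$ keeps all denominators bounded away from zero) and matrix inversion is continuous on the set of invertible matrices. The paper instead gives an abstract sequential-compactness argument: for $\alpha_k \to \tilde{\alpha}$, it traps the sequence $\{M(\alpha_k)\}$ in a compact level set of $f_\eps$ via the descent property, extracts a convergent subsequence, and uses joint continuity of $g_\eps(\beta|\alpha)$ together with uniqueness of the minimizer (Proposition~\ref{prop:mm}(iii)) to identify the limit as $M(\tilde{\alpha})$. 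Your argument is shorter and more concrete for this particular problem; the paper's argument has the advantage of being a template that applies to any MM scheme where the majorizer is jointly continuous and has a unique minimizer, without requiring an explicit formula for $M$.
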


\begin{proof}
Consider a sequence $\{\alpha_k\}_{k\ge 0}$ converging to $\tilde{\alpha}$ with finite $f_\eps(\tilde{\alpha})$ and satisfying
$f_{\eps}(\alpha_k) \le f_{\eps}(\alpha_0) < \infty$ for $k\ge 1$.
We want to show that $M(\alpha_k)$ converges to $M(\tilde{\alpha})$. 

The descent property of the MM algorithm states that
\[
f_\eps\big( M(\alpha_k)\big) \le g_{\eps} \big( M(\alpha_k ) \big| \alpha_k \big) \le g_{\eps} \big( \alpha_k \big| \alpha_k \big)
= f_\eps\big(\alpha_k\big)
\]
for every $k$.
Thus the sequence $\big\{ M(\alpha_k ), k=0,1,2,\ldots \big\}$ is a subset of the level set $\Omega \big( \epsilon, f_{\eps} (\alpha_0) \big) = \{ \beta \in \mathbb{R}^p \big| f_{\eps} (\beta ) \le f_{\eps} (\alpha_0) \}$.

Since the level set  $\Omega \big( \epsilon,  f_{\eps} (\alpha_0) \big)$ is compact (see Section \ref{sec:proofs:lemma_perturbed}), 
the sequence $\{ M(\alpha_k)\}_{k=0}$  is bounded. Hence we can find 
a convergent subsequence $\{ M(\alpha_{k_n} )\}_{n\ge 1}$; denote its limit point by $\tilde{M}$. Then, by the continuity of $g_{\eps} (\beta \big| \alpha )$ in both $\beta$ and $\alpha$, for all $\beta$,
\[
g_{\eps}( \tilde{M} \big| \tilde{\alpha} ) = \overline{\lim}_{n \rightarrow \infty}
 g_{\eps}( M (\alpha_{k_n} ) | \alpha_{k_n} )
\le  \overline{\lim}_{n \rightarrow \infty} g_{\eps}( \beta | \alpha_{k_n} )
=g_{\eps}(\beta \big| \tilde{\alpha} ).
\]
Thus $\tilde{M}$ minimizes $g_{\eps} \big(\beta \big| \tilde{\alpha} \big)$. Since the minimizer of
$g_{\eps} \big(\beta \big| \alpha\big)$ is unique by Proposition \ref{prop:mm}, we have $\tilde{M} = M \big(\tilde{\alpha} \big)$, i.e.,
$M(\alpha_{k_n}\big)$ converges to $M(\tilde{\alpha})$.	
\end{proof}

We then state a result on the global convergence of the updating rule.
\begin{claim}\label{claim:global}
(Convergence Theorem A, \citet[p.91]{Zangwill1969})
Let the point-to-set map $M: X \rightarrow X$
determine an algorithm that given a point
$x_1 \in X$ generates the sequence $\{ x_k\}_{k=1}^\infty$.
Also let a solution set $\Gamma \subset X$ be given.
Suppose that: (i) all points $x_k$ are in a compact set $C \subset X$;
(ii) there is a continuous function $u : X \rightarrow \mathbb{R}$
such that (a) if $x \notin \Gamma$, $u(y) > u(x)$
for all $y \in M(x)$, and (b) if $x \in \Gamma$,
then either the algorithm terminates or 
$u(y) \ge u(x)$ for all $y \in M(x)$;
(iii) the map $M$ is closed at $x$ if $x \notin \Gamma$.
Then either the algorithm stops at a solution, or the limit of
any convergent subsequence is a solution.
\end{claim}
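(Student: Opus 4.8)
The plan is to prove Zangwill's theorem by a monotone-sequence argument combined with a compactness-and-closedness contradiction, following the structure in \citet{Zangwill1969}. First I would dispose of the terminating case: if the algorithm halts, then by hypothesis (ii)(b) it can only halt at a point of $\Gamma$, since at any $x \notin \Gamma$ the strict-ascent condition (ii)(a) forces a genuine successor $y \in M(x)$ with $u(y) > u(x)$ and the process would continue. Hence termination can only occur at a solution, and there is nothing left to prove. From here on I would therefore assume the algorithm generates an infinite sequence $\{x_k\}_{k=1}^\infty$.

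The second step is to establish convergence of the function values $\{u(x_k)\}$. Hypothesis (ii) guarantees $u(x_{k+1}) \ge u(x_k)$ for every $k$ (strictly when $x_k \notin \Gamma$, weakly when $x_k \in \Gamma$), so the real sequence $\{u(x_k)\}$ is monotonically non-decreasing. By hypothesis (i) all iterates lie in the compact set $C$, and $u$ is continuous, so $u$ is bounded on $C$; a bounded monotone sequence converges, say $u(x_k) \to u^{\ast}$.

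Next I would fix an arbitrary convergent subsequence $x_{k_j} \to \bar{x}$ and argue by contradiction that $\bar{x} \in \Gamma$. Continuity of $u$ gives $u(\bar{x}) = \lim_j u(x_{k_j}) = u^{\ast}$. Suppose instead $\bar{x} \notin \Gamma$. The successor iterates $\{x_{k_j+1}\}$ all lie in the compact set $C$, so I would pass to a further subsequence along which, simultaneously, $x_{k_{j_l}} \to \bar{x}$ and $x_{k_{j_l}+1} \to \bar{y}$ for some $\bar{y} \in C$. Because $x_{k_{j_l}+1} \in M(x_{k_{j_l}})$ and $\bar{x} \notin \Gamma$, hypothesis (iii), the closedness of $M$ at $\bar{x}$, yields $\bar{y} \in M(\bar{x})$. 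On one hand, the strict-ascent condition (ii)(a) applied at $\bar{x} \notin \Gamma$ gives $u(\bar{y}) > u(\bar{x}) = u^{\ast}$. On the other hand, $\{x_{k_{j_l}+1}\}$ is a subsequence of $\{x_k\}$, so $u(x_{k_{j_l}+1}) \to u^{\ast}$, while continuity gives $u(x_{k_{j_l}+1}) \to u(\bar{y})$; hence $u(\bar{y}) = u^{\ast}$, a contradiction. Therefore $\bar{x} \in \Gamma$, and since the subsequence was arbitrary, the limit of every convergent subsequence lies in $\Gamma$.

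I expect the main obstacle to be the coordinated double extraction of subsequences: one must select indices so that both $x_{k_{j_l}}$ and its immediate successor $x_{k_{j_l}+1}$ converge along the same index set, and then invoke closedness of $M$ precisely at a point assumed, for contradiction, to lie outside $\Gamma$. Closedness is only assumed there by hypothesis (iii), so the contradiction hypothesis $\bar{x} \notin \Gamma$ is exactly what licenses its use; keeping this dependence explicit is the delicate part. The remainder is a routine interplay of monotonicity, boundedness, and continuity.
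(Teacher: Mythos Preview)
Your proof is correct and follows the standard argument for Zangwill's global convergence theorem: monotone convergence of $\{u(x_k)\}$ from condition~(ii) and compactness, then a contradiction via the closedness of $M$ at the putative non-solution limit point, using a coordinated extraction of subsequences of $\{x_{k_j}\}$ and $\{x_{k_j+1}\}$.

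The paper, however, does not give a proof of this claim at all. It simply states the result as Convergence Theorem~A from \citet[p.~91]{Zangwill1969} and remarks that ``the proof of this claim is similar to \cite{Vaida2005}.'' In other words, the authors treat it as a cited black-box lemma rather than something to be established within the paper. Your write-up therefore supplies strictly more than the paper does; what you have is precisely the classical argument one finds in Zangwill's book, and there is nothing to compare against on the paper's side beyond the citation.
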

\noindent 
Closedness of point-to-set maps is an extension of continuity of point-to-point maps: a point-to-set map $A:V \rightarrow V$ on a set $V$ is said to be closed if (a) $z^k \rightarrow z^{\infty}$, (b) $y^k \in A(z^k)$, and (c) $y^k \rightarrow y^{\infty}$ imply $y^{\infty} \in A(z^{\infty})$. The map $A$ is closed on a set $X \subset V$ if it is closed at each $z \in X$ \citep[p.88]{Zangwill1969}.
The proof of this claim is similar to \cite{Vaida2005}.

\bigskip
Finally we are ready to show the main result of this section.
\begin{proof}[Proof of Lemma \ref{lemma:lyapunov}]
We apply Claim \ref{claim:global} with 
$M(\alpha) = \argmin_\beta g_{\eps}(\beta|\alpha)$,
$\Gamma = \{ \beta \,|\, 0 \in \partial_{\beta} f_\eps(\beta)  \}$,
and $u(\beta) = - f_\eps(\beta)$, where $\partial_{\beta} f_\eps(\beta)$ is the 
sub-differential of $f_{\eps}(\beta)$ at $\beta$.  
We check the three conditions for Claim \ref{claim:global} as follows.
Define the level set of $f_\eps(\beta)$ 
as $\Omega \big( \epsilon, f_{\eps} (\widehat{\beta}^{(0)}) \big) = \big\{ \beta \in \mathbb{R}^p \big| f_{\eps} (\beta ) \le f_{\eps} (\widehat{\beta}^{(0)})  \big\}$, which is compact (see Section \ref{sec:proofs:lemma_perturbed}).
Since $f_{\eps}(\widehat{\beta}^{(r+1)})  \le f_{\eps}(\widehat{\beta}^{(r)})$
for every $r = 0,1,2,\ldots$,
$\{ \widehat{\beta}^{(r)}\}_{r\ge 0}$ is a 
subset of $ \Omega \big( \epsilon, f_{\eps} (\widehat{\beta}^{(0)}) \big)$ (see Section \ref{sec:proofs:lemma_perturbed})
and condition (i) of Claim \ref{claim:global} is satisfied.
Lemma \ref{lemma:perturbed} states that $M(\alpha)$ is a continuous point-to-point map in $\alpha$. 
That is, $M(\alpha)$ is a closed map on any compact set of $\mathbb{R}^p$, and so on $\Gamma^c$.
Thus condition (iii) of Claim \ref{claim:global} is met.
Finally condition (ii) is guaranteed by the  strict convexity of $g_{\eps}(\beta|\alpha)$
(see Proposition \ref{prop:mm}) and the definition of $M(\alpha)$.
Therefore all the limit points of $\{ \widehat{\beta}^{(r)}\}_{r\ge 0}$ are stationary points of $f_{\eps}(\beta)$.
\end{proof}

\section{Choice of $\mu$ in the SB algorithm}\label{sec:app_SB}

In the SB algorithm, the current solution $\widehat{\beta}^{(r+1)}$
is obtained from the solution of the following linear system
\begin{equation} \nonumber
\big( {\bf X}^T {\bf X} + \mu {\bf I} + \mu {\bf D}^T {\bf D} \big) \beta
= {\bf c}^{(r)}(\mu).
\end{equation}
The additional parameter $\mu$ affects the convergence rate
but there is no optimal rule for its choice in this problem \citep{Ghadimi2012}.
Thus, \cite{Ye2011} suggest a pre-trial procedure to choose  $\mu$ as one of $\{0.2, 0.4, 0.6, 0.8, 1 \} \times \|{\bf y}\|_2 $ by testing
computation times for given $(\lambda_1, \lambda_2)$.
In addition to this recipe, we consider $\frac{1}{n}\|{\bf y}\|_2$ and  $\frac{1}{n^2}\|{\bf y}\|_2$ as candidate values of $\mu$.
In our design of examples, 
we observe that the values other than $\frac{1}{n}\|{\bf y}\|_2$ makes the SB algorithm prematurely
stop before reaching the optimal solution.
The value $\frac{1}{n}\|{\bf y}\|_2$ also leads the best convergence rate from our pretrials in Figure \ref{SB_mu}.
Thus, we set $\mu = \frac{1}{n}\|{\bf y}\|_2$ in the numerical studies.

\begin{figure}[htb!] 
 \setcounter{figure}{0}
\renewcommand\thefigure{A.\arabic{figure}}
\begin{center}
  \begin{minipage}[b]{.48\linewidth}
  \centering   \centerline{\epsfig{file=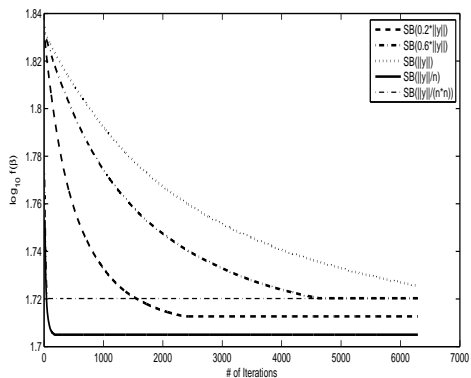,width=0.9\textwidth,height=0.25\textheight}}
  \centerline{(a) $n=200$, $p=1000$}
 \end{minipage}
\medskip
  \begin{minipage}[b]{.48\linewidth}
  \centering   \centerline{\epsfig{file=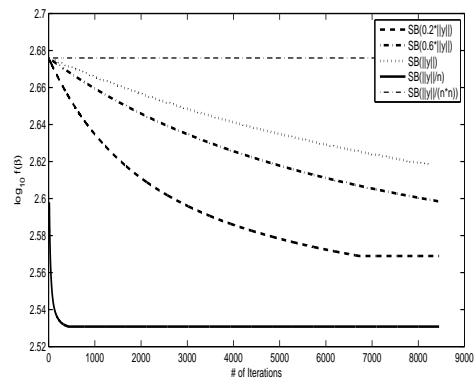,width=0.9\textwidth,height=0.25\textheight}}
  \centerline{(b) $n=1000$, $p=10000$}
\end{minipage}
\medskip

\caption{The number of iterations in SB algorithm as $\mu$ changes for $(\lambda_1,\lambda_2) = (0.1,0.1)$}
\label{SB_mu}
\end{center}
\end{figure}

\section{Tables of results for {\bf (C1)}--{\bf (C4)}}

\Blan
\newpage
\begin{table}[!htb]
\centering
\caption*{Table C.1: Summary of ${\bf (C1)}$ with
computation times and the numbers of iterations for $n=1000$.}
\label{tb_c1}
\medskip
{
\begin{tabular}{|c|c|c|c|c|c|c|c|c|c|c|c|c|c|} \hline
\multirow{2}{*}{$\lambda_1$} & \multirow{2}{*}{$\lambda_2$} & 
\multirow{2}{*}{$p$} & \multicolumn{5}{|c|}{Computation time (sec.)}
 & \multicolumn{5}{|c|}{Number of iterations}\\ \cline{4-13}
 & & & MMGPU & MM & EFLA & SPG & SB    &  MMGPU & MM & EFLA & SPG & SB\\ \hline
\multirow{4}{*}{0.1}  &  \multirow{4}{*}{0.1}  &  200  &  0.2858 &  0.0090 &  0.0093 &  0.0152 &  0.0089 &  3.0 &  3.0 &  27.8 &  42.5 &  3.0\\ 
  &   &  1000 &  2.6552&  1.5070&  0.3085&  0.5799&  2.1276&  22.6&  23.0&  258.9&  405.3&  15.1\\ 
  &   &  10000 &  7.2544&  66.9573&  32.8320&  102.5558&  99.1599&  84.1&  83.9&  1067.8&  1756.1&  158.2\\ 
  &   &  20000 &  13.3395&  133.2365&  77.4856&  281.4834&  209.6247&  90.3&  90.3&  1287.3&  2437.4&  206.6\\  \hline 
\multirow{4}{*}{0.1}  &  \multirow{4}{*}{1.0}  &  200  &  0.2207 &  0.0092 &  0.0067 &  0.0273 &  0.0656 &  4.0 &  4.0 &  24.3 &  73.8 &  28.7\\ 
  &   &  1000 &  2.2735&  1.1956&  0.1987&  0.4140&  10.0084&  26.3&  26.4&  166.9&  285.8&  75.3\\ 
  &   &  10000 &  5.3980&  51.1480&  19.3404&  58.0435&  117.2644&  86.3&  86.1&  618.7&  988.9&  186.9\\ 
  &   &  20000 &  9.8197&  101.1593&  49.4796&  163.0641&  238.4468&  94.2&  93.7&  808.7&  1411.3&  234.9\\  \hline 
\multirow{4}{*}{1.0}  &  \multirow{4}{*}{0.1}  &  200  &  0.1469 &  0.0074 &  0.0069 &  0.0168 &  0.0150 &  3.2 &  3.2 &  26.1 &  43.0 &  5.8\\ 
  &   &  1000 &  1.6121&  0.8684&  0.2128&  0.4359&  9.2411&  23.2&  23.0&  166.5&  266.7&  69.7\\ 
  &   &  10000 &  7.4584&  75.5762&  15.5698&  48.1091&  54.5518&  82.9&  82.9&  506.3&  822.1&  87.7\\ 
  &   &  20000 &  14.9428&  163.5872&  44.3698&  133.7532&  152.5818&  92.8&  92.8&  737.6&  1155.5&  152.4\\  \hline 
\multirow{4}{*}{1.0}  &  \multirow{4}{*}{1.0}  &  200  &  0.1477 &  0.0072 &  0.0068 &  0.0157 &  0.0873 &  4.0 &  4.0 &  25.0 &  42.2 &  37.8\\ 
  &   &  1000 &  1.2295&  0.6092&  0.1527&  0.2986&  10.1666&  24.4&  24.4&  128.1&  209.6&  79.2\\ 
  &   &  10000 &  4.5901&  44.8417&  14.6557&  32.6244&  67.7480&  82.5&  82.4&  465.8&  554.9&  110.9\\ 
  &   &  20000 &  8.4576&  90.2439&  35.3691&  96.1971&  199.6160&  88.6&  88.5&  575.4&  830.1&  201.5\\  \hline 

\end{tabular}
}
\end{table}

\newpage
\begin{table}[!htb]
\centering
\caption*{Table C.2: Summary of ${\bf (C2)}$ with
computation times and the numbers of iterations for $n=1000$.
The gray cells denote that
the EFLA fails to reach
the optimal solution for one or two samples.
We report the average computation times of the EFLA removed the failed cases. }
\label{tb_c2}
\medskip
{
\begin{tabular}{|c|c|c|c|c|c|c|c|c|c|c|c|c|c|} \hline
\multirow{2}{*}{$\lambda_1$} & \multirow{2}{*}{$\lambda_2$} & 
\multirow{2}{*}{$p$} & \multicolumn{5}{|c|}{Computation time (sec.)}
 & \multicolumn{5}{|c|}{Number of iterations}\\ \cline{4-13}
 & & & MMGPU & MM & EFLA & SPG & SB    &  MMGPU & MM & EFLA & SPG & SB\\ \hline
\multirow{4}{*}{0.1}  &  \multirow{4}{*}{0.1}  &  200  &  0.2247 &  0.0108 &  0.0077 &  0.0170 &  0.0088 &  3.0 &  3.0 &  25.0 &  46.8 &  3.0\\ 
  &   &  1000 &  3.0984&  1.4673&  0.3732&  0.7777&  1.6792&  22.2&  21.8&  311.8&  508.8&  19.9\\ 
  &   &  10000 &  4.1122&  45.8289&  \cellcolor{gray!20}91.7927&  263.7253&  266.5620&  36.3&  35.5&  \cellcolor{gray!20}2976.3&  4527.5&  466.7\\ 
  &   &  20000 &  4.4272&  52.7964&   \cellcolor{gray!20}242.0749&  821.5516&  704.5284&  27.2&  27.6&  \cellcolor{gray!20} 4043.3&  7108.9&  776.4\\  \hline 
\multirow{4}{*}{0.1}  &  \multirow{4}{*}{1.0}  &  200  &  0.2276 &  0.0115 &  0.0072 &  0.0168 &  0.0595 &  4.2 &  4.2 &  25.4 &  41.9 &  26.8\\ 
  &   &  1000 &  3.6282&  1.5755&  0.2211&  0.5595&  4.7977&  28.1&  28.2&  180.5&  359.1&  57.7\\ 
  &   &  10000 &  10.3537&  109.7741&  47.8311&  132.3191&  107.4803&  53.9&  53.9&  1526.8&  2272.4&  188.4\\ 
  &   &  20000 &  23.6943&  271.3948&  236.7397&  503.1258&  232.9924&  63.9&  63.7&  3915.2&  4347.2&  256.3\\  \hline 
\multirow{4}{*}{1.0}  &  \multirow{4}{*}{0.1}  &  200  &  0.1279 &  0.0086 &  0.0061 &  0.0171 &  0.0081 &  3.1 &  3.1 &  21.8 &  44.6 &  2.7\\ 
  &   &  1000 &  1.7817&  0.9148&  0.1901&  0.5022&  2.9606&  20.3&  20.3&  160.7&  325.7&  34.9\\ 
  &   &  10000 &  10.8871&  127.3424&  51.5586&  169.2509&  142.5682&  91.6&  91.5&  1665.9&  2910.1&  249.8\\ 
  &   &  20000 &  20.1332&  244.0547&  141.8118&  526.4848&  313.0687&  88.7&  89.7&  2383.4&  4556.9&  344.7\\  \hline 
\multirow{4}{*}{1.0}  &  \multirow{4}{*}{1.0}  &  200  &  0.1467 &  0.0084 &  0.0065 &  0.0176 &  0.0633 &  4.2 &  4.2 &  23.4 &  44.6 &  28.8\\ 
  &   &  1000 &  1.7161&  0.8342&  0.1476&  0.4186&  3.0164&  22.5&  22.5&  122.6&  269.4&  36.0\\ 
  &   &  10000 &  7.7781&  91.8040&  38.7895&  120.4112&  64.9480&  92.1&  92.2&  1226.0&  2077.8&  114.0\\ 
  &   &  20000 &  17.6899&  212.5753&  106.2446&  339.2643&  174.6900&  102.3&  102.3&  1756.6&  2929.0&  192.6\\  \hline 
\end{tabular}
}
\end{table}

\newpage
\begin{table}[!htb]
\centering
\caption*{Table C.3: Summary of ${\bf (C3)}$ with
computation times and the numbers of iterations for $n=1000$.
The gray cells denote that
the EFLA fails to reach
the optimal solution for one sample.
We report the average computation times of the EFLA removed failed case.}
\label{tb_c3}
\medskip
{
\begin{tabular}{|c|c|c|c|c|c|c|c|c|c|c|c|c|c|} \hline
\multirow{2}{*}{$\lambda_1$} & \multirow{2}{*}{$\lambda_2$} & 
\multirow{2}{*}{$p$} & \multicolumn{5}{|c|}{Computation time (sec.)}
 & \multicolumn{5}{|c|}{Number of iterations}\\ \cline{4-13}
 & & & MMGPU & MM & EFLA & SPG & SB    &  MMGPU & MM & EFLA & SPG & SB\\ \hline
\multirow{4}{*}{0.1}  &  \multirow{4}{*}{0.1}  &  200  &  0.2511 &  0.0171 &  0.0081 &  0.0170 &  0.0095 &  3.0 &  3.0 &  26.7 &  45.2 &  3.0\\ 
  &   &  1000 &  2.9360&  1.5933&  0.3793&  0.8082&  1.2007&  21.3&  21.9&  327.4&  511.3&  13.2\\ 
  &   &  10000 &  3.7909&  42.9165&  90.8098&  282.7641&  286.2912&  34.7&  34.0&  3025.2&  4789.1&  498.3\\ 
  &   &  20000 &  4.0806&  48.1163&  248.1919&  871.9119&  756.4445&  26.0&  25.8&  4108.3&  7553.7&  824.2\\  \hline 
\multirow{4}{*}{0.1}  &  \multirow{4}{*}{1.0}  &  200  &  0.3401 &  0.0133 &  0.0069 &  0.0174 &  0.0662 &  4.2 &  4.2 &  25.3 &  44.9 &  27.4\\ 
  &   &  1000 &  3.9890&  1.8233&  0.2031&  0.5462&  3.8157&  27.4&  27.3&  170.8&  352.3&  43.5\\ 
  &   &  10000 &  15.0492&  180.0133&  54.7479&  144.0007&  108.6189&  91.8&  91.9&  1795.4&  2446.6&  189.3\\ 
  &   &  20000 &  28.2846&  343.3138&  169.7143&  505.0450&  238.1494&  96.8&  97.0&  2762.1&  4376.6&  259.7\\  \hline 
\multirow{4}{*}{1.0}  &  \multirow{4}{*}{0.1}  &  200  &  0.0890 &  0.0058 &  0.0072 &  0.0178 &  0.0130 &  3.4 &  3.4 &  26.5 &  46.3 &  5.0\\ 
  &   &  1000 &  1.6862&  0.8142&  0.3392&  0.8342&  2.3927&  23.8&  23.8&  287.1&  555.2&  24.8\\ 
  &   &  10000 &  10.7328&  127.3884&  50.5020&  176.1017&  127.5324&  88.6&  88.6&  1681.8&  2985.3&  222.3\\ 
  &   &  20000 &  19.5732&  239.8993&  150.7861&  520.9941&  307.9786&  85.4&  85.8&  2497.4&  4511.6&  335.6\\  \hline 
\multirow{4}{*}{1.0}  &  \multirow{4}{*}{1.0}  &  200  &  0.1840 &  0.0085 &  0.0067 &  0.0169 &  0.0094 &  4.4 &  4.4 &  24.9 &  44.3 &  3.4\\ 
  &   &  1000 &  1.4819&  0.7155&  0.1548&  0.4869&  1.9491&  19.7&  19.6&  124.3&  308.7&  23.0\\ 
  &   &  10000 &  8.1713&  97.7493& \cellcolor{gray!20} 38.0236&  129.9449&  89.1512&  95.4&  95.5&   \cellcolor{gray!20} 1240.0&  2206.9&  155.5\\ 
  &   &  20000 &  17.7725&  219.5605&  110.7899&  365.3513&  173.9217&  101.9&  101.9&  1799.7&  3156.2&  189.4\\  \hline 

\end{tabular}
}
\end{table}

\newpage
\begin{table}[!htb]
\centering
\caption*{Table C.4: Summary of ${\bf (C4)}$ with
computation times and the numbers of iterations for $n=1000$.}
\label{tb_c4}
\medskip
{
\begin{tabular}{|c|c|c|c|c|c|c|c|c|c|} \hline
\multirow{2}{*}{$\lambda_1$} & \multirow{2}{*}{$\lambda_2$} & 
\multirow{2}{*}{$q \times q$} & \multicolumn{3}{|c|}{Computation time (sec.)}
 &\multicolumn{3}{|c|}{Number of iterations}\\ \cline{4-9}
 & & &  MM & SPG & SB  &  MM & SPG & SB  \\ \hline
\multirow{4}{*}{0.1}  &  \multirow{4}{*}{0.1}  &  $16 	\times 16$  &  0.0477 &  0.0252 &  0.0230 &  5.0 &  54.9 &  2.8\\ 
  &   &  $32 \times 32$ &  1.9948&  1.1403&  1.9887&  21.3&  603.6&  16.5\\ 
  &   &  $64 \times 64$ &  45.1773&  68.6246&  164.9881&  63.7&  2745.9&  456.0\\ 
  &   &  $128 \times 128$ &  50.2078&  615.4028&  979.8359&  28.8&  6414.6&  1066.6\\  \hline 
\multirow{4}{*}{0.1}  &  \multirow{4}{*}{1.0}  &  $16 	\times 16$  &  0.0549 &  0.0234 &  0.0971 &  8.3 &  53.3 &  26.7\\ 
  &   &  $32 \times 32$ &  1.1902&  0.5846&  4.1854&  23.0&  294.8&  37.9\\ 
  &   &  $64 \times 64$ &  15.1176&  23.0897&  33.2731&  39.6&  923.6&  90.8\\ 
  &   &  $128 \times 128$ &  72.8389&  208.2940&  119.7563&  49.8&  2169.3&  123.5\\  \hline 
\multirow{4}{*}{1.0}  &  \multirow{4}{*}{0.1}  &  $16 	\times 16$  &  0.0612 &  0.0232 &  0.0180 &  7.3 &  56.2 &  3.0\\ 
  &   &  $32 \times 32$ &  1.4630&  0.6096&  2.3580&  20.6&  326.3&  21.5\\ 
  &   &  $64 \times 64$ &  48.5678&  46.7797&  92.2762&  80.1&  1876.4&  254.2\\ 
  &   &  $128 \times 128$ &  241.6816&  343.2788&  657.3494&  85.3&  3579.7&  714.9\\  \hline 
\multirow{4}{*}{1.0}  &  \multirow{4}{*}{1.0}  &  $16 	\times 16$  &  0.0476 &  0.0221 &  0.1006 &  9.5 &  51.5 &  27.3\\ 
  &   &  $32 \times 32$ &  0.8718&  0.4960&  3.3169&  20.5&  273.1&  30.3\\ 
  &   &  $64 \times 64$ &  30.0718&  24.6371&  35.1351&  63.5&  978.6&  95.9\\ 
  &   &  $128 \times 128$ &  189.6904&  237.8033&  233.3495&  109.0&  2476.8&  248.1\\  \hline 

\end{tabular}
}
\end{table}

\Elan


\begin{thebibliography}{}


\bibitem[Anderson et al., 1995]{Anderson1995}
Anderson, E., Bai, Z.,  Bischof, C., Demmel, J., Dongarra, J., Croz, J.D., Greenbaum, A.,
Hammarling,  S.,  McKenney, A.,  Ostrouchov, S., and Sorensen, D. (1995),
\emph{LAPACK Users' Guide, Second Edition}. SIAM, Philadelphia, PA. 

\bibitem[Beck and Teboulle, 2009]{Beck2009}
Beck, A. and Teboulle, M. (2009),
A fast iterative shrinkage-thresholding algorithm for
linear inverse problems.
\emph{SIAM Journal on Imaging Sciences}, {\bf 2}, 183--202.



\bibitem[Chen et al., 2012]{Chen2012}
Chen, X., Lin, Q., Kim, S., Carbonell, J.G., and Xing, E.P. (2012),
Smoothing proximal gradient method for general structured sparse regression.
{\em Annals of Applied Statistics}, {\bf 6}(2), 719--752.

\bibitem[Demmel, 1997]{Demmel1997}
Demmel, J.W. (1997),
\emph{Applied Numerical Linear Algebra}.
SIAM, Philadelphia, PA.


\bibitem[Efron et al., 2004]{Efron2004}
Efron, B., Hastie, T., Johnstone, I., and Tibshirani, R. (2004),
\newblock Least angle regression.
\newblock {\em The Annals of Statistics}, {\bf 32}(2), 407--499.


\bibitem[Farber, 2011]{Farber2011}
Farber, R. (2011),
\emph{CUDA application design and development}.
Morgan Kaufmann Publishers, Inc., San Francisco, CA.

\bibitem[Ford and Fulkerson, 1956]{Ford1956}
Ford, L.R., Jr. and Fulkerson, D.R. (1956),
Maximal flow through a network.
\emph{Canadian Journal of Mathematics}, {\bf 8}, 399--404.



\bibitem[Friedman et al., 2007]{Friedman2007}
Friedman, J.,  Hastie, T., H\"{o}fling, H., and Tibshirani, R. (2007),
\newblock Pathwise coordinate optimization.
\newblock {\em The Annals of Applied Statistics}, {\bf 1}(2), 302--332.


\bibitem[Ghadimi et al., 2012]{Ghadimi2012}

Ghadimi, E.,  Teixeira, A., Shames, I., and Johansson, M. (2012),
 On the optimal step-size
selection for the alternating direction method of multipliers.
In \emph{Proceedings
IFAC Workshop on Distributed Estimation and Control in Networked Systems
(NecSys)}.




\bibitem[Gill et al., 1997]{Gill1997}
Gill, P.E., Murray, W., and Saunders, M.A. (1997),
\newblock User's guide for snopt 5.3: A fortran package for large-scale
  nonlinear programming.
\newblock \emph{Technical Report NA 97-4}. University of California, San Diego

\bibitem[Grant et al., 2008]{Grant2008}
Grant, M., Boyd, S., and Ye, Y. (2008),
\emph{CVX: Matlab software for disciplined convex programming}.
Web page and software, \url{http://stanford.edu/~boyd/cvx}.



\bibitem[Hestenes, 1969]{Hestenes1969}
Hestenes, M.R. (1969),
Multiplier and gradient methods.
\emph{Journal Optimization Theory \& Applications}, {\bf 4}, 303--320.


\bibitem[Hockney, 1965]{Hockney1965}
Hockney, R. (1965),
A fast direct solution of Poisson's equation using Fourier analysis.
\emph{Journal of the ACM}, {\bf 12}(1), 95--113.


\bibitem[Hockney and Jesshope, 1981]{Hockney1981}
Hockney, R. and Jesshope, C.R. (1981),
\emph{Parallel Computers}.
Adam Hilger, Bristol.


\bibitem[Hoefling, 2010]{Hoefling2010}
Hoefling, H. (2010),
\newblock A path algorithm for the fused lasso signal approximator.
\newblock {\em Journal of Computational and Graphical Statistics},
  {\bf 19}(4), 984--1006.

\bibitem[Hunter and Li, 2005]{HunterLi2005}
Hunter, D.R. and Li, R. (2005),
\newblock Variable selection using mm algorithm.
\newblock {\em The Annals of Statistics}, {\bf 33}(4), 1617--1642.

\bibitem[Kim et al., 2009]{Kim2009}
Kim, S.J., Koh, K., Boyd, S., and Gorinevsky, D. (2009),
\newblock $\ell_1$ trend filtering.
\newblock {\em SIAM Review}, {\bf 51}(2), 339--360.


\bibitem[Kirk and Hwu, 2010]{Kirk2010}
Kirk, D.B. and Hwu, W.W. (2010),
\emph{Programming massively parallel processors: a hands-on approach}.
Morgan Kaufmann Publishers, Inc., San Francisco, CA.



\bibitem[Lange et al., 2000]{Lange2000}
Lange, K., Hunter, D.R., and Yang, I. (2000),
Optimization transfer using surrogate objective functions.
\emph{Journal of Computational and Graphical Statistics}, {\bf 9}(1), 1--20.



\bibitem[Lin et al., 2011]{Lin2011}
Lin, X.,  Pham, M., and Ruszczynski, A. (2011),
\newblock {Alternating linearization for structured regularization problems}.
\newblock Preprint. Available at : \url{http://arxiv.org/abs/1201.0306}



\bibitem[Liu et al., 2010]{Liu2010}
Liu, J., Yuan, L., and Ye, J. (2010),
\newblock {An efficient algorithm for a class of fused lasso problems}.
\newblock In \emph{The ACM SIG Knowledge Discovery and Data Mining.}
ACM, Washington, DC.

\bibitem[Nesterov, 2003]{Nesterov2003}
Nesterov, Y. (2003),
\newblock \emph{Introductory Lectures on Convex Optimization:
 A Basic course}. Kluwer Academic Publishers.

 
\bibitem[Nesterov, 2005]{Nesterov2005}
Nesterov, Y. (2005),
\newblock Smooth minimization of non-smooth functions.
\emph{Mathematical Programming}, {\bf 103}, 127--152.

  
  
\bibitem[Nesterov, 2007]{Nesterov2007}
Nesterov, Y. (2007),
\newblock Gradient methods for minimizing composite 
objective function. 
\emph{CORE Discussion Paper}.
  
  
\bibitem[Owens et al., 2008]{Owens2008}
Owens, J.D., Houston, M., Luebke, D., Green, S., Stone, J.E.,
and Phillips, J.C. (2008),
GPU computing.
\emph{Proceedings of the IEEE}, {\bf 96}(5), 879--899.




\bibitem[Petry et al., 2011]{Petry2011}
 Petry, S., Flexeder, C., and Tutz, G. (2011),
\newblock Pairwise fused lasso.
\newblock \emph{Technical Reports, No.102}. University of Munich, 
Available at : \url{http://epub.ub.uni-muenchen.de/12164/1/petry_etal_TR102_2011.pdf}.


\bibitem[Patterson and Hennessey, 1998]{Patterson1998}
Patterson, D.A. and Hennessey, J.L. (1998),
\emph{Computer Organization and Design (2nd ed.): the hardware/software interface}.
Morgan Kaufmann Publishers, Inc., San Francisco, CA.



\bibitem[Rockafellar, 1970]{Rockafellar1970}
Rockafellar, R.T. (1970),
\emph{Convex Analysis},
Princeton University Press, Princeton, NJ.



\bibitem[Rockafellar, 1973]{Rockafellar1973}
Rockafellar, R.T. (1973),
A dual approach to solving nonlinear programming problems by unconstrained optimization.
\emph{Mathematical Programming}, {\bf 5}, 354--373.



\bibitem[She, 2010]{She2010}
She, Y. (2010),
\newblock Sparse regression with exact clustering.
\newblock {\em Electronic Journal of Statistics}, {\bf 4}, 1055--1096.


\bibitem[Stone, 1973]{Stone1973}
Stone, H.S. (1973),
An efficient parallel algorithm for the solution of
a tridiagonal linear system of equations.
\emph{Journal of the ACM}, {\bf 20}(1), 27--38.

\bibitem[Tibshirani, 1996]{Tibshirani1996}
Tibshirani, R. (1996),
\newblock Regression shrinkage and selection via the lasso.
\newblock {\em Journal of the Royal Statistical Society: Series B},
  {\bf 58}(1), 267--288.


\bibitem[Tibshirani and Taylor, 2011]{TibshiraniRyan2011}
 Tibshirani, R.J. and Taylor, J. (2011),
\newblock The solution path of the generalized lasso.
\newblock {\em The Annals of Statistics}, {\bf 39}(3), 1335--1371.

\bibitem[Tibshirani et al., 2005]{Tibshirani2005}
 Tibshirani, R., Saunders, M., Rosset, S., Zhu, J., and Knight, K. (2005),
\newblock Sparsity and smoothness via the fused lasso.
\newblock {\em Journal of the Royal Statistical Society: Series B},
  {\bf 67}(1), 91--108.

\bibitem[Tseng, 2001]{Tseng2001}
Tseng, P. (2001),
\newblock Convergence of a block coordinate descent method for
  nondifferentiable minimization.
\newblock {\em Journal of Optimization Theory and Applications},
  {\bf 109}(3), 475--494.

\bibitem[Vaida, 2005]{Vaida2005}
Vaida, F. (2005),
\newblock {Parameter convergence for EM and MM algorithms}.
\newblock {\em Statistica Sinica}, {\bf 15}(3), 831--840.



\bibitem[Ye and Xie, 2011]{Ye2011}
Ye, G.B. and Xie, X. (2011),
\newblock Split bregman method for large scale fused lasso.
\newblock {\em Computational Statistics \& Data Analysis}, {\bf 55}(4), 1552--1569.


\bibitem[Zangwill, 1969]{Zangwill1969}
Zangwill, W.I. (1969),
\emph{Nonlinear programming: a unified approach},
Prentice Hall, Englewood Cliffs, NJ.

\bibitem[Zhang et al., 2010]{Zhang2010}
Zhang, Y., Cohen, J., and Owens, J.D. (2010),
Fast tridiagonal solvers on the GPU.
\emph{SIGPLAN Not.}, {\bf 45}(5), 127--136.


\bibitem[Zhou et al., 2010]{Zhou2010}
Zhou, H., Lange, K., and Suchard, M.A. (2010). 
Graphics processing units and high-dimensional optimization.
\emph{Statistical Science},  {\bf 25}(3), 311--324.

\end{thebibliography}
\end{document}